\documentclass[12pt]{article}
\usepackage[margin=1.25in]{geometry}
\usepackage{amsmath,amsthm,amssymb}
\usepackage{times}
\usepackage{enumerate}
\usepackage{setspace}
\usepackage[colorlinks,linkcolor=blue,anchorcolor=blue,citecolor=blue]{hyperref}
\usepackage{comment}
\usepackage{dsfont}
\usepackage{float}
\usepackage[title]{appendix}
\usepackage{natbib}
\usepackage{graphicx}
 
\usepackage{tikz}
\usetikzlibrary{arrows.meta}
\usetikzlibrary{positioning, fit, calc, patterns}   
\tikzset{block/.style={draw, thick, minimum width=3cm ,minimum height=1.3cm, align=center},   
	line/.style={-latex}     
}  
 
\theoremstyle{plain}

\newtheorem{proposition}{Proposition}
\newtheorem{theorem}{Theorem}
\newtheorem{corollary}{Corollary}

\newtheorem{lemma}{Lemma}

\newtheorem{definition}{Definition}
\newtheorem{example}{Example}
\newtheorem{remark}{Remark}

\newcommand{\R}{\mathbb{R}}

\makeatletter

\newcommand{\Rmnum}[1]{\expandafter\@slowromancap\romannumeral #1@}

\title{Secret Communication with Plausible Deniability\footnote{We are grateful to Xiao Lin and Frank Yang for inspiring conversations at an early stage of this project. We thank Arjada Bardhi, Yi Chen, Jeff Ely, R.Vijay Krishna, Fei Li, Jo\~{a}o Thereze, Mark Whitmeyer, and participants at UNC theory workshop for insightful comments.}}
\author{Xiaoyu Cheng\footnote{Department of Economics, Florida State University, Tallahassee, FL, USA. E-mail: \href{mailto:xcheng@fsu.edu}{xcheng@fsu.edu}.}\and Yonggyun Kim\footnote{Department of Economics, Florida State University, Tallahassee, FL, USA. E-mail: \href{mailto:ykim22@fsu.edu}{ykim22@fsu.edu}.}
\and Michael P.H. Tam\footnote{Department of Economics, University of North Carolina, Chapel Hill, NC, USA. E-mail: \href{mailto:pakhang@unc.edu}{pakhang@unc.edu}}
}

\begin{document}
\maketitle

\begin{abstract}
   Communication is secret if a message is independent of the state; however, the receiver's subsequent action may still reveal that she has acted on hidden information. This paper studies when secret communication can also provide plausible deniability: under single-crossing preferences, every action induced by the sender's message must be rationalizable using the receiver's baseline information alone. We characterize joint information structures that satisfy both secrecy and plausible deniability. We show that plausible deniability restricts communication exactly when the baseline message is directional---meaning its likelihood is monotone in the state. Combining this restriction with secrecy, we show that, for directional messages, frontier communication reveals at most whether the state lies above or below a cutoff. Finally, we identify conditions under which a greatest feasible communication structure exists and can be constructed explicitly in a simple way.
\end{abstract}

\newpage 
\section{Introduction}

Communication often takes place under conditions of surveillance. A sender may wish to convey information to a receiver while ensuring that an outside observer learns nothing about the underlying state, either because the information is sensitive or because the communication must, for legal, institutional, or optimal reasons, be uninformative about the state. This concern is captured by the notion of \emph{(perfect) secrecy}, long studied in cryptography and information theory \citep{shannon1949}. In the language of information structures, a message is secret if its distribution is statistically independent of the state: an outside observer learns nothing from the message, yet it may still transmit useful information to the receiver through its correlation with the receiver's own information.

Secrecy, however, provides only partial protection. Even when the message is statistically uninformative to outsiders, the receiver's subsequent action may reveal that she has acted on information privately conveyed to her. Especially, information often matters precisely by changing behavior: an investor trades, a regulator investigates, a manager intervenes, or a protector takes precautionary action. Such responses may be informative even when the message that triggered them is not. The Martha Stewart/ImClone case illustrates this tension: Stewart's sale of all her ImClone shares shortly before adverse news about the company became public drew scrutiny because it made the possible receipt of a private tip salient to investigators \citep{sec2003stewart}. Therefore, the challenge is that while a message is secret, the action it induces can be revealing.

This paper studies how secret communication can be designed to address this challenge. We introduce the notion of \emph{plausible deniability}: whenever the receiver acts on the sender's message, her action must still be justifiable as a rational response to information she already possessed, without invoking that message.\footnote{The Stewart/ImClone case also illustrates this justification idea. According
to the SEC, Stewart claimed that she sold her ImClone shares because of a prior agreement to sell if the stock fell below \$60, an explanation takes exactly the form of plausible deniability considered here: it attempts to justify the observed action as a rational response to a publicly observable baseline message---the stock price reaching the \$60 threshold---without invoking the alleged hidden tip. According to the SEC, the problem was not that such a baseline-based justification would be irrelevant, but that the
asserted prior agreement was fabricated.} Plausible deniability therefore protects not the message itself, but the receiver's ability to act on useful information without exposing its source. This protective role is central in reporting and compliance settings, where an intervention may need to respond to sensitive information from an employee or whistleblower while shielding that person from retaliation or unwanted scrutiny. Together, secrecy and plausible deniability provide two complementary layers of protection: the message is statistically uninformative on its face, and the receiver's response remains explainable without it. We characterize the maximally informative communication structures that satisfy both requirements.

Formally, we study a communication problem with a sender, a receiver, and an outside observer whose presence constrains communication. The receiver has access to a baseline information structure, whose realized message is observable to all parties. The sender communicates by designing the joint distribution of his message and the receiver's baseline message, while preserving the publicly known marginal distribution of each. Only the sender and the receiver know the dependence between the two messages. Secrecy requires the sender's message to have a state-independent marginal distribution. Hence, any information conveyed to the receiver must come entirely from the sender's message through its correlation with the receiver's baseline information.

After observing both messages, the receiver chooses an optimal action given her posterior belief. Her action is publicly observable, but her preferences are private information. Nevertheless, it is commonly known that her utility function satisfies single crossing, so that she prefers higher actions in higher states, and that she takes a given default action under the prior belief. For example, an investor may be known to take more aggressive buy positions in higher states, while the precise beliefs at which she switches from one position to another depend on her privately known risk preferences. In formal terms, plausible deniability restricts how the sender's message can be correlated with the receiver's baseline message:  any action induced by observing both messages must also be rationalizable from the baseline message alone, for some utility function consistent with the publicly known restrictions.

The sender's goal is to choose, subject to secrecy and plausible deniability, a joint information structure that is maximally informative to the receiver, in the sense that it gives her a higher expected payoff across all decision problems and all preferences consistent with public knowledge. Our main results characterize the frontier of feasible joint information structures and provide conditions under which a greatest feasible structure exists and can be constructed explicitly in a simple way.

More specifically, we show that plausible deniability imposes a restriction on the joint information structure if and only if the baseline message is \emph{directional}, in the sense that its likelihood is monotone in the state.\footnote{When the baseline message is non-directional, that is, non-monotone in the state, plausible deniability imposes no restriction on the state-dependence of the joint message; see Section \ref{sec:PD}.} In that case, plausible deniability requires the likelihood of the joint message to be monotone in the same direction (Theorem \ref{thm:PD-characterization}). Thus, plausibly deniable communication can nudge the receiver only in the direction made defensible by her baseline information: it may strengthen or soften the response suggested by the baseline message, but it cannot justify an action on the opposite side of the default action.

Our analysis of secrecy uses \cite{green2022two}'s signal representation of the baseline information structure, which turns the secrecy constraint into a rearrangement problem over signal realizations. Building on this representation, we show that signal-based structures are without loss for the sender's frontier problem under secrecy and plausible deniability: every frontier point is payoff-equivalent to some signal-based structure (Theorem
\ref{thm:frontier_secrecy_PD} and Corollary \ref{cor:frontier_secrecy_PD}). Importantly, this reduction yields a simple posterior geometry: directional baseline messages induce posteriors that are lower- or upper-tail truncations of the prior, while non-directional messages induce posteriors given by the prior conditional on subsets of states. Thus, for a directional message, frontier communication takes an extreme form: it reveals at most whether the state lies above or below a cutoff.

This posterior geometry also highlights a dual role of plausible deniability. It restricts feasible communication, but it also makes the frontier tractable. Specifically, it simplifies an otherwise complex analysis by imposing monotonicity on the likelihoods induced by directional baseline messages (see Section \ref{sec:secrecy_signal}). This observation motivates our study of almost-directional baseline information structures, in which non-directionality is limited to at most one message. This class includes all information structures with at most three messages that satisfy the monotone likelihood ratio property.

For almost-directional baseline structures, we show that a greatest feasible communication exists and can be constructed explicitly from a \cite{green2022two} signal representation by ordering messages according to their direction: decreasing, non-monotone, and increasing (Theorem \ref{thm:directional_almost_directional}). 

The same direction-ordered construction is robust beyond the almost-directional benchmark: Theorem \ref{thm:PD_greatest_if} gives a sufficient condition under which the non-monotone messages are sparse enough for this construction to remain greatest. When a greatest feasible communication is obtained by these results, it has a sharp posterior geometry: every induced posterior is the prior conditional on an interval of states.

Beyond the sender's communication problem, the same logic speaks to a broader question of protective information design: how can useful information be elicited and acted upon without exposing its source? The model can be read from this perspective as describing receiver-designed reporting protocols. A compliance officer, for example, may have baseline information about workplace misconduct and may design a routine reporting channel for employees or whistleblowers. The report is uninformative on its own, but its dependence with the officer's baseline information can still guide the officer's subsequent intervention. Plausible deniability requires that this intervention remain justifiable from the officer's baseline information alone. Our results describe how informative such protective protocols can be: when the officer's baseline information is directional, the most informative feasible refinements take a cutoff form; under broader conditions, the direction-ordered construction yields a greatest feasible protocol explicitly.

\medskip 

The remainder of the paper is organized as follows. Section \ref{sec:related_literature} discusses related literature. Section \ref{sec:model} introduces the model and defines secrecy and plausible deniability. Section \ref{sec:PD-secrecy} characterizes the implications of these requirements on the joint information structure. Section \ref{sec:frontier_optimal} identifies the frontier of joint information structures satisfying both requirements, and provides conditions for the existence and construction of a greatest feasible communication. Section \ref{sec:discussions} provides further discussions, and Section \ref{sec:conclusion} concludes. All proofs are in the appendix. 

\subsection{Related Literature}\label{sec:related_literature}

Our paper contributes to a growing literature that characterizes maximally informative information structures under additional constraints. Closest to us, \cite{strack2024} and \cite{he2026private} characterize Blackwell or Blackwell-Pareto frontiers under privacy constraints, requiring signals to be statistically uninformative about specified variables.\footnote{See also \cite{xu2025aprivacy,xu2025bprivacy} for recent characterizations of Blackwell frontiers under privacy constraints, and \cite{wang2024inferentially} for Blackwell-optimal information release under inferential privacy and private-private information constraints.} Our secrecy constraint, following the notion of perfect secrecy in \cite{shannon1949}, has a similar independence form, but our informativeness criterion is fundamentally different; see Section \ref{sec:relation_to_privacy} for a more detailed comparison. 
Also closely related is \cite{cavounidis2025}, who study when perfect secrecy permits full revelation of the state; see Section \ref{sec:secrecy_signal} for a related discussion.\footnote{Applications of individually uninformative but jointly revealing information structures to mechanism design include \cite{zhu2023private} and \cite{larionov2025first}.} Our focus is different: plausible deniability may rule out full revelation even when secrecy alone permits it, and our frontier characterization shows how this additional requirement shapes the most informative feasible communications.

Our notion of plausible deniability is conceptually related to work in which communication is constrained by what an outside observer can infer. Most closely, \cite{antic2025a} study communication under scrutiny, where agents exchange information in the presence of an observer with partially opposed interests and must preserve plausible deniability. Relatedly, \cite{chassang2019crime} show how plausible deniability can facilitate information transmission when a monitor faces retaliation or reputational concerns. We study a different strategic environment and impose plausible deniability directly as an information-design constraint.

Finally, our analysis is related to work on how multiple information sources interact. \cite{borgers2013signals} define when information sources are complements or substitutes, emphasizing that the value of one source depends on how it interacts with another. \cite{cheng2024diversity} study joint information structures with fixed marginals and show how dependence affects informativeness. Relatedly, \cite{de2023robust} study robust decision making when marginal information structures are known but their correlation is not. In our setting, the receiver's baseline information structure is fixed, while secrecy requires the sender's marginal message to be uninformative about the state; the sender's design problem is therefore to choose the dependence between the two messages, subject to plausible deniability.

\section{The Model}\label{sec:model}

We study a communication problem involving an information sender (he), a privately informed receiver (she), and an outside observer (they). The sender aims to convey information to the receiver while maintaining \textit{secrecy} and \textit{plausible deniability} with respect to the outside observer, which are the main constraints of our model and will be formally defined in Section \ref{subsec:S-PD}. 

\subsection{Setup}

The state space is a finite set $\Omega = \{\omega_1, \omega_2, \ldots, \omega_n\}$. The receiver's action space is $A = \{a_{-l}, \ldots, a_{0}, \ldots, a_L\}$ with $l, L \ge 1$. Both states and actions are ordered by their indices. The common prior belief over the states is $\mu \in \Delta(\Omega)$, which has full support. Independently of the sender's communication, the receiver has access to a \emph{baseline} information structure $f: \Omega \to \Delta(X)$ that generates a baseline message $x \in X$. Both $f$ and its realization $x$ are publicly observed.

\paragraph{Sender's communication and covert channel}

The sender communicates by choosing an information structure $g: \Omega \rightarrow \Delta (Y)$ over an arbitrary finite message space $Y$. This communication channel is public in the sense that both the choice of $g$ and its realization $y$ are publicly observed. 

However, the sender can covertly transmit \emph{additional} information by choosing how $x$ and $y$ are \emph{jointly} generated, subject to their publicly observed marginals. Specifically, the sender may select any joint information structure $h: \Omega \rightarrow \Delta (X \times Y)$ that is consistent with $f$ and $g$, i.e., for every $\omega \in \Omega$, 
\begin{equation*}
    f(x|\omega) = \sum_{y \in Y} h(x,y|\omega), \quad g(y|\omega) = \sum_{x \in X} h(x,y|\omega), \quad \forall\,x\in X,\ \forall\,y\in Y.\footnote{Since $f$ is exogenous, an equivalent formulation is to have the sender choose a conditional distribution $g_{x}(\cdot|\omega): \Omega \times X \to \Delta(Y)$ such that the $Y$-marginal aggregates to $g$.}
\end{equation*} 
The choice of $h$ is shared between the sender and receiver but hidden from the outside observer, and the posterior belief under $h$ is denoted $\mu_{h}(\cdot|x,y)$.

\paragraph{Receiver's action} Given the joint structure $h$, after observing the realized messages $(x,y)$, the receiver takes an action. Let $u: A \times \Omega \rightarrow \mathbb{R}$ denote the receiver's utility function. The receiver's optimal action satisfies 
\begin{equation*}
    a_{x,y} (u) \in \arg\max_{a \in A} \mathbb{E}_{\mu_h(\cdot|x,y)} [u(a,\omega)]. 
\end{equation*}
While the receiver's action $a_{x,y}(u)$ is publicly observed, her utility function $u$ is her \emph{private} information. We impose two restrictions on $u$ that are \emph{publicly} known: 
\begin{enumerate}[(i)]
    \item $a_0$ is the default action---unique optimal under the prior $\mu$, and
    \item the receiver's optimal action is monotone in the state.
\end{enumerate}
We capture this monotonicity formally using the \emph{single-crossing property} (SCP). A utility function $u: A \times \Omega \rightarrow \mathbb{R}$ satisfies the SCP if for any $a' > a$ and any $\omega' > \omega$,
\begin{equation}\label{eq:sc}
    u(a', \omega) \geq (>)\, u(a, \omega) \quad \implies \quad  u(a', \omega') \geq (>)\, u(a, \omega').
\end{equation}
Under the SCP, once the higher action $a'$ is preferred to $a$ at some state, it remains preferred at all higher states \citep{milgrom1994monotone}. Accordingly, it is public knowledge that $u$ belongs to the following class: 
\begin{equation*}
    \mathcal{U}:= \left \{ u: A \times \Omega \rightarrow \mathbb{R} \ : \ u ~\text{satisfies the SCP and}~ \{a_0\} = \arg\max_{a \in A} \mathbb{E}_\mu [u(a,\omega)]  \right \}. 
\end{equation*}

\paragraph{Information hierarchy}
Our model has three tiers of information, summarized in Figure \ref{fig:info_hierarchy}: what is \textbf{private} to the receiver (utility $u \in \mathcal{U}$), what is \textbf{shared} between sender and receiver (joint information structure $h$), and what is \textbf{public} (all other information).

\begin{figure}[h]
\begin{center}
			\begin{tikzpicture}
				\node (receiver) [draw, minimum width=10cm, minimum height=6cm, label={[anchor=south west, xshift=1em, yshift=-1.5em]north west: Private }] {};
				
				\node (sender_receiver) [draw, below=1.2cm of receiver.north west, anchor=north west, minimum width=9.5cm, minimum height=4.8cm, label={[anchor=south west, xshift=1em, yshift=-1.5em]north west: Shared}] {};
				
				\node (common) [draw, below=1.2cm of sender_receiver.north west, anchor=north west, minimum width=9cm, minimum height=3.6cm, label={[anchor=south west, xshift=1em, yshift=-1.7em]north west:Public}] {};
				
				\node at (common.west) [xshift=2.5cm, yshift=0.4cm]  {$f: \Omega \rightarrow \Delta (X) $};
				\node at (common.west) [xshift=2.5cm, yshift=-0.4cm]    {$g: \Omega \rightarrow \Delta (Y) $};
				\node at (common.west) [xshift=2.5cm, yshift=-1.2cm] {$\mu \in \Delta (\Omega)$};
				\node at (common.west) [xshift=7cm, yshift=1.2cm]  {$\mathcal{U}$};
				\node at (common.west) [xshift=7cm, yshift=0.4cm]    {$\Omega, \ A$};
				\node at (common.west) [xshift=7cm, yshift=-0.4cm] {$ x, \ y  $};
				\node at (common.west) [xshift=7cm, yshift=-1.2cm] {$ a_{x, y}(u)$};
				\node at (sender_receiver.north) [xshift=2cm, yshift=-.6cm] {$h: \Omega \rightarrow \Delta (X \times Y)$};
				\node at (receiver.north) [xshift=2cm, yshift=-.6cm] {$u$};
				
			\end{tikzpicture}
		\end{center}
        \caption{Hierarchy of Information}
        \label{fig:info_hierarchy}
    \end{figure}

\bigskip 
We refer to Section \ref{sec:discussion_assumption} for further discussion of our modeling assumptions, including the possibility of repeated message observations and the single-crossing restriction on the receiver's utility function.

\subsection{Secrecy and Plausible Deniability \label{subsec:S-PD}}

The sender's communication strategy---his choice of the joint information structure $h$---is subject to two constraints: secrecy and plausible deniability. These constraints aim to protect the communication from the surveillance of the outside observer.

\paragraph{Secrecy} Secrecy protects the content of the communication---it prevents the outside observer from learning about the state from the sender's message. Formally, this amounts to requiring that the marginal distribution of $y$ be independent of the state, following precisely the notion of perfect secrecy in \citet{shannon1949}.

\begin{definition}\label{def:secrecy}
    A joint information structure $h: \Omega \to \Delta(X \times Y)$ satisfies \textbf{secrecy} if the marginal distribution of $y$ is independent of the state $\omega$; that is,
    \begin{equation*}
        \sum_{x \in X} h(x,y|\omega) = g(y|\omega) = g(y), \quad \forall\, \omega \in \Omega, \ \forall\, y \in Y.
    \end{equation*}
\end{definition}
In words, secrecy requires the marginal information structure $g$ to be uninformative about the state. Consequently, any information the sender conveys to the receiver must operate through the dependence between $x$ and $y$ induced by the joint structure $h$.

\paragraph{Plausible deniability} Plausible deniability protects the communication from being revealed by the receiver's behavior---it prevents the outside observer from inferring the existence of hidden information from the receiver's action. Formally, this requires that every action the receiver takes under $h$ could also be justified as optimal given her baseline message $x$ alone, for some utility consistent with the publicly known restrictions.

\begin{definition}\label{def:plausible_deniability}
A joint information structure $h: \Omega \to \Delta(X \times Y)$ satisfies \textbf{plausible deniability} if, for every $u \in \mathcal{U}$ and $(x,y)$ in the support of $h$, there exists $\tilde{u} \in \mathcal{U}$ such that
\begin{equation*}
    a_{x,y}(u) \in \mathop{\arg\max}_{a \in A} \mathbb{E}_{\mu_f(\cdot |x)}[\tilde{u}(a,\omega)],
\end{equation*}
where $\mu_f(\cdot |x)$ denotes the posterior induced by the baseline structure $f$ upon observing $x$.
\end{definition}

This ensures that the outside observer, upon seeing the receiver's action, cannot conclude that she acted on information beyond her baseline message $x$. The receiver can always safely attribute her action to some preference $\tilde{u} \in \mathcal{U}$ and the baseline posterior $\mu_f(\cdot|x)$, without invoking the sender's message $y$ or its dependence on $x$. 

Crucially, plausible deniability is required to hold for \emph{all} $u \in \mathcal{U}$, rather than for a specific utility. This is a \emph{robustness} requirement: the sender designs communication to ensure that any action the receiver might take after observing $(x,y)$ admits a plausible-deniability justification, regardless of her true utility function. Equivalently, when the sender communicates with a population of receivers, plausible deniability is guaranteed uniformly across all possible preferences in $\mathcal{U}$.

\subsection{Sender's Communication Problem}
The sender aims to maximize the receiver's communication value across all utility functions in $\mathcal{U}$, subject to secrecy and plausible deniability. 
We define dominance as follows: a joint information structure $h$ \emph{dominates} $h'$ over $\mathcal{U}$ if, for every utility function $u \in \mathcal{U}$, the receiver's expected utility is greater under $h$ than under $h'$, i.e., 
\begin{align*}
    \mathbb{E} \left[ \max_{a \in A} \mathbb{E}_{\mu_h(\cdot |x,y)} [u(a,\omega)] \right] \geq \mathbb{E} \left[ \max_{a \in A} \mathbb{E}_{\mu_{h'}(\cdot |x,y)} [u(a,\omega)] \right], \quad \forall\, u \in \mathcal{U}.
\end{align*}
If the inequality holds strictly for at least one $u \in \mathcal{U}$, $h$ \textit{strictly dominates} $h'$ over $\mathcal{U}$. 
If $h $ dominates $h'$ and $h'$ dominates $h$, we call $h$ and $h'$ are \textit{equivalent}. 
For brevity, we omit the qualifier ``over $\mathcal{U}$'' hereafter. 

This dominance order is weaker than the standard Blackwell order, which requires the inequality to hold for \textit{all} utility functions, rather than a restricted class $\mathcal{U}$.\footnote{The Lehmann order \citep{lehmann1988comparing} refines the Blackwell order for monotone decision problems by assuming a one-dimensional signal space with a natural order (via the monotone likelihood ratio property). Since our goal is to characterize multi-dimensional joint information structures ($X \times Y$) lacking a natural order, we do not utilize the Lehmann order here.} 
While we identify cases where $h$ dominates $h'$ without being Blackwell more informative, the Blackwell order remains a useful sufficient condition: any $h'$ obtained by garbling $h$ is necessarily dominated by $h$. 

For any set $\mathcal{H}$ of joint information structures, 
\begin{enumerate}[(i)]
    \item \textbf{Frontier}: a structure $h\in\mathcal H$ lies on the \emph{frontier} of $\mathcal{H}$ if it is undominated in $\mathcal{H}$ (i.e., no $h'\in\mathcal{H}$ strictly dominates $h$).
    \item \textbf{Greatest Element}: A structure $h^{*}\in\mathcal{H}$ is \emph{greatest} in $\mathcal{H}$ if it dominates every $h\in\mathcal H$. If there exists a greatest structure in $\mathcal{H}$, it is unique up to equivalence. 
\end{enumerate}

Let $\mathcal H^{SPD}$ denote the set of joint information structures that satisfy both secrecy and plausible deniability. It is non-empty as it contains the independent structure $h(x,y|\omega) = f(x|\omega)g(y)$, under which $y$ conveys no information beyond what $x$ already provides. Thus, the sender's goal is well-defined: to choose a joint information structure in $\mathcal H^{SPD}$ that lies on the frontier of $\mathcal H^{SPD}$. Our main results characterize this frontier and provide conditions under which a greatest element of $\mathcal H^{SPD}$ exists and can be constructed in a tractable manner.

\section{Plausible Deniability and Secrecy}\label{sec:PD-secrecy}

In this section, we analyze the structural implications of secrecy and plausible deniability on the joint information structure $h$. 

\subsection{Characterization of Plausible Deniability}\label{sec:PD}

To analyze plausible deniability, we first define `rationalizability' over arbitrary information structure. 

\begin{definition}
    Given an information structure $\phi: \Omega \to \Delta(Z)$ and a message realization $z \in \text{supp}(\phi)$, an action $a \in A$ is \textbf{rationalizable} at $z$ (under $\phi$) if there exists $u \in \mathcal{U}$ such that $a$ is optimal under the posterior induced by observing $z$. Let $\mathcal{R}_\phi(z)\subseteq A$ denote the set of rationalizable actions at $z$. 
\end{definition}

We omit ``under $\phi$'' whenver the underlying information structure is clear from the context. The following lemma establishes the connection between plausible deniability and rationalizability. 

\begin{lemma} \label{lem:PD-R}
    A joint information structure $h$ satisfies plausible deniability if and only if $ \mathcal{R}_h(x,y) \subseteq \mathcal{R}_f (x)$ for all $(x,y) \in \text{supp}(h)$. 
\end{lemma}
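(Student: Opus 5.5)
This lemma is essentially a restatement of Definition \ref{def:plausible_deniability} in the language of rationalizability, so the plan is to unpack both sides and match quantifiers. The key observation is that the set of actions that can arise as $a_{x,y}(u)$ for some $u \in \mathcal{U}$ is exactly $\mathcal{R}_h(x,y)$. Similarly, the set of actions $a$ for which some $\tilde u \in \mathcal{U}$ satisfies $a \in \arg\max_{a'} \mathbb{E}_{\mu_f(\cdot|x)}[\tilde u(a',\omega)]$ is exactly $\mathcal{R}_f(x)$. For the latter, note that $\mu_f(\cdot|x)$ is well defined for every $(x,y) \in \text{supp}(h)$: since $h$ is consistent with $f$, we have $f(x|\omega) \ge h(x,y|\omega)$, so $x \in \text{supp}(f)$. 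Here the posterior $\mu_h(\cdot|x,y)$ is the posterior induced by the message $z=(x,y)$ under $\phi = h$ with $Z = X\times Y$.

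For the direction ($\Leftarrow$), suppose $\mathcal{R}_h(x,y) \subseteq \mathcal{R}_f(x)$ for all $(x,y)$ in the support. Take any $u \in \mathcal{U}$ and any such $(x,y)$. Then $a_{x,y}(u)$ is optimal under $\mu_h(\cdot|x,y)$ for $u$, so $a_{x,y}(u) \in \mathcal{R}_h(x,y) \subseteq \mathcal{R}_f(x)$. By definition of $\mathcal{R}_f(x)$ there is a $\tilde u \in \mathcal{U}$ making $a_{x,y}(u)$ optimal under $\mu_f(\cdot|x)$, which is exactly the condition in Definition \ref{def:plausible_deniability}.

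For the direction ($\Rightarrow$), take $a \in \mathcal{R}_h(x,y)$, so some $u \in \mathcal{U}$ has $a$ optimal under $\mu_h(\cdot|x,y)$. The one subtlety, and the only real obstacle, is that $a_{x,y}(u)$ is a selection from the argmax. If the argmax is not a singleton, the receiver's chosen action need not be $a$. I would handle this by reading Definition \ref{def:plausible_deniability} as quantifying over every optimal selection $a_{x,y}(u)$, which is the natural reading since the sender must protect any action the receiver might take. Under that reading, choosing the selection equal to $a$ yields a $\tilde u \in \mathcal{U}$ rationalizing $a$ at $\mu_f(\cdot|x)$, hence $a \in \mathcal{R}_f(x)$.

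If instead one insists on a fixed tie-breaking rule, I would perturb $u$ so that $a$ becomes the unique optimum while staying in $\mathcal{U}$. The natural perturbation adds a small bonus $\varepsilon v(a',\omega)$ favoring $a$. This must preserve the SCP and keep $a_0$ as the unique optimum at the prior, which holds for small $\varepsilon$ because that uniqueness is a strict condition. Preserving the weak SCP under the perturbation is delicate, however, so I would adopt the first reading and mention this only as a remark.
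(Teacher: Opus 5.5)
Your proposal is correct and matches the paper's proof. Both directions are direct unpackings of Definition~\ref{def:plausible_deniability} and the definition of $\mathcal{R}$. The paper's ``only if'' step silently uses the reading you make explicit: plausible deniability applies to every optimal action $a_{x,y}(u)$, not to a fixed tie-breaking selection. Your discussion of ties, and your check that $x \in \text{supp}(f)$, therefore clarify the paper's argument rather than depart from it.
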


Intuitively, when $h$ satisfies plausible deniability, any action in $\mathcal{R}_h(x,y)$ should also be rationalizable at $x$ under $f$, implying that it is in $\mathcal{R}_f(x)$. 
This lemma shows that characterizing plausible deniability reduces to understanding the set of rationalizable actions $\mathcal{R}$. The following lemma characterizes the set of rationalizable actions at arbitrary message. 

\medskip 

\begin{lemma}\label{lem:rationalizability}
    Let $\phi: \Omega \to \Delta(Z)$ be an information structure and $z \in \text{supp}(\phi)$ its realization.
    \begin{enumerate}[(i)]
        \item If $\phi(z|\omega)$ is decreasing and non-constant in $\omega$, $\mathcal{R}_\phi(z) =\{a \ : \  a \le a_0\} $.

        \item If $\phi(z|\omega)$ is increasing and non-constant in $\omega$, $\mathcal{R}_\phi(z) =\{a \ : \  a \ge a_0\} $.
            
        \item If $\phi(z|\omega)$ is constant in $\omega$, $\mathcal{R}_\phi(z) = \{a_0\} $;
        
        \item If $\phi(z|\omega)$ is non-monotone in $\omega$, $\mathcal{R}_\phi(z) = A$.
    \end{enumerate}
\end{lemma}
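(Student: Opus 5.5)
The plan is to reduce each case to a statement about the posterior $\nu := \mu_\phi(\cdot|z)$, with $\nu(\omega) \propto \phi(z|\omega)\mu(\omega)$, relative to the prior $\mu$. If $\phi(z|\cdot)$ is increasing and non-constant, then $\nu/\mu$ is increasing and non-constant, so $\nu$ strictly dominates $\mu$ in the monotone likelihood ratio order. The decreasing case is symmetric, and the constant case gives $\nu=\mu$. If $\phi(z|\cdot)$ is non-monotone, then $\nu/\mu$ is non-monotone, so $\nu$ and $\mu$ are not MLR-ranked in either direction. Each case then has two halves: an inclusion ("no other action is rationalizable") proved by a comparative-statics argument valid for every $u\in\mathcal{U}$, and the reverse inclusion proved by constructing, for each claimed action $a$, some $u\in\mathcal{U}$ under which $a$ is optimal at $\nu$.

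For the inclusions I will use the standard Milgrom--Shannon fact that single crossing in $(a,\omega)$ is preserved under MLR shifts of beliefs. Take $\nu \ge_{MLR} \mu$ and $u\in\mathcal U$. For any $a<a_0$ we have $\mathbb{E}_\mu[u(a_0,\omega)-u(a,\omega)]>0$, because $a_0$ is the unique maximizer at $\mu$. The difference $u(a_0,\omega)-u(a,\omega)$ is single crossing from below in $\omega$, and expectations of such functions preserve strict positivity under MLR improvement. (The weak/strict version of the SCP in \eqref{eq:sc} gives exactly the needed strictness.) Hence $a<a_0$ is never optimal at $\nu$, which gives (ii)'s inclusion $\subseteq$; (i) follows by reversing the order. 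Case (iii) is immediate from the definition of $\mathcal U$, since $\nu=\mu$ and the maximizer is unique.

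For the constructions I will use step-like utilities. In case (ii), fix $a\ge a_0$. I will find a threshold utility in which actions $a_0$ and $a$ dominate all others and the comparison between them depends on a single-crossing difference $d(\omega)$ with $\mathbb{E}_\mu d<0<\mathbb{E}_\nu d$. Such a $d$ exists precisely because $\nu\neq\mu$ and $\nu/\mu$ is increasing. Take $d(\omega)=\mathbf 1\{\omega\ge\omega_k\}-c$: choose $k$ with $\nu(\omega\ge\omega_k)>\mu(\omega\ge\omega_k)$, which exists by strict first-order dominance, and pick $c$ strictly between these two tail probabilities. Then define $u(a_0,\omega)=0$, $u(a,\omega)=d(\omega)$, and set the other actions very negative, arranging the values so that the SCP holds across all action pairs. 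For example, actions below $a_0$ get $-M$ at all states, and actions strictly between $a_0$ and $a$, and above $a$, are also made very negative in a way monotone enough in $\omega$ not to violate SCP. Constant shifts work, since constant differences trivially satisfy \eqref{eq:sc}. Then $a_0$ is uniquely optimal at $\mu$ and $a$ is optimal at $\nu$, so $u\in\mathcal U$ and $a\in\mathcal R_\phi(z)$.

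The main obstacle is case (iv): for every $a$, on either side of $a_0$, I need a single $u\in\mathcal U$ rationalizing it. Non-monotonicity of $\nu/\mu$ means there exist $i<j$ with $\nu(\omega_i)/\mu(\omega_i)>\nu(\omega_j)/\mu(\omega_j)$, and also a pair with the reverse inequality. From the first pair I will build a single-crossing $d$ supported on $\{\omega_i,\omega_j\}$, namely $d(\omega_j)=\alpha>0$ and $d(\omega_i)=-\beta<0$, chosen so that $\mathbb{E}_\mu d>0>\mathbb{E}_\nu d$. Since $d$ is zero elsewhere, it is still single crossing. This reverses the MLR intuition and rationalizes actions below $a_0$ via $u(a,\cdot)=-d$ relative to $a_0$, with the same padding of the other actions. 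The reverse pair handles actions above $a_0$. The delicate point is that $d$ must be weakly single crossing including zeros at intermediate states, so I will check that the sign pattern $(0,\dots,-\beta,0,\dots,0,\alpha,0,\dots)$ satisfies \eqref{eq:sc} and adjust the weak/strict cases carefully.
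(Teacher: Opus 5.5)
Your overall route is the paper's. You reduce rationalizability of $a\neq a_0$ to a single-crossing incremental return whose expectation has one sign under $\mu$ and the opposite weak sign under $\nu$. You get the inclusions from the MLR bound, which is the paper's $r^+/r^-$ estimate $\sum_\omega\mu(\omega)q(\omega)r(\omega)\le q(\bar\omega)\sum_\omega\mu(\omega)r(\omega)$. You build the monotone cases from $\mathbf 1\{\omega\ge\omega_k\}-c$. Padding every other action with a constant $-M$, where $M>\max_\omega|d(\omega)|$, is valid and simpler than the paper's $r/2$, $\epsilon_1,\epsilon_2$ construction, because every difference involving a padded action then has constant sign.

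The gap is in case (iv). Your assertion ``since $d$ is zero elsewhere, it is still single crossing'' is false, and the check you defer would fail. Under \eqref{eq:sc}, $d$ must satisfy both $d(\omega)\ge 0\Rightarrow d(\omega')\ge 0$ and $d(\omega)>0\Rightarrow d(\omega')>0$ for $\omega'>\omega$.
\begin{itemize}
\item A zero at any state below $\omega_i$ violates the weak part, since it is followed by $d(\omega_i)=-\beta<0$.
\item A zero at any state above $\omega_j$ violates the strict part, since it follows $d(\omega_j)=\alpha>0$.
\end{itemize}
So the two-point $d$ works only when $i=1$ and $j=n$, and you cannot choose the witnessing pair that way in general. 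For example, with $q=(1,2,1)$ the only pair $i<j$ with $q(\omega_i)>q(\omega_j)$ is $(\omega_2,\omega_3)$.

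The missing step is the perturbation the paper uses. Set $d=-\epsilon$ on $\{\omega<\omega_i\}$ and $d=+\epsilon$ on $\{\omega>\omega_j\}$. Zeros strictly between $\omega_i$ and $\omega_j$ are harmless; the paper simply puts $+\epsilon$ on every state above $\omega_i$ other than $\omega_j$. The resulting $d$ is strictly negative through $\omega_i$, nonnegative afterwards, and strictly positive from $\omega_j$ on, so it satisfies both parts of \eqref{eq:sc}. You chose $\alpha,\beta$ with $\alpha\mu(\omega_j)>\beta\mu(\omega_i)$ and $\alpha\nu(\omega_j)<\beta\nu(\omega_i)$, so $\mathbb{E}_\mu d>0>\mathbb{E}_\nu d$ holds strictly and survives for small $\epsilon>0$. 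With this fix, plus the trivial observation that $a_0$ itself is rationalizable (for example, by a $u$ with constant action differences favoring $a_0$), your argument for (iv) goes through.
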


To provide intuition, first consider the case where $\phi(z|\omega)$ is decreasing in $\omega$. Then, observing $z$ shifts probability mass towards lower states relative to the prior. For any action $a > a_{0}$, the SCP implies that the payoff change from $a_0$ to $a$ is negative below some cutoff state and positive above it. Since $a_{0}$ is optimal under the prior, the prior already assigns sufficient weight to the lower (negative) region. Observing $z$ further shifts probability mass toward these lower states, reinforcing the dominance of $a_0$ and making $a > a_0$ impossible to rationalize. By contrast, the belief shift favors lower actions, and we show that every $a \le a_0$ can be rationalized at $z$ by some $u \in \mathcal{U}$, establishing $\mathcal{R}_\phi(z) = \{ a: a \le a_0 \} $. 
The case in which $\phi(z|\omega)$ is increasing is symmetric. 

When $\phi(z|\omega)$ is non-monotone, the likelihood is increasing over some states and decreasing over others, so observing $z$ shifts beliefs toward higher states in some regions and toward lower states in others. This creates enough flexibility to rationalize any action. For example, to rationalize $a > a_{0}$, one constructs a single-crossing utility function in which the payoff difference between $a$ and $a_0$ is concentrated on the states where the belief shift favors higher actions, while being negligible elsewhere. Under the prior, these states carry insufficient weight for $a$ to be more favorable than $a_0$, but after observing $z$, the upward belief shift at those states makes $a$ optimal. The symmetric argument rationalizes any $a < a_0$ by concentrating on states where the belief shift favors lower actions. Thus, all actions are rationalizable, establishing $\mathcal{R}_\phi(z) = A$. 

Lemma \ref{lem:rationalizability} applies to any information structure, and in particular to both baseline $f$ and joint structure $h$. Plausible deniability therefore requires that, for every $(x,y)$ in the support of $h$, the monotonicity of $h(x,y|\omega)$ in $\omega$ is compatible with that of $f(x|\omega)$. To state this precisely, we partition $X$ according to how $f(x|\omega)$ varies in $\omega$: 
\begin{enumerate}[(i)]
    \item let $D \subseteq X$ collect the messages $d$, for which $f(d|\omega)$ is decreasing in $\omega$, 
    \item let $I \subseteq X$ collect the messages $i$, for which $f(i|\omega)$ is increasing in $\omega$, and 
    \item let $S \subseteq X$ collect the remaining messages $s$, for which $f(s|\omega)$ is non-monotone in $\omega$.
\end{enumerate}
``Increasing'' and ``decreasing'' are understood in the weak sense, so a constant function qualifies as both. We break ties by assigning any $x$ with $f(x|\omega)$ constant in $\omega$ to $D$.

\begin{theorem}\label{thm:PD-characterization}
A joint information structure $h:\Omega\to\Delta(X\times Y)$ satisfies plausible deniability if and only if, for every $y\in Y$:
\begin{enumerate}[(i)]
    \item for each $d\in D$, $h(d,y|\omega)$ is decreasing in $\omega$, and is constant in $\omega$ whenever $f(d | \omega)$ is;
    \item for each $i\in I$, $h(i,y| \omega)$ is increasing in $\omega$; and
    \item for each $s\in S$, $h(s,y| \omega)$ is unrestricted.
\end{enumerate}
\end{theorem}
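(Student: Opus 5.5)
The plan is to combine Lemma \ref{lem:PD-R} with Lemma \ref{lem:rationalizability}, applied once to $f$ and once to $h$. By Lemma \ref{lem:PD-R}, plausible deniability holds iff $\mathcal{R}_h(x,y)\subseteq\mathcal{R}_f(x)$ for every $(x,y)\in\text{supp}(h)$. Lemma \ref{lem:rationalizability} assigns each message to one of four classes (strictly-nonconstant decreasing, strictly-nonconstant increasing, constant, non-monotone), with rationalizable sets $\{a\le a_0\}$, $\{a\ge a_0\}$, $\{a_0\}$ and $A$ respectively. Since $l,L\ge 1$, these sets are distinct, and the inclusions among them are exactly: $\{a_0\}$ lies in every set; $\{a\le a_0\}$ lies only in itself and in $A$; $\{a\ge a_0\}$ lies only in itself and in $A$; and $A$ lies only in $A$. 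So the proof reduces to translating these inclusions into conditions on the monotonicity of $h(x,y|\omega)$ in $\omega$, one $x$ at a time.

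\textbf{Case $s\in S$.} Here $\mathcal{R}_f(s)=A$, so the inclusion is automatic for every $y$, which gives (iii).

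\textbf{Case $d\in D$ with $f(d|\cdot)$ nonconstant.} Here $\mathcal{R}_f(d)=\{a\le a_0\}$, so $\mathcal{R}_h(d,y)$ must be $\{a\le a_0\}$ or $\{a_0\}$. By Lemma \ref{lem:rationalizability}, this holds iff $h(d,y|\cdot)$ is decreasing (weakly, with constant allowed).

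\textbf{Case $d\in D$ with $f(d|\cdot)$ constant.} Here $\mathcal{R}_f(d)=\{a_0\}$, which forces $\mathcal{R}_h(d,y)=\{a_0\}$, i.e.\ $h(d,y|\cdot)$ constant.

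\textbf{Case $i\in I$.} This is symmetric to the nonconstant decreasing case: $h(i,y|\cdot)$ must be increasing. Constant $f$-messages are already assigned to $D$ by the tie-breaking rule, so no separate constant case arises here. Together these cases give (i) and (ii).

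\textbf{Support issue.} Lemma \ref{lem:PD-R} only constrains pairs $(x,y)\in\text{supp}(h)$. If $h(x,y|\omega)=0$ for all $\omega$, then $h(x,y|\cdot)$ is identically zero, hence constant, and satisfies every monotonicity condition trivially. Thus restricting to the support costs nothing, and the conditions can be stated for all $y$.

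\textbf{Posterior vs.\ likelihood.} I must check that Lemma \ref{lem:rationalizability} applies to $h$ viewed as an information structure with message $(x,y)$. This is immediate because that lemma is stated for an arbitrary $\phi$, and the classification depends only on the likelihood $\omega\mapsto h(x,y|\omega)$.

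The main obstacle, if any, is bookkeeping rather than analysis: I need to verify that the four rationalizable sets are pairwise distinct and have exactly the inclusions listed above, which uses $l,L\ge1$. I also need to handle the weak-versus-strict monotonicity and the constant tie-break carefully, so that in the nonconstant decreasing case a constant $h(d,y|\cdot)$ is correctly permitted, while in the constant-$f$ case only constant $h$ is permitted.
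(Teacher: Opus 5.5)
Your proposal is correct and follows the paper's route. The paper states only that the theorem follows directly from Lemma~\ref{lem:PD-R} and Lemma~\ref{lem:rationalizability}. Your case analysis spells out that deduction in full: it uses $l,L\ge 1$ to separate the four rationalizable sets, applies the tie-break that puts constant messages in $D$, and notes that pairs outside the support have identically zero, hence constant, likelihood.
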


The proof of Theorem \ref{thm:PD-characterization} follows directly from Lemma \ref{lem:PD-R} and Lemma \ref{lem:rationalizability}. Let $\mathcal{H}^{PD}$ denote the set of joint information structures that satisfy plausible deniability. Theorem \ref{thm:PD-characterization} provides a complete characterization of $\mathcal{H}^{PD}$ in terms of the monotonicity of the likelihood $h(x,y|\omega)$ in $\omega$. 

A key feature of this characterization is that the restriction on $h(x,y|\omega)$ depends only on the monotonicity type of $f(x|\omega)$: plausible deniability constrains $h(x,y|\omega)$ independently across $x$. Therefore, the most informative joint information structure satisfying plausible deniability can be constructed message by message, choosing, for each $x$, the most informative decomposition $\{h(x,y|\omega)\}_{y\in Y}$ of $f(x|\omega)$ consistent with the monotonicity restriction.

For messages in $S$, there is no restriction, so the joint structure can fully reveal the state whenever such a message is observed---the most informative possible choice. For messages in $D$, viewing $h(d,y|\omega)$ as a vector in $\mathbb{R}^n$ indexed by $\omega$, the monotonicity restriction requires that each $h(d, y |\omega)$ must be a decreasing vector, and these vectors must sum to $f(d |\omega)$. The decreasing vectors form a convex cone whose extreme rays take a simple form: for each cutoff $k$, a step-down vector that is constant on $\{\omega_1, \ldots, \omega_k\}$ and zero on $\{\omega_{k+1}, \ldots, \omega_n\}$. Because every decreasing vector is a nonnegative combination of these extreme rays, any alternative decomposition of $f(d |\omega)$ can be recovered from the extreme-ray decomposition by garbling. The extreme-ray decomposition is therefore most informative. The case $i \in I$ is symmetric, with extreme rays that are zero on $\{\omega_1, \ldots, \omega_{k-1}\}$ and constant on $\{\omega_k, \ldots, \omega_n\}$.

We illustrate the construction with the following running example.
\begin{example}\label{ex:running}
    Let $\Omega = \{\omega_1, \omega_2, \omega_3\}$ and $X = \{d, s, i\}$, with baseline structure
    \begin{table}[H]
    \centering\footnotesize $f(x|\omega) =$ 
    \begin{tabular}{c|ccc}
         &  $d$ &$s$ &$i$\\
         \hline
         $\omega_1$ & $0.6$& $0.2$ &$0.2$\\ 
         $\omega_2$ & $0.3$& $0.3$ &$0.4$\\ 
         $\omega_3$ & $0.1$& $0.2$ &$0.7$\\ 
    \end{tabular}
    \end{table}
    \noindent 
    so that $f(d|\omega)$ is decreasing, $f(s|\omega)$ is non-monotone, and $f(i|\omega)$ is increasing. 
\end{example}

Given this baseline structure, a greatest element $\overline{h} \in \mathcal{H}^{PD}$ can be constructed message by message. Since $f(s|\omega)$ is non-monotone, we decompose the $s$-column to concentrate on single states, yielding full revelation whenever $s$ is observed. For $d$, we decompose $f(d|\omega) = (0.6, 0.3, 0.1)^{\intercal}$ into decreasing extreme rays, one for each cutoff $k \in \{1,2,3\}$; symmetrically, we decompose $f(i|\omega)^{\intercal} = (0.2, 0.4, 0.7)^{\intercal}$ into increasing extreme rays. Combining all three gives 
\begin{table}[H]
    \centering\footnotesize $\overline{h}(x, y|\omega) =$
    \begin{tabular}{c|ccc|ccc|ccc}
        & $(d, y_1)$ & $(d, y_2)$ & $(d, y_3)$
        & $(s, y_1)$ & $(s, y_2)$ & $(s, y_3)$
        & $(i, y_1)$ & $(i, y_2)$ & $(i, y_3)$ \\
        \hline
        $\omega_1$ & $0.3$ & $0.2$ & $0.1$
            & $0.2$ & $0$ & $0$
            & $0.2$ & $0$ & $0$ \\
        $\omega_2$ & $0$ & $0.2$ & $0.1$
            & $0$ & $0.3$ & $0$
            & $0.2$ & $0.2$ & $0$ \\
        $\omega_3$ & $0$ & $0$ & $0.1$
            & $0$ & $0$ & $0.2$
            & $0.2$ & $0.2$ & $0.3$ \\
    \end{tabular}
\end{table}
\noindent
The structure $\overline{h}$ satisfies plausible deniability, and any $h \in \mathcal{H}^{PD}$ can be obtained as a garbling of $\overline{h}$. To see why, note that each column of $\overline{h}$ is an extreme ray, so the only way to satisfy the monotonicity restrictions while preserving the column sums $f(x|\omega)$ is to produce a coarser decomposition---which is precisely a garbling. Hence $\overline{h}$ is a greatest element in $\mathcal{H}^{PD}$, and the construction extends to any baseline structure $f$. 

\begin{proposition}\label{prop:PD-greatest}
    A joint information structure $\overline{h} \in \mathcal{H}^{PD}$ is a greatest element of $\mathcal{H}^{PD}$ if and only if, for every $y \in Y$, each vector $\overline{h}(x, y|\omega)$ lies on an extreme ray of the relevant cone:
    \begin{enumerate}[(i)]
        \item for each $d \in D$: $\overline{h}(d, y|\omega)$ is constant on $\{\omega_1, \ldots, \omega_k\}$ and zero on $\{\omega_{k+1}, \ldots, \omega_n\}$ for some $k \in \{1, \ldots, n\}$;
        \item for each $i \in I$: $\overline{h}(i, y|\omega)$ is zero on $\{\omega_1, \ldots, \omega_{k-1}\}$ and constant on $\{\omega_k, \ldots, \omega_n\}$ for some $k \in \{1, \ldots, n\}$;
        \item for each $s \in S$: $\overline{h}(s, y|\omega)$ is concentrated on a single state, i.e., positive at some $\omega_k$ and zero elsewhere.
    \end{enumerate}
\end{proposition}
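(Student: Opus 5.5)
The plan is to prove sufficiency by an explicit garbling and necessity by a strict-improvement argument. Throughout, write $e^{\downarrow}_k$ for the step-down vector equal to $1$ on $\{\omega_1,\dots,\omega_k\}$ and $0$ elsewhere, and $e^{\uparrow}_k$ for the step-up vector equal to $1$ on $\{\omega_k,\dots,\omega_n\}$ and $0$ elsewhere. The key algebraic fact is that $\{e^{\downarrow}_k\}_{k=1}^n$ is a basis of $\mathbb{R}^n$. Hence every decreasing vector $v$ has a \emph{unique} representation $v=\sum_k c_k e^{\downarrow}_k$, with $c_k = v(\omega_k)-v(\omega_{k+1})\ge 0$ and $v(\omega_{n+1}):=0$. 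The same holds for increasing vectors and $e^{\uparrow}_k$, and for arbitrary nonnegative vectors and the unit vectors $\delta_{\omega_k}$.

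\textbf{Sufficiency.} Fix $\overline h$ with all columns on extreme rays and any $h\in\mathcal H^{PD}$. Consider some $d\in D$.
\begin{itemize}
\item Write $f(d|\cdot)=\sum_k \alpha_k e^{\downarrow}_k$.
\item By Theorem \ref{thm:PD-characterization}, each column $h(d,y|\cdot)$ is decreasing, so $h(d,y|\cdot)=\sum_k c^y_k e^{\downarrow}_k$.
\item Summing over $y$ and using uniqueness of the representation gives $\sum_y c^y_k=\alpha_k$.
\item Likewise, the columns of $\overline h$ at $d$ are $\beta_{y'}e^{\downarrow}_{k(y')}$, and $\sum_{y':k(y')=k}\beta_{y'}=\alpha_k$.
\end{itemize}
Define the state-independent kernel that sends $(d,y')$ to $(d,y)$ with probability $c^y_{k(y')}/\alpha_{k(y')}$. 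A direct computation shows that this kernel reproduces $h(d,y|\cdot)$. If $f(d|\cdot)$ is constant, only $k=n$ appears for both $h$ and $\overline h$, so the same kernel works.

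The case $i\in I$ is symmetric with $e^{\uparrow}_k$. For $s\in S$, each column of $\overline h$ is concentrated on a single state $\omega_k$. The kernel sends $(s,y')$ to $(s,y)$ with probability $h(s,y|\omega_k)/f(s|\omega_k)$.

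Since the kernel never changes $x$, $h$ is a garbling of $\overline h$. So $\overline h$ Blackwell-dominates $h$, and in particular dominates it over $\mathcal U$.

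\textbf{Necessity.} Suppose $\overline h$ is greatest but some column $\overline h(x,y|\cdot)$ is not on an extreme ray. Let $\hat h$ be the structure obtained by replacing that column with its unique basis decomposition, with each term becoming a separate message. Then $\hat h\in\mathcal H^{PD}$, because each term is monotone in the required direction. Since $\overline h$ is greatest, it must dominate $\hat h$. Conversely, $\overline h$ is a garbling of $\hat h$, so the two are equivalent.

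To derive a contradiction, I show that $\hat h$ is strictly better for some $u\in\mathcal U$. Because the column is not a ray, its decomposition has at least two terms with distinct cutoffs. For $d$, this yields two posteriors $\mu(\cdot\mid\omega\le\omega_k)$ and $\mu(\cdot\mid\omega\le\omega_{k'})$ with $k<k'$. Construct $u$ with $u(a_{-1},\omega)-u(a_0,\omega)=b>0$ for $\omega\le\omega_k$ and $=-c<0$ for $\omega>\omega_k$, and with all other actions heavily penalized. This difference crosses from positive to negative as $\omega$ rises, so $u$ satisfies the SCP. Choose $c$ large enough that $a_0$ is uniquely optimal both under the prior and under the truncation to $\omega\le\omega_{k'}$. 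This is possible because the latter puts positive mass on $(\omega_k,\omega_{k'}]$. Then $a_{-1}$ is strictly optimal under the truncation to $\omega\le\omega_k$, while $a_0$ is strictly optimal under the truncation to $\omega\le\omega_{k'}$.

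No single action is optimal at both posteriors, so strict convexity of $\max_a$ over the pooled posterior gives a strictly higher payoff under $\hat h$. The case $i\in I$ is mirrored using $a_1$. For $s\in S$, a non-singleton support contains states $\omega_j<\omega_{j'}$, and the same trick separates the degenerate posteriors $\delta_{\omega_j}$ and $\delta_{\omega_{j'}}$.

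\textbf{Main obstacle.} The step requiring the most care is this separating-utility construction. It must satisfy the SCP, keep $a_0$ uniquely optimal at the prior, and still produce different optimal actions at the two components. I will verify it explicitly with the two-action payoff difference above.
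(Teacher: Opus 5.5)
Your proof is correct. The sufficiency half is the paper's argument: the telescoping decomposition into step vectors (your $e^{\downarrow}_k$, the paper's $\mathbf{e}_{\le k}$), the identity $\sum_y c^y_k=\alpha_k$ from uniqueness, and the kernel $c^y_{k(y')}/\alpha_{k(y')}$ all appear there.

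The necessity half takes a different and more careful route. The paper splits a non-extreme column into two non-aligned pieces and asserts that the refinement is \emph{strictly Blackwell} more informative. But greatestness is defined through dominance over $\mathcal U$, which is weaker than the Blackwell order. A strict Blackwell improvement therefore does not by itself contradict $\overline h$ being greatest. One needs some $u\in\mathcal U$ that strictly gains from the split.

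You supply exactly this missing step. You split the column fully into extreme rays, so that the components induce truncated priors. You then build a two-action single-crossing utility with cutoff at $\omega_k$ that picks $a_{-1}$ at one truncation and $a_0$ at the other, while keeping $a_0$ uniquely optimal at the prior. The full extreme-ray decomposition also guarantees automatically that every piece stays in the relevant cone, so $\hat h\in\mathcal H^{PD}$. In the paper's two-piece split this is left implicit.

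Two small points for the write-up. First, the constant penalty $P$ on the remaining actions must exceed $c$. Otherwise, for a penalized action $a_{-2}<a_{-1}$, the difference $u(a_{-1},\cdot)-u(a_{-2},\cdot)$ equals $b+P>0$ at low states and $P-c<0$ at high states, which violates SCP. Second, the strict gain does not come from strict convexity of $\max_a$, which is only convex. It comes from the fact that the Jensen gap is strict whenever no single action is optimal at two positively weighted components.
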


Notice that a PD-greatest structure $\overline{h}$ induces a simple posterior geometry: conditional on $(s, y)$ with $s \in S$, the state is fully revealed; conditional on $(d, y)$ with $d \in D$, the posterior is the prior conditioned on a lower tail $\{\omega_1, \ldots, \omega_k\}$; and conditional on $(i, y)$ with $i \in I$, the posterior is the prior conditioned on an upper tail $\{\omega_k, \ldots, \omega_n\}$.  However, PD-greatest structures need not satisfy secrecy: in Example~\ref{ex:running}, the $y$-marginal of $\overline{h}$ depends on $\omega$, so $\overline{h} \notin \mathcal{H}^{\text{SPD}}$. We therefore turn next to analyzing the implications of secrecy alone, to understand how it constrains the joint structure and how much of the PD-greatest posterior geometry can be preserved once secrecy is imposed.

\subsection{Secrecy and Signal Representation}\label{sec:secrecy_signal}

We start by introducing a family of joint information structures that satisfy secrecy, and then show that any secrecy-preserving structure is Blackwell dominated by a member of this family. This construction relies on the \textit{signal representation} of information structures introduced by \cite{green2022two}. Let $T=[0,1]$ be endowed with the Lebesgue measure $\lambda$. 

\begin{definition}[\cite{green2022two}]
    Given an information structure $f : \Omega \rightarrow \Delta (X)$ with a finite set $X$, a mapping $\psi: \Omega \times T \to  X$ is a \textbf{signal representation} of $f$ if $\psi(\omega, \cdot)$ is measurable for all $\omega\in\Omega$ and
    \begin{equation}
        \int_0^1 \mathbf{1}\{\psi(\omega, t) = x\} \, dt = f(x|\omega), \quad \forall\, x \in X, \ \omega \in \Omega.
        \label{eq:signal-marginal}
    \end{equation} 
\end{definition}

A signal representation naturally uses the continuum $T=[0,1]$, while the joint information structures in $\mathcal H$ have finite message spaces. We therefore first associate each representation $\psi$ with an intermediate joint object over $X\times T$. For each state $\omega$, define
\begin{equation*}
    \xi_\psi (x, t|\omega) := \mathbf{1} \{ \psi(\omega, t ) =x  \}.
\end{equation*}
Formally, $\xi_\psi$ is the density of a probability measure on $X\times T$ given by
\begin{equation*}
    \mathbb{P}_\omega(\{x\}\times T') := \lambda\big(\{t\in T':\psi(\omega,t)=x\}\big),
\end{equation*}
for every $x\in X$ and every measurable set $T'\subseteq T$. By \eqref{eq:signal-marginal}, the $X$-marginal of $\xi_\psi$ is exactly $f$. Furthermore, because $\sum_{x\in X} \xi_\psi(x,t|\omega) = \sum_{x\in X}\mathbf{1}\{\psi(\omega,t)=x\} = 1 $ for every $t\in T$, the $T$-marginal is uniform on $[0,1]$ and hence independent of the state. Thus, $\xi_\psi$ is a convenient intermediate object satisfying secrecy.

To further construct a joint information structure with a finite message space, we discretize the $T$-coordinate in the coarsest way that preserves the information contained in the representation. Since $\Omega$ and $X$ are finite, there are at most $|X|^{|\Omega|}$ distinct mappings from $\Omega$ to $X$. Partition $T$ into equivalence classes by grouping together $t$ and $t'$ whenever $\psi(\cdot,t)=\psi(\cdot,t')$. Associate each equivalence class with a distinct realization $y\in Y$, and denote the resulting finite partition of $T$ by $\{T_y\}_{y\in Y}$. Define
\begin{equation*}
    h_\psi(x,y|\omega) := \int_{T_y} \xi_\psi(x,t|\omega)\,dt = \int_{T_y}\mathbf{1}\{\psi(\omega,t)=x\}\,dt .
\end{equation*}
Then $h_\psi:\Omega\to\Delta(X\times Y)$ belongs to $\mathcal H$, and is Blackwell-equivalent to $\xi_\psi$. It inherits the baseline marginal condition from $\xi_\psi$, and it is secrecy-preserving because the marginal probability of each $y$ is
\begin{equation*}
    \sum_{x\in X}h_\psi(x,y|\omega)  = \int_{T_y}\sum_{x\in X}\xi_\psi(x,t|\omega)\,dt = \lambda(T_y),
\end{equation*}
which is independent of $\omega$.

\begin{definition}
A joint information structure $h:\Omega\to\Delta(X\times Y)$ is \textbf{signal-based} if there exists a signal representation $\psi:\Omega\times T\to X$ of $f$ such that $h = h_\psi$.
\end{definition}

\paragraph{A Graphical Illustration} Next, we provide a graphical illustration of the signal representation and the construction of signal-based joint information structures.

Intuitively, viewing each realization in $X$ as a distinct color, a signal $\psi$ ``paints'' the unit interval, creating a colored line for each state $\omega$. The condition in \eqref{eq:signal-marginal} simply requires that the total length painted with color $x$ in state $\omega$ equals $f(x|\omega)$. 

Consider the baseline structure $f$ in Example \ref{ex:running}. Figure \ref{fig:signal_representation} illustrates one possible signal $\psi$ representing this $f$. Each row corresponds to a state $\omega_i$, where the length of the segment assigned to $x$ exactly matches $f(x|\omega_i)$. As established above, this continuum object naturally reduces to a joint information structure. The interval $[0,1]$ is partitioned into five distinct equivalence classes, $y_1$ through $y_5$, determined by the vertical alignments of the segments. Within any region $y_k$, the mapping $\psi(\cdot,t)$ is identical for all $t \in y_k$. For instance, on $y_1 = [0,0.1)$, the signal assigns $\psi(\omega_i,t) = d$ for all three states. Thus, the signal-based joint information structure is implemented with a finite message space $Y = \{y_1, \cdots, y_5\}$. 

\begin{figure}[ht]
\centering
\begin{tikzpicture}[x=1cm,y=1cm]
  \def\hh{0.123} 

  \node[anchor=east] at (-0.4, 2.0) {$\omega_1$};
  \node[anchor=east] at (-0.4, 1.0) {$\omega_2$};
  \node[anchor=east] at (-0.4, 0.0) {$\omega_3$};

  \draw[line width=0.8pt] (0,2.5) -- (7.0,2.5);

  \draw[dashed, gray!70] (0.7, 2.5) -- (0.7, -0.5);
  \draw[dashed, gray!70] (2.1, 2.5) -- (2.1, -0.5);
  \draw[dashed, gray!70] (4.2, 2.5) -- (4.2, -0.5);
  \draw[dashed, gray!70] (5.6, 2.5) -- (5.6, -0.5);

  \node[anchor=south] at (0.35, 2.5) {\small $y_1$};
  \node[anchor=south] at (1.4, 2.5) {\small $y_2$};
  \node[anchor=south] at (3.15, 2.5) {\small $y_3$};
  \node[anchor=south] at (4.9, 2.5) {\small $y_4$};
  \node[anchor=south] at (6.3, 2.5) {\small $y_5$};

  \fill[orange!40!yellow] (0,2.0-\hh) rectangle (4.2,2.0+\hh);
  \fill[blue!30!white]    (4.2,2.0-\hh) rectangle (5.6,2.0+\hh);
  \fill[red!35!white]     (5.6,2.0-\hh) rectangle (7.0,2.0+\hh);
  \draw[line width=2pt, black] (4.2,2.0-\hh) -- (4.2,2.0+\hh);
  \draw[line width=2pt, black] (5.6,2.0-\hh) -- (5.6,2.0+\hh);
  \node at (2.1, 2.0) {\footnotesize $d$};
  \node at (4.9, 2.0) {\footnotesize $s$};
  \node at (6.3, 2.0) {\footnotesize $i$};

  \fill[orange!40!yellow] (0,1.0-\hh) rectangle (2.1,1.0+\hh);
  \fill[blue!30!white]    (2.1,1.0-\hh) rectangle (4.2,1.0+\hh);
  \fill[red!35!white]     (4.2,1.0-\hh) rectangle (7.0,1.0+\hh);
  \draw[line width=2pt, black] (2.1,1.0-\hh) -- (2.1,1.0+\hh);
  \draw[line width=2pt, black] (4.2,1.0-\hh) -- (4.2,1.0+\hh);
  \node at (1.05, 1.0) {\footnotesize $d$};
  \node at (3.15, 1.0) {\footnotesize $s$};
  \node at (5.6, 1.0) {\footnotesize $i$};

  \fill[orange!40!yellow] (0,0.0-\hh) rectangle (0.7,0.0+\hh);
  \fill[blue!30!white]    (0.7,0.0-\hh) rectangle (2.1,0.0+\hh);
  \fill[red!35!white]     (2.1,0.0-\hh) rectangle (7.0,0.0+\hh);
  \draw[line width=2pt, black] (0.7,0.0-\hh) -- (0.7,0.0+\hh);
  \draw[line width=2pt, black] (2.1,0.0-\hh) -- (2.1,0.0+\hh);
  \node at (0.35, 0.0) {\footnotesize $d$};
  \node at (1.4, 0.0) {\footnotesize $s$};
  \node at (4.55, 0.0) {\footnotesize $i$};
\end{tikzpicture}
\caption{A Signal Representation and Joint Information Structure}
\label{fig:signal_representation}
\end{figure}

The specific arrangement in Figure \ref{fig:signal_representation} is only one of many possible representations. The sender can freely rearrange, split, or redistribute the segments within each row, as long as the total length assigned to each $x$ remains unchanged. Each such rearrangement induces a distinct signal-based joint information structure that preserves secrecy. Consequently, the sender's problem of choosing among signal-based structures is equivalent to choosing a partition of the unit interval for each state. Our next result shows that this restriction is without loss of generality for the sender's goal: for any secrecy-preserving structure, there exists a signal-based structure that Blackwell dominates it.

\begin{proposition}\label{prop:secrecy-frontier}
    For any joint information structure $h: \Omega \rightarrow \Delta(X \times Y)$ that satisfies secrecy, there exists a signal-based structure $\tilde{h}$ that is Blackwell more informative.
\end{proposition}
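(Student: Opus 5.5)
The plan is to build, from a secrecy-preserving $h$, a signal representation $\psi$ of $f$ whose induced structure $h_\psi$ refines $h$. We then show that $h$ is a garbling of $h_\psi$. Under secrecy $g(y)$ does not depend on $\omega$, so the unit interval can be split once and for all into consecutive intervals $T_y$ with $\lambda(T_y)=g(y)$, the same for every state. Inside each $T_y$ we paint for each $\omega$ in turn: we cover $T_y$ with consecutive subintervals of length $h(x,y|\omega)$, one for each $x\in X$ in a fixed order. This is feasible because $\sum_x h(x,y|\omega)=g(y)=\lambda(T_y)$, and it is exactly where secrecy enters. Set $\psi(\omega,t)=x$ on the corresponding subinterval. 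Then $\int \mathbf 1\{\psi(\omega,t)=x\}dt=\sum_y h(x,y|\omega)=f(x|\omega)$, so $\psi$ is a signal representation of $f$.

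The next step is to compare $h_\psi$ with $h$. The partition $\{T'_{y'}\}$ that defines $h_\psi$ groups $t$ by the profile $\psi(\cdot,t)$. It need not refine the partition $\{T_y\}$, since two points in different $T_y$'s may carry the same profile. The cleanest route is to pass through the intermediate object $\xi_\psi$ on $X\times T$, which the paper notes is Blackwell-equivalent to $h_\psi$. Observing $(x,t)$ under $\xi_\psi$ and then reporting $(x,y(t))$, where $y(t)$ is the cell of $\{T_y\}$ containing $t$, yields exactly $h$:
\[
\mathbb P_\omega(x, y(t)=y)=\lambda(\{t\in T_y:\psi(\omega,t)=x\})=h(x,y|\omega).
\]
This is a deterministic garbling, so $\xi_\psi$ is Blackwell more informative than $h$.

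It remains to justify the equivalence of $\xi_\psi$ and $h_\psi$. For any $t\in T'_{y'}$, the likelihood of $(x,t)$ is $\mathbf 1\{\psi(\omega,t)=x\}$, and this is the same function of $\omega$ throughout the cell. So $(x,t)$ and $(x,y')$ induce the same posterior, and $h_\psi$ is sufficient for $\xi_\psi$; the reverse garbling draws $t$ uniformly from $T'_{y'}$. Chaining the two comparisons gives $\tilde h:=h_\psi$ Blackwell more informative than $h$.

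I expect the main obstacle to be the measure-theoretic care in this equivalence between the continuum object and the finite structure. I would handle it via posteriors: show that the distributions of posteriors induced by $\xi_\psi$ and by $h_\psi$ coincide, which is the standard Blackwell criterion. Alternatively, I would avoid the continuum entirely: $\psi$ is piecewise constant on finitely many intervals, so $T$ can be replaced by that finite common refinement, and all garblings become finite stochastic matrices.
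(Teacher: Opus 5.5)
Your proposal is correct and follows essentially the same route as the paper's proof: blocks $T_y$ of length $g(y)$ (this is where secrecy enters), consecutive painting by $h(x,y|\omega)$ within each block, and the deterministic garbling $(x,t)\mapsto(x,y(t))$ from $\xi_\psi$ to $h$. The one addition is that you explicitly verify the Blackwell equivalence of $\xi_\psi$ and $h_\psi$ (same posteriors within each profile cell, with the reverse garbling drawing $t$ uniformly from the cell), a step the paper only asserts in the main text before setting $\tilde h=h_\psi$.
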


Proposition \ref{prop:secrecy-frontier} is closely related to Theorem 1 of \cite{strack2024}; see Section \ref{sec:relation_to_privacy} for a detailed comparison and for an explanation of how it can be obtained by reformulating our problem in their framework. We provide a direct proof to keep the argument self-contained and to express the reduction in signal-based terminology.

A signal-based structure satisfies secrecy, but the converse is not true: a general secrecy-preserving joint information structure need not arise from a signal representation. The intuition is easiest to see from the special case in which $Y$ is independent of $X$ and has a state-independent marginal distribution. Such a structure is secrecy-preserving, but need not be signal-based. A signal-based structure can preserve the sender's state-independent marginal distribution while coupling $Y$ with the baseline message through a signal representation. The sender's message remains uninformative on its own, but the joint observation $(X,Y)$ becomes more informative about the state. Proposition \ref{prop:secrecy-frontier} shows that this logic extends generally: every secrecy-preserving joint information structure is dominated by a signal-based one, so signal-based structures are sufficient for characterizing the secrecy frontier.

Consequently, identifying the maximally informative secrecy-preserving structure reduces to determining how the colors ($x$) and locations ($y$) of the segments can be arranged to maximize informativeness. In particular, if at every location on the interval the assigned color differs across all states, then observing the color and location together perfectly identifies $\omega$. Such a perfectly revealing arrangement is feasible if and only if the total length required for each color across all states does not exceed one (i.e., $\sum_\omega f(x| \omega) \le 1$). \citet{cavounidis2025} formalize this observation.

\begin{proposition}\label{prop:secrecy-fully-revealing}[\citealp[Theorem 1]{cavounidis2025}]
    There exists a secrecy-preserving joint information structure that fully reveals the state if and only if $\sum_{\omega} f(x|\omega) \leq 1$ for every $x \in X$.
\end{proposition}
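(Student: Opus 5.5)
We prove the two directions separately. By Proposition \ref{prop:secrecy-frontier}, a fully revealing secrecy-preserving structure exists if and only if a fully revealing \emph{signal-based} structure exists. A garbling of a non-revealing structure cannot reveal, so if some secrecy-preserving $h$ fully reveals, the dominating signal-based $\tilde h$ fully reveals as well. It therefore suffices to reason about signal representations $\psi$ and the induced $h_\psi$.

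\emph{Necessity.} Suppose $h$ satisfies secrecy and fully reveals the state. Then for every $(x,y)$ in the support, $h(x,y|\omega)>0$ for at most one $\omega$. Fix $x$ and sum over $\omega$:
\begin{equation*}
\sum_\omega f(x|\omega)=\sum_\omega\sum_y h(x,y|\omega)=\sum_y\sum_\omega h(x,y|\omega).
\end{equation*}
For each $y$, only one $\omega$ contributes to the inner sum, and that term is at most $\sum_{x'}h(x',y|\omega)=g(y)$ by secrecy. Hence $\sum_\omega f(x|\omega)\le \sum_y g(y)=1$. This direction needs no signal representation at all; it uses only the fact that the $y$-marginal is state-independent, so the mass of each $y$ can be ``used'' by color $x$ in only one state.

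\emph{Sufficiency.} Assume $\sum_\omega f(x|\omega)\le 1$ for every $x$. We construct $\psi$ so that, at each $t\in[0,1]$, the colors $\psi(\omega,t)$ are distinct across $\omega$. Then each $(x,y)$ with $y$ an equivalence class identifies $\omega$, and $h_\psi$ fully reveals the state.

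For each color $x$, lay out the required lengths $f(x|\omega_1),\dots,f(x|\omega_n)$ consecutively along a circle of circumference one. Start state $\omega_1$'s block at $0$, then $\omega_2$'s block immediately after, and so on, wrapping modulo $1$. Because the total length is at most one, the blocks for a given $x$ are pairwise disjoint on the circle. We then need, for each state $\omega$, the sets $\{t:\psi(\omega,t)=x\}$ over $x\in X$ to partition $[0,1]$ up to measure zero.

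This is the main obstacle. The naive circular layout gives disjointness across states for each fixed color, but not a partition of $[0,1]$ within each state. Equivalently, we need an $n\times |X|$ matrix of measurable sets $E_{\omega,x}\subseteq[0,1]$ with $\lambda(E_{\omega,x})=f(x|\omega)$, rows partitioning $[0,1]$, and columns pairwise disjoint.

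Discretize: if all $f$ are rational with common denominator $N$, this becomes a bipartite edge-coloring problem. Take a bipartite multigraph with states on one side and colors on the other, with $Nf(x|\omega)$ edges between $\omega$ and $x$. Each state has degree $N$ and each color has degree $\le N$. By K\"onig's edge-coloring theorem, the edges can be properly colored with $N$ slots (the subintervals of length $1/N$). Slot $k$ assigned to edge $(\omega,x)$ means $\psi(\omega,t)=x$ on the $k$-th subinterval. Properness gives, per state, a partition of the slots, and per color, distinct slots across states.

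For general real $f$, we would instead apply Birkhoff--von Neumann (or a continuous analogue via the Green--Stokey rearrangement). Pad the color degrees with dummy colors and states to obtain a doubly stochastic matrix, decompose it into permutation matrices with weights, and stack the weights as consecutive subintervals of $[0,1]$. On each subinterval, the permutation assigns distinct colors to distinct states. This yields the required $\psi$ and completes the proof.
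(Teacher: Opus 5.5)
Your proof is correct. It uses the same ``coloring'' picture the paper invokes, but it supplies the argument the paper omits. The paper gives no proof of its own: it only remarks that an arrangement with distinct colors across states at every location is feasible iff each color's total length is at most one, and defers to \cite{cavounidis2025}.

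Your necessity direction is more direct than the signal-based picture. Each $y$ contributes to $\sum_\omega f(x|\omega)$ through at most one state and by at most $g(y)$, and this holds for any secrecy-preserving $h$. So the reduction via Proposition~\ref{prop:secrecy-frontier} in your first paragraph is not needed in either direction; your sufficiency construction is signal-based, hence secrecy-preserving, by design.

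For sufficiency, you correctly locate the real content. Each state's row must partition $[0,1]$ while each color's column stays disjoint across states, and your abandoned circular layout shows this does not come for free. K\"onig's edge-coloring theorem or a Birkhoff--von Neumann decomposition is the standard way to achieve it; the problem is exactly feasibility of a preemptive open-shop schedule of length one.

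The padding step you leave vague closes in one line. With $c_x:=\sum_\omega f(x|\omega)\le 1$ we get $|X|\ge n$ and $\sum_x(1-c_x)=|X|-n$. Add $|X|-n$ dummy states, each putting mass $(1-c_x)/(|X|-n)$ on color $x$; this yields a square doubly stochastic matrix. Since real-state rows already sum to one, every permutation carrying positive weight sends real states to real colors.

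Compared with the paper's citation, your route gives a self-contained, constructive proof. It also yields an explicit finite message space, with one cell of $[0,1]$ per permutation in the decomposition.
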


When this condition fails, we show in Appendix \ref{app:secrecy-frontier-additional} that the coloring intuition yields a tractable characterization of the secrecy frontier in the binary-state case. However, with three or more states, the frontier may include structures that assign overlapping regions to a given color across states, even when a non-overlapping arrangement is feasible. This lack of structure makes a full characterization of the secrecy-only frontier difficult. In contrast, as we show in the next section, imposing plausible deniability in addition to secrecy substantially sharpens the structure of the problem and yields a tractable characterization of the frontier and optimal communication strategies.

\section{The Information Frontier and Optimal Communication}\label{sec:frontier_optimal}

In this section, we characterize the information frontier under both secrecy and plausible deniability, and show that it is attainable by a signal-based structure with a simple posterior geometry. We then provide conditions under which the optimal communication strategy, i.e., a greatest element of $\mathcal{H}^{SPD}$ exists and can be constructed in a simple way. 

\subsection{The Information Frontier}
Let $h:\Omega\to\Delta(X\times Y)$ be a signal-based joint information structure. By construction, for each $(x,y)$, the likelihood vector $h(x,y| \omega)$ takes only two values across states: either zero or some constant $c$. Consequently, satisfying plausible deniability requires that the support of $h(x,y| \omega)$ be aligned with the monotonicity of $f(x|\omega)$. 

In particular, for $x\in D\cup I$, i.e., when the plausible deniability constraint is active, $h(x,y|\omega)$ must lie on an extreme ray of the decreasing or increasing cone. Notice that the signal-based structure constructed in Figure \ref{fig:signal_representation} satisfies this property, and therefore constitutes a joint information structure that satisfies both secrecy and plausible deniability. In this following, we establish that every frontier point of $\mathcal H^{SPD}$ admits an equivalent signal-based representation satisfying this monotonicity restriction.

\begin{theorem}\label{thm:frontier_secrecy_PD}
Every $h\in\mathcal H^{SPD}$ is Blackwell dominated by some signal-based $\tilde h\in\mathcal H^{SPD}$.
\end{theorem}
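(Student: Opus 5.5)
The plan is to build the dominating signal-based structure message by message in $y$. For each $y$ we paint a subinterval $T_y\subseteq[0,1]$ of length $g(y)$, arranging the painting so that every resulting column is an extreme ray of the appropriate cone.

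\emph{Setup and decomposition.} Fix $h\in\mathcal H^{SPD}$. By secrecy, $\sum_x h(x,y|\omega)=g(y)$ for every $\omega$. Partition $[0,1]$ into consecutive intervals $T_y$ with $\lambda(T_y)=g(y)$. By Theorem \ref{thm:PD-characterization}, for each $y$:
\begin{itemize}
\item for each $d\in D$, the vector $h(d,y|\cdot)$ is decreasing;
\item for each $i\in I$, the vector $h(i,y|\cdot)$ is increasing;
\item for each $s\in S$, the vector $h(s,y|\cdot)$ is arbitrary.
\end{itemize}
As in the discussion preceding Proposition \ref{prop:PD-greatest}, write each decreasing vector as a nonnegative combination of step-down rays:
\[
h(d,y|\omega)=\sum_{k} c^{y}_{d,k}\,\mathbf 1\{\omega\le\omega_k\},
\qquad c^{y}_{d,k}=h(d,y|\omega_k)-h(d,y|\omega_{k+1})\ge 0 .
\]
Write each increasing vector similarly with step-up rays $\mathbf 1\{\omega\ge\omega_k\}$ and weights $e^y_{i,k}\ge 0$. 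If $f(d|\cdot)$ is constant, then $h(d,y|\cdot)$ is constant by Theorem \ref{thm:PD-characterization}(i), so only the full ray $k=n$ appears.

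\emph{Painting $T_y$.} Collect all $D$-rays (over all $d\in D$) and list them in decreasing order of cutoff $k$. Place them as consecutive subintervals starting from the left endpoint of $T_y$. The piece for $(d,k)$ has length $c^y_{d,k}$ and is painted color $d$ in states $\omega_1,\dots,\omega_k$.

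Because the rays are sorted by cutoff, the pieces active in state $\omega$ are exactly those with $k\ge$ the index of $\omega$. These form a prefix of length
\[
L_y(\omega)=\sum_{d\in D}h(d,y|\omega).
\]
Symmetrically, list the $I$-rays in increasing order of cutoff and place them from the right endpoint. The pieces active in state $\omega$ then form a suffix of length
\[
R_y(\omega)=\sum_{i\in I}h(i,y|\omega).
\]
Since $L_y(\omega)+R_y(\omega)+\sum_{s\in S}h(s,y|\omega)=g(y)$, the gap between this prefix and suffix in state $\omega$ has length exactly $\sum_{s}h(s,y|\omega)$. Fill the gap with colors $s\in S$ in the required lengths, in any order. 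This defines a measurable $\psi(\omega,\cdot)$ on $T_y$ that paints each color $x$ with total length $h(x,y|\omega)$.

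Summing over $y$ gives total length $f(x|\omega)$ for each color $x$ in state $\omega$, so $\psi$ is a signal representation of $f$.

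\emph{Verification.} Let $\tilde h=h_\psi$. Secrecy holds automatically because $\tilde h$ is signal-based. For plausible deniability, consider a class $T_{\tilde y}$ of points $t$ with identical profiles $\psi(\cdot,t)$. Then $\tilde h(x,\tilde y|\omega)=\lambda(T_{\tilde y})\mathbf 1\{\psi(\omega,t)=x\}$ for any $t\in T_{\tilde y}$.
\begin{itemize}
\item For $d\in D$: color $d$ appears at $t$ only if $t$ lies in a $D$-piece, in which case $\{\omega:\psi(\omega,t)=d\}$ is a lower tail $\{\omega_1,\dots,\omega_k\}$. So the column is a step-down ray, which is decreasing. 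If $f(d|\cdot)$ is constant, the column is constant, since only $k=n$ rays were used.
\item For $i\in I$: by the symmetric argument, the column is an increasing step-up ray.
\item For $s\in S$: there is no restriction.
\end{itemize}
Hence Theorem \ref{thm:PD-characterization} gives $\tilde h\in\mathcal H^{SPD}$.

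For Blackwell dominance, $\tilde h$ is Blackwell-equivalent to $\xi_\psi$. Moreover, $h$ is obtained from $\xi_\psi$ by the deterministic garbling $(x,t)\mapsto(x,y)$ for $t\in T_y$, since $\int_{T_y}\mathbf 1\{\psi(\omega,t)=x\}\,dt=h(x,y|\omega)$. Therefore $\tilde h$ Blackwell dominates $h$.

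\emph{Main obstacle.} The delicate step is the painting. A naive placement would let a $d$-piece active only in low states sit in the middle of the $D$-block, leaving a hole. The $S$-colors in that state would then have to fill a disconnected region, or, worse, the $D$-block in one state could collide with the $I$-block. Sorting all $D$-rays jointly by cutoff, and all $I$-rays likewise, guarantees that in each state the $D$-colors form a contiguous prefix and the $I$-colors a contiguous suffix. Their lengths $L_y(\omega)$ and $R_y(\omega)$ never overlap, because secrecy fixes their sum plus the $S$-mass at $g(y)$. I would check this prefix and suffix property carefully.

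A second point to check is that merging $t$'s across different $T_y$ into one equivalence class does not break anything. It does not: the dominance argument is routed through $\xi_\psi$, and each merged class still has an extreme-ray column.
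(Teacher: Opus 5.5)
Your proposal is correct and follows the paper's proof. Secrecy lets you cut $[0,1]$ into blocks $T_y$ of length $g(y)$; each column $h(x,y|\cdot)$ is split into extreme rays of the relevant cone and painted inside its block; the garbling $(x,t)\mapsto(x,y)$ recovers $h$ from $\xi_\psi$; and Theorem \ref{thm:PD-characterization} certifies $h_\psi\in\mathcal H^{SPD}$. Your joint sorting, with the $D$-rays ordered by cutoff from the left and the $I$-rays from the right, puts the active pieces in each state into a prefix of length $L_y(\omega)$ and a suffix of length $R_y(\omega)$; these never collide even when the $D$- and $I$-piece regions overlap as subsets of $T_y$. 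This spells out the cross-state alignment that the paper's ``refine the segment assigned to $(x,y)$'' step leaves implicit.
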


The proof of Theorem \ref{thm:frontier_secrecy_PD} combines the insights from the characterization of plausible deniability with the signal representation of secrecy. Fix any $h \in \mathcal{H}^{SPD}$. By Proposition \ref{prop:secrecy-frontier}, there exists a signal-based structure $\tilde{h}$ that is Blackwell more informative than $h$. The remaining step is to show that $\tilde{h}$ can be chosen to satisfy the monotonicity restrictions in Theorem \ref{thm:PD-characterization}. The key observation is that, for each $x \in D \cup I$ and each $y$, the likelihood vector $h(x,y|\cdot)$ is monotone and hence admits a unique decomposition into extreme rays of the corresponding cone (similar to Proposition \ref{prop:PD-greatest}). By constructing $\tilde{h}$ so that each refined signal realization implements one such extreme ray, we preserve the original probability $h(x,y|\omega)$ while ensuring that $\tilde{h}$ satisfies the monotonicity restrictions. 

Theorem \ref{thm:frontier_secrecy_PD} implies that, when analyzing the information frontier under secrecy and plausible deniability, there is no loss of generality in restricting attention to signal-based structures. Indeed, let $h\in \mathcal H^{SPD}$ be a frontier point. By Theorem
\ref{thm:frontier_secrecy_PD}, there exists a signal-based $\tilde h\in \mathcal H^{SPD}$ that Blackwell dominates $h$, and hence $\tilde h$ dominates $h$. Because $h$ is a frontier point, this dominance cannot be strict. Therefore $h$ and $\tilde h$ are equivalent under the dominance order.

\begin{corollary}\label{cor:frontier_secrecy_PD}
For every frontier point $h\in \mathcal H^{SPD}$, there exists a signal-based structure $\tilde h\in \mathcal H^{SPD}$ such that $\tilde h$ is equivalent to $h$ under the dominance order.
\end{corollary}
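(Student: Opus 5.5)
The argument is short and goes through Theorem \ref{thm:frontier_secrecy_PD}. It uses one basic fact: Blackwell dominance implies dominance over $\mathcal U$.

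Fix a frontier point $h\in\mathcal H^{SPD}$. Theorem \ref{thm:frontier_secrecy_PD} gives a signal-based $\tilde h\in\mathcal H^{SPD}$ that is Blackwell more informative than $h$. By Blackwell's theorem, for every utility function $u$, and in particular every $u\in\mathcal U$, the receiver's expected value
\[
\mathbb E\big[\max_{a\in A}\mathbb E_{\mu_{\tilde h}(\cdot|x,y)}[u(a,\omega)]\big]
\]
is at least the corresponding value under $h$. Concretely, if $h$ is a garbling of $\tilde h$, any decision rule under $h$ can be replicated under $\tilde h$ by first drawing the garbled message and then following the $h$-rule. Hence $\tilde h$ dominates $h$ over $\mathcal U$.

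Next I use the frontier property. Since $h$ is undominated in $\mathcal H^{SPD}$ and $\tilde h\in\mathcal H^{SPD}$, $\tilde h$ cannot strictly dominate $h$. So for every $u\in\mathcal U$ the inequality in the dominance definition holds with equality. That is, the two expected payoffs coincide for all $u\in\mathcal U$.

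Equality for all $u$ gives both directions of weak dominance, so $h$ dominates $\tilde h$ as well. Hence $h$ and $\tilde h$ are equivalent, and $\tilde h$ is the desired signal-based structure.

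I do not expect any real obstacle: the substantive work is in Theorem \ref{thm:frontier_secrecy_PD}. The only point needing care is that Blackwell's implication is applied to the joint message space $X\times Y$ of $\tilde h$, which may differ from that of $h$. This causes no difficulty, because the dominance order compares only expected payoffs and does not require the two structures to share a message space.
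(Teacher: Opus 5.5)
Your proposal is correct and follows the same argument as the paper: you apply Theorem~\ref{thm:frontier_secrecy_PD} to get a signal-based $\tilde h\in\mathcal H^{SPD}$ that Blackwell dominates $h$, note that this implies dominance over $\mathcal U$, and use the frontier property to rule out strict dominance. Equality of the expected payoffs for every $u\in\mathcal U$ then gives equivalence.
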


Corollary \ref{cor:frontier_secrecy_PD} therefore establishes that, for the purpose of analyzing the frontier of $\mathcal H^{SPD}$, there is no loss in working with signal-based structures. Within this class, the posterior geometry is especially simple and closely parallels that of PD-greatest structures for directional messages. For messages in $D$ and $I$, each realized signal induces a likelihood vector lying on an extreme ray of the decreasing or increasing cone. Hence the induced posterior is the prior conditional on a lower or upper tail of states, respectively.

The main difference arises for non-monotone messages. In a PD-greatest structure, observing such a message can fully reveal the state. Under secrecy, however, full revelation need not remain feasible, because secrecy may require overlap in the segments assigned to the same non-monotone message across states. Consequently, within a signal-based frontier representative, the posterior following a non-monotone message need not be degenerate, although it is still given by the prior conditional on some subset of states. This observation suggests that tractability hinges on limiting the number of non-monotone messages. We therefore turn next to directional and almost-directional structures, where there is at most one such message.

\subsection{Almost-Directional Baseline Structures}

We next restrict attention to baseline information structures with at most one non-monotone message. 

\begin{definition}\label{def:directional_almost_directional}
The baseline information structure $f:\Omega\to\Delta(X)$ is \textbf{directional} if $S=\emptyset$, i.e., if every message is either increasing or decreasing in $\omega$. It is \textbf{almost-directional} if $|S| \leq 1$, i.e., if there is at most one non-monotone message.
\end{definition}

Almost-directional structures are useful tractable benchmarks, with directional structures corresponding to the special case of no non-monotone message. They capture baseline information structures that are largely ordered, while allowing a limited departure from monotonicity through a single exceptional message. Thus, restricting attention to at most one non-monotone message permits meaningful deviations from full monotonicity, while ruling out environments in which irregularities are spread across many messages. At the same time, the restriction encompasses important low-dimensional cases: when $|X| \leq 3$, any information structure satisfying the monotone likelihood ratio property is almost-directional.

For almost-directional baseline structures, the signal representation provides a tractable way to construct optimal communication. We focus on the following family of signal-based structures.

\begin{definition}\label{def:direction-ordered}
    A signal representation $\psi:\Omega\times T\to X$ of $f$ is \textbf{direction-ordered} if there exist functions $T_1,T_2:\Omega\to[0,1]$, with $T_1(\omega)\leq T_2(\omega)$ for all $\omega\in\Omega$, such that
    \begin{equation}
        \psi(\omega,t)\in
        \begin{cases}
            D, & \text{if } t<T_1(\omega),\\
            S, & \text{if } T_1(\omega)\leq t<T_2(\omega),\\
            I, & \text{if } T_2(\omega)\leq t.
        \end{cases}
    \end{equation}
\end{definition}

By likelihood monotonicity, both $T_1$ and $T_2$ are decreasing in $\omega$. The signal-based information structure depicted in Figure \ref{fig:signal_representation} is induced precisely by a direction-ordered representation. The following result shows that this structure is a greatest element of $\mathcal{H}^{SPD}$, and therefore an optimal communication for the baseline structure in Example \ref{ex:running}.

\begin{theorem}\label{thm:directional_almost_directional}
Suppose the baseline information structure $f$ is almost-directional. Then there exists a direction-ordered representation $\psi$ of $f$ such that $h_\psi$ is the greatest element of $\mathcal{H}^{SPD}$.
\end{theorem}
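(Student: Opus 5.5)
By Theorem \ref{thm:frontier_secrecy_PD}, every $h\in\mathcal H^{SPD}$ is Blackwell dominated by a signal-based $h_\phi\in\mathcal H^{SPD}$. It therefore suffices to construct one direction-ordered $\psi$ and show that $h_\psi$ dominates every signal-based $h_\phi\in\mathcal H^{SPD}$. Since $h_\psi$ itself lies in $\mathcal H^{SPD}$, it is then greatest.

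The first step is a structural description of signal-based structures in $\mathcal H^{SPD}$. For a signal-based $h_\phi$, each $h_\phi(x,y|\cdot)$ is a constant times the indicator of $\{\omega:\phi(\omega,t)=x\}$ for $t\in T_y$. By Theorem \ref{thm:PD-characterization}, this set must be a lower tail if $x\in D$ and an upper tail if $x\in I$. A column with a constant $d\in D$ is the full tail, which is allowed.

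Two lower tails are nested, so two distinct $d,d'\in D$ cannot both occur in the same column $t$, since their state sets would overlap. Hence in every column there is at most one $D$-color, painted on a lower tail $L_t$, and at most one $I$-color, painted on an upper tail $U_t$. When $|S|\le 1$, the single message $s$ occupies the complement $\Omega\setminus(L_t\cup U_t)$, which is an \emph{interval} of states.

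Consequently every candidate induces only three kinds of posteriors: the prior on a lower tail after a $d$, the prior on an upper tail after an $i$, and the prior on an interval after $s$.

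The second step shows that the $D$ and $I$ parts are essentially pinned down. For each $d$, the masses $c_y$ attached to cutoff $k$ must sum across $y$ to $f(d|\omega_k)-f(d|\omega_{k+1})$, by uniqueness of the extreme-ray decomposition as in Proposition \ref{prop:PD-greatest}. So every candidate induces, up to merging realizations with identical posteriors, the same distribution over $(d,\text{lower tail})$ cells, and likewise over $(i,\text{upper tail})$ cells. These $D$ and $I$ parts therefore contribute identically to expected utility.

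The direction-ordered $\psi$ realizes these parts as follows. Set $T_1(\omega_k)=\sum_{d\in D}f(d|\omega_k)$ and $T_2(\omega_k)=1-\sum_{i\in I}f(i|\omega_k)$; both are decreasing by monotonicity. On $[0,T_1)$, partition $[T_1(\omega_{k+1}),T_1(\omega_k))$ into consecutive pieces of length $f(d|\omega_k)-f(d|\omega_{k+1})$, one for each $d$, and color each piece by $d$ on the lower tail $\{\omega_1,\dots,\omega_k\}$. This is consistent because these lengths sum exactly to $T_1(\omega_k)-T_1(\omega_{k+1})$. Treat $I$ symmetrically on $[T_2,1]$, and paint $s$ on $[T_1(\omega),T_2(\omega))$. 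I would check that this representation satisfies \eqref{eq:signal-marginal} and that $h_\psi\in\mathcal H^{SPD}$.

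The remaining and main step is to compare the $s$-parts. Under $\psi$, the $s$-cells are the intervals $\{\omega: T_1(\omega)\le t<T_2(\omega)\}$, with both endpoints moving monotonically in $t$, so the family is a staircase. Any other candidate is described by a measure over columns $(L_t,U_t)$. Its $s$-interval weights must satisfy $\int\mathbf 1\{\omega\in L_t\}dt=T_1(\omega)$ and $\int\mathbf 1\{\omega\in U_t\}dt=1-T_2(\omega)$, but the columns may pair large $L_t$ with large $U_t$ in arbitrary ways.

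I would show that sorting columns comonotonically, that is, pairing the $k$-th largest $L_t$ with the $k$-th smallest $U_t$ exactly as $\psi$ does, produces $s$-cells that are a refinement, or at least a more dispersed version, of the candidate's $s$-cells. The candidate's $s$-cell distribution should then be recoverable from $\psi$'s by a garbling. My first attempt would be a pairwise exchange argument: whenever two columns $t<t'$ are not comonotone, swap the lower or upper tails. I would verify that the post-swap $s$-intervals Blackwell dominate the pre-swap ones, for instance by showing the pre-swap interval measures are mixtures of the post-swap ones on the same states.

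If exact Blackwell domination fails, I would fall back on the weaker order over $\mathcal U$. Using single crossing, the value of an interval posterior can be written through cutoff comparisons of actions, and I would show the swap weakly increases every such value. I expect this rearrangement step to be the main obstacle.
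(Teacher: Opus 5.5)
Your reduction is correct, and in some ways cleaner than the paper's. In a signal-based $h_\phi\in\mathcal H^{SPD}$, each column carries at most one $D$-color on a lower tail and at most one $I$-color on an upper tail. The $D$- and $I$-cells contribute the same amount to every receiver's payoff, by uniqueness of the telescoping decomposition, so only the pairing of tails matters. Your $\psi$ is also a valid direction-ordered member of $\mathcal H^{SPD}$. But the proof stops exactly where the content of the theorem lies, and the route you try first cannot work. Re-pairing tails comonotonically does not yield a refinement or anything recoverable by garbling. Take two columns of mass $\tfrac12$ colored $(d,s,i)$ and $(s,s,s)$ over $(\omega_1,\omega_2,\omega_3)$. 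Re-pairing turns the $s$-cells $\{\omega_2\}$ and $\Omega$ into $\{\omega_2,\omega_3\}$ and $\{\omega_1,\omega_2\}$. Only these two cells put mass on $\omega_2$, so $(0,\tfrac12,0)$ cannot be produced from $(0,\tfrac12,\tfrac12)$ and $(\tfrac12,\tfrac12,0)$, and the re-paired structure does not Blackwell dominate the original (this is Figure~\ref{fig:swapping}). Single crossing is therefore indispensable. Your fallback does not state what is to be proved, and the statement cannot be cell-by-cell; it must concern the pair of cells.

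What is missing is a pairwise exchange inequality, plus an argument that exchanges reach the comonotone pairing. Index a column by $(j,k)$, $0\le j<k\le n+1$, meaning $L_t=\{\omega_1,\dots,\omega_j\}$ and $U_t=\{\omega_k,\dots,\omega_n\}$, so its $s$-cell is $E_{jk}=\{\omega_{j+1},\dots,\omega_{k-1}\}$. Set $V(E):=\max_{a\in A}\sum_{\omega\in E}\mu(\omega)u(a,\omega)$. For a crossing pair $j'<j<k<k'$ you need
\[
V(E_{jk})+V(E_{j'k'})\le V(E_{jk'})+V(E_{j'k})\qquad\text{for every SCP utility } u.
\]
If $E_{jk}=\emptyset$ this is just refinement. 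Otherwise let $a,a'$ be optimal on $E_{jk}$ and $E_{j'k'}$; optimality of $a$ gives $\hat\omega\in E_{jk}$ with $u(a,\hat\omega)\ge u(a',\hat\omega)$. If $a\ge a'$, SCP gives $u(a,\omega)\ge u(a',\omega)$ for all $\omega>\hat\omega$. Playing $a$ on $E_{jk'}$ and $a'$ on $E_{j'k}$ then changes the payoff by $\sum_{\omega_k\le\omega<\omega_{k'}}\mu(\omega)[u(a,\omega)-u(a',\omega)]\ge 0$. If $a<a'$, the contrapositive of strict SCP gives $u(a,\omega)\ge u(a',\omega)$ for $\omega<\hat\omega$, so play $a$ on $E_{j'k}$ and $a'$ on $E_{jk'}$. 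This is essentially Case~2 of Lemma~\ref{lem:swapping}.

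To close the iteration, which you do not address: the payoff is a constant plus $\sum_{j<k}\pi(j,k)V(E_{jk})$, which is linear in the coupling $\pi$ of the two tail-size marginals. Feasible couplings form a compact polytope, and uncrossing preserves $j<k$. Among payoff-maximizing $\pi$, pick one that maximizes $\sum\pi(j,k)\,jk$. A crossing pair in its support would allow moving mass $\epsilon>0$ so that $\pi$ stays optimal while $\sum\pi(j,k)\,jk$ rises by $\epsilon(j-j')(k'-k)>0$, a contradiction. Hence the unique comonotone coupling, which is your $\psi$, is optimal for every $u\in\mathcal U$. With these pieces your route is complete. It replaces the paper's state-by-state induction and case split (relabelings, Blackwell-improving swaps, one single-crossing swap) with a single exchange lemma, and it gives the equivalence of all direction-ordered structures for free.
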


Theorem \ref{thm:directional_almost_directional} shows that, when the baseline structure is almost-directional, an optimal communication can be constructed explicitly and tractably. When $D$ and $I$ each contain multiple messages, it is enough to arrange the three classes in the order $D<S<I$ along the signal space. The individual messages within the $D$- and $I$-regions can then be assigned so that each satisfies the monotonicity condition required by plausible deniability. Figure \ref{fig:signal_representation_overlap} illustrates this class-level construction in a three-state example. The same idea extends to any finite state space.

\begin{figure}[ht]
\centering
\begin{tikzpicture}[x=1cm,y=1cm]
  \def\hh{0.123}

  \node[anchor=east] at (-0.4, 2.0) {$\omega_1$};
  \node[anchor=east] at (-0.4, 1.0) {$\omega_2$};
  \node[anchor=east] at (-0.4, 0.0) {$\omega_3$};

  \draw[line width=0.8pt] (0,2.5) -- (7.0,2.5);

  \foreach \x in {0.7,1.4,2.1,2.8,3.5,4.9,5.6,6.3}
    \draw[dashed, gray!70] (\x, 2.5) -- (\x, -0.5);

  \node[anchor=south] at (0.35, 2.5) {\small $y_1$};
  \node[anchor=south] at (1.05, 2.5) {\small $y_2$};
  \node[anchor=south] at (1.75, 2.5) {\small $y_3$};
  \node[anchor=south] at (2.45, 2.5) {\small $y_4$};
  \node[anchor=south] at (3.15, 2.5) {\small $y_5$};
  \node[anchor=south] at (4.20, 2.5) {\small $y_6$};
  \node[anchor=south] at (5.25, 2.5) {\small $y_7$};
  \node[anchor=south] at (5.95, 2.5) {\small $y_8$};
  \node[anchor=south] at (6.65, 2.5) {\small $y_9$};


  \fill[orange!35!yellow] (0.0,2.0-\hh) rectangle (0.7,2.0+\hh);   
  \fill[orange!70!brown!35] (0.7,2.0-\hh) rectangle (2.1,2.0+\hh); 
  \fill[orange!35!yellow] (2.1,2.0-\hh) rectangle (2.8,2.0+\hh);   
  \fill[orange!70!brown!35] (2.8,2.0-\hh) rectangle (3.5,2.0+\hh); 
  \fill[blue!30!white] (3.5,2.0-\hh) rectangle (5.6,2.0+\hh);      
  \fill[red!25!white] (5.6,2.0-\hh) rectangle (6.3,2.0+\hh);       
  \fill[red!55!white] (6.3,2.0-\hh) rectangle (7.0,2.0+\hh);       

  \foreach \x in {0.7,2.1,2.8,6.3}
    \draw[line width=1.2pt] (\x,2.0-\hh) -- (\x,2.0+\hh);
  \foreach \x in {3.5,5.6}
    \draw[line width=2pt] (\x,2.0-\hh) -- (\x,2.0+\hh);

  \node at (0.35,2.0) {\scriptsize $d_1$};
  \node at (1.40,2.0) {\scriptsize $d_2$};
  \node at (2.45,2.0) {\scriptsize $d_1$};
  \node at (3.15,2.0) {\scriptsize $d_2$};
  \node at (4.55,2.0) {\scriptsize $S$};
  \node at (5.95,2.0) {\scriptsize $i_1$};
  \node at (6.65,2.0) {\scriptsize $i_2$};

  \fill[orange!35!yellow] (0.0,1.0-\hh) rectangle (0.7,1.0+\hh);   
  \fill[orange!70!brown!35] (0.7,1.0-\hh) rectangle (2.1,1.0+\hh); 
  \fill[orange!35!yellow] (2.1,1.0-\hh) rectangle (2.8,1.0+\hh);   
  \fill[blue!30!white] (2.8,1.0-\hh) rectangle (4.9,1.0+\hh);      
  \fill[red!55!white] (4.9,1.0-\hh) rectangle (5.6,1.0+\hh);       
  \fill[red!25!white] (5.6,1.0-\hh) rectangle (6.3,1.0+\hh);       
  \fill[red!55!white] (6.3,1.0-\hh) rectangle (7.0,1.0+\hh);       

  \foreach \x in {0.7,2.1,5.6,6.3}
    \draw[line width=1.2pt] (\x,1.0-\hh) -- (\x,1.0+\hh);
  \foreach \x in {2.8,4.9}
    \draw[line width=2pt] (\x,1.0-\hh) -- (\x,1.0+\hh);

  \node at (0.35,1.0) {\scriptsize $d_1$};
  \node at (1.40,1.0) {\scriptsize $d_2$};
  \node at (2.45,1.0) {\scriptsize $d_1$};
  \node at (3.85,1.0) {\scriptsize $S$};
  \node at (5.25,1.0) {\scriptsize $i_2$};
  \node at (5.95,1.0) {\scriptsize $i_1$};
  \node at (6.65,1.0) {\scriptsize $i_2$};

  \fill[orange!35!yellow] (0.0,0.0-\hh) rectangle (0.7,0.0+\hh);   
  \fill[orange!70!brown!35] (0.7,0.0-\hh) rectangle (1.4,0.0+\hh); 
  \fill[blue!30!white] (1.4,0.0-\hh) rectangle (3.5,0.0+\hh);      
  \fill[red!25!white] (3.5,0.0-\hh) rectangle (4.9,0.0+\hh);       
  \fill[red!55!white] (4.9,0.0-\hh) rectangle (5.6,0.0+\hh);       
  \fill[red!25!white] (5.6,0.0-\hh) rectangle (6.3,0.0+\hh);       
  \fill[red!55!white] (6.3,0.0-\hh) rectangle (7.0,0.0+\hh);       

  \foreach \x in {0.7,4.9,5.6,6.3}
    \draw[line width=1.2pt] (\x,0.0-\hh) -- (\x,0.0+\hh);
  \foreach \x in {1.4,3.5}
    \draw[line width=2pt] (\x,0.0-\hh) -- (\x,0.0+\hh);

  \node at (0.35,0.0) {\scriptsize $d_1$};
  \node at (1.05,0.0) {\scriptsize $d_2$};
  \node at (2.45,0.0) {\scriptsize $S$};
  \node at (4.20,0.0) {\scriptsize $i_1$};
  \node at (5.25,0.0) {\scriptsize $i_2$};
  \node at (5.95,0.0) {\scriptsize $i_1$};
  \node at (6.65,0.0) {\scriptsize $i_2$};

\end{tikzpicture}
\caption{A direction-ordered signal representation with multiple messages in $D$ and $I$}
\label{fig:signal_representation_overlap}
\end{figure}

\subsubsection{Sketch of the Proof of Theorem \ref{thm:directional_almost_directional}}

We illustrate the proof of Theorem \ref{thm:directional_almost_directional} in the case of three states and three signals $X = \{d, s, i\}$, as in Example \ref{ex:running}, where $d$ and $i$ are the monotone messages and $s$ is the unique non-monotone message. By Theorem \ref{thm:frontier_secrecy_PD}, it suffices to restrict attention to signal-based structures satisfying the monotonicity requirements imposed by plausible deniability. In particular, for each $y$, if a higher state is assigned message $d$, then every lower state must also be assigned $d$; symmetrically, if a lower state is assigned $i$, then every higher state must also be assigned $i$. Moreover, after a relabeling of the $y$-axis, which is Blackwell equivalent, we may assume without loss of generality that the $d$-regions are left-aligned across states. This rearrangement is feasible as $f(d|\omega)$ is decreasing in $\omega$.

It therefore remains to show that, within each state, whenever an $i$-region lies to the left of an $s$-region, the two regions can be repositioned so as to yield a joint structure that dominates the original one. We consider three cases, according to the state in which this misalignment occurs. Since the regions are induced by Lebesgue-measurable subsets of $T$, there is no loss in the arguments below by treating them as intervals of equal measure.

\paragraph{Misalignment in $\omega_1$.} Suppose that, in $\omega_1$, some $i$-region lies to the left of some $s$-region. Then a relabeling of the corresponding messages yields a Blackwell-equivalent structure in which that $i$-region is moved to the right of that $s$-region. Hence any misalignment in $\omega_{1}$ can be removed without changing informativeness. 

\paragraph{Misalignment in $\omega_2$.} Suppose that, in $\omega_2$, some $i$-region $y$ lies to the left of some $s$-region $y'$, with $y<y'$. Plausible deniability implies that $y$ must also be assigned $i$ in $\omega_3$, while $y'$ must be assigned $s$ in $\omega_1$ and either $s$ or $i$ in $\omega_3$. Since $\omega_1$ has already been aligned, $y$ must be assigned either $d$ or $s$ in $\omega_1$. We illustrate the configuration shown in Figure \ref{fig:reverse_garbling}; the remaining configurations follow by either a similar argument, or a relabeling as in the previous case. 

Given the original joint structure in Figure \ref{fig:reverse_garbling}, now construct a new joint structure by exchanging the $i$- and $s$-regions in $\omega_2$. This is feasible because the two regions have equal measure. The resulting structure is more informative in the Blackwell sense. Indeed, starting from the new structure, one can recover the original one by a garbling that relabels $(s, y)$ as $(s, y')$, $(i, y)$ as $(i, y')$, and $(i, y')$ as $(i, y)$, leaving all other realizations unchanged. 

\begin{figure}[ht]
\centering
\begin{tikzpicture}[x=1.2cm,y=1cm]
\def\hh{0.123}

  \node at (1.5,3.1) {\small Original};

  \node[anchor=east] at (-0.3, 2.0) {$\omega_1$};
  \node[anchor=east] at (-0.3, 1.0) {$\omega_2$};
  \node[anchor=east] at (-0.3, 0.0) {$\omega_3$};

  \draw[line width=0.8pt] (0,2.5) -- (3,2.5);
  \draw[dashed, gray!70] (1.5,2.5) -- (1.5,-0.5);

  \node[anchor=south] at (0.75,2.5) {\small $y$};
  \node[anchor=south] at (2.25,2.5) {\small $y'$};

  \draw[line width=7pt, color=orange!40!yellow] (0,2.0) -- (1.5,2.0);
  \draw[line width=7pt, color=blue!30!white] (1.5,2.0) -- (3,2.0);
  \draw[line width=2pt, black] (1.5,2.0-\hh) -- (1.5,2.0+\hh);

  \draw[line width=7pt, color=red!35!white] (0,1.0) -- (1.5,1.0);
  \draw[line width=7pt, color=blue!30!white] (1.5,1.0) -- (3,1.0);
  \draw[line width=2pt, black] (1.5, 1.0-\hh) -- (1.5, 1.0+\hh);

  \draw[line width=7pt, color=red!35!white] (0,0.0) -- (1.5,0.0);
  \draw[line width=7pt, color=red!35!white] (1.5,0.0) -- (3,0.0);
  \draw[line width=2pt, black] (1.5, 0-\hh) -- (1.5, 0+\hh);

  \node at (0.75,2.0) {\small $d$};
  \node at (0.75,1.0) {\small $i$};
  \node at (0.75,0.0) {\small $i$};

  \node at (2.25,2.0) {\small $s$};
  \node at (2.25,1.0) {\small $s$};
  \node at (2.25,0.0) {\small $i$};


  \node at (7.0,3.1) {\small New};

  \node[anchor=east] at (5.2, 2.0) {$\omega_1$};
  \node[anchor=east] at (5.2, 1.0) {$\omega_2$};
  \node[anchor=east] at (5.2, 0.0) {$\omega_3$};

  \draw[line width=0.8pt] (5.5,2.5) -- (8.5,2.5);
  \draw[dashed, gray!70] (7.0,2.5) -- (7.0,-0.5);

  \node[anchor=south] at (6.25,2.5) {\small $y$};
  \node[anchor=south] at (7.75,2.5) {\small $y'$};

  \draw[line width=7pt, color=orange!40!yellow] (5.5,2.0) -- (7.0,2.0);
  \draw[line width=7pt, color=blue!30!white] (7.0,2.0) -- (8.5,2.0);
  \draw[line width=2pt, black] (7.0, 2.0-\hh) -- (7.0, 2.0+\hh);

  \draw[line width=7pt, color=blue!30!white] (5.5,1.0) -- (7.0,1.0);
  \draw[line width=7pt, color=red!35!white] (7.0,1.0) -- (8.5,1.0);
   \draw[line width=2pt, black] (7.0, 1.0-\hh) -- (7.0, 1.0+\hh);

  \draw[line width=7pt, color=red!35!white] (5.5,0.0) -- (7.0,0.0);
  \draw[line width=7pt, color=red!35!white] (7.0,0.0) -- (8.5,0.0);
  \draw[line width=2pt, black] (7.0, 0-\hh) -- (7.0, 0+\hh);

  \node at (6.25,2.0) {\small $d$};
  \node at (6.25,1.0) {\small $s$};
  \node at (6.25,0.0) {\small $i$};

  \node at (7.75,2.0) {\small $s$};
  \node at (7.75,1.0) {\small $i$};
  \node at (7.75,0.0) {\small $i$};

\end{tikzpicture}
\caption{A case of misalignment in $\omega_2$}
\label{fig:reverse_garbling}
\end{figure}

\paragraph{Misalignment in $\omega_3$.} Suppose that, in $\omega_3$, some $i$-region $y$ lies to the left of some $s$-region $y'$, with $y < y'$. Plausible deniability implies that $y'$ must be assigned either $d$ or $s$ in both $\omega_1$ and $\omega_2$. Since $\omega_1$ and $\omega_2$ are already aligned, it follows that $y$ must likewise be assigned either $d$ or $s$ in those states. We focus on the configuration shown in Figure \ref{fig:swapping} because it requires a dominance argument tailored to single-crossing preferences, rather than standard Blackwell dominance as in the previous cases. The remaining configurations can be handled by Blackwell arguments analogous to those used above.

Given the original joint structure in Figure \ref{fig:swapping}, construct a new joint structure by exchanging the $i$- and $s$-regions in state $\omega_3$. The only difference arises when $s$ is observed: in the original structure, observing $s$ either isolates the intermediate state $\omega_2$ or is uninformative, whereas in the new structure it induces a cutoff comparison around $\omega_2$. For arbitrary preferences, either form of information could be more valuable, so the two structures are not Blackwell comparable. Nevertheless, we show next that the new structure dominates the original one, in the sense that it yields a higher expected payoff for every $u\in\mathcal U$. The key intuition is that single crossing preserves the ranking of actions in the direction of the state: an upward action adjustment that is valuable at an intermediate state remains valuable at higher states. Thus, the new structure can relocate such an adjustment from $\omega_2$ to $\omega_3$, where it remains valuable by single crossing. The case of a downward action adjustment is symmetric.

\begin{figure}[ht]
\centering
\begin{tikzpicture}[x=1.2cm,y=1cm]

  \def\hh{0.123}

  \node at (1.5,3.1) {\small Original};

  \node[anchor=east] at (-0.3, 2.0) {$\omega_1$};
  \node[anchor=east] at (-0.3, 1.0) {$\omega_2$};
  \node[anchor=east] at (-0.3, 0.0) {$\omega_3$};

  \draw[line width=0.8pt] (0,2.5) -- (3,2.5);
  \draw[dashed, gray!70] (1.5,2.5) -- (1.5,-0.5);

  \node[anchor=south] at (0.75,2.5) {\small $y$};
  \node[anchor=south] at (2.25,2.5) {\small $y'$};

  \draw[line width=7pt, color=orange!40!yellow] (0,2.0) -- (1.5,2.0);
  \draw[line width=7pt, color=blue!30!white] (1.5,2.0) -- (3,2.0);
  \draw[line width=2pt, black] (1.5,2.0-\hh) -- (1.5,2.0+\hh);

  \draw[line width=7pt, color=blue!30!white] (0,1.0) -- (1.5,1.0);
  \draw[line width=7pt, color=blue!30!white] (1.5,1.0) -- (3,1.0);
  \draw[line width=2pt, black] (1.5,1.0-\hh) -- (1.5,1.0+\hh);

  \draw[line width=7pt, color=red!35!white] (0,0.0) -- (1.5,0.0);
  \draw[line width=7pt, color=blue!30!white] (1.5,0.0) -- (3,0.0);
  \draw[line width=2pt, black] (1.5, 0-\hh) -- (1.5, 0+\hh);

  \node at (0.75,2.0) {\small $d$};
  \node at (0.75,1.0) {\small $s$};
  \node at (0.75,0.0) {\small $i$};

  \node at (2.25,2.0) {\small $s$};
  \node at (2.25,1.0) {\small $s$};
  \node at (2.25,0.0) {\small $s$};


  \node at (7.0,3.1) {\small New};

  \node[anchor=east] at (5.2, 2.0) {$\omega_1$};
  \node[anchor=east] at (5.2, 1.0) {$\omega_2$};
  \node[anchor=east] at (5.2, 0.0) {$\omega_3$};

  \draw[line width=0.8pt] (5.5,2.5) -- (8.5,2.5);
  \draw[dashed, gray!70] (7.0,2.5) -- (7.0,-0.5);

  \node[anchor=south] at (6.25,2.5) {\small $y$};
  \node[anchor=south] at (7.75,2.5) {\small $y'$};

  \draw[line width=7pt, color=orange!40!yellow] (5.5,2.0) -- (7.0,2.0);
  \draw[line width=7pt, color=blue!30!white] (7.0,2.0) -- (8.5,2.0);
  \draw[line width=2pt, black] (7.0,2.0-\hh) -- (7.0,2.0+\hh);

  \draw[line width=7pt, color=blue!30!white] (5.5,1.0) -- (7.0,1.0);
  \draw[line width=7pt, color=blue!30!white] (7.0,1.0) -- (8.5,1.0);
  \draw[line width=2pt, black] (7.0,1.0-\hh) -- (7.0,1.0+\hh);

  \draw[line width=7pt, color=blue!30!white] (5.5,0.0) -- (7.0,0.0);
  \draw[line width=7pt, color=red!35!white] (7.0,0.0) -- (8.5,0.0);
  \draw[line width=2pt, black] (7.0, 0-\hh) -- (7.0, 0+\hh);

  \node at (6.25,2.0) {\small $d$};
  \node at (6.25,1.0) {\small $s$};
  \node at (6.25,0.0) {\small $s$};

  \node at (7.75,2.0) {\small $s$};
  \node at (7.75,1.0) {\small $s$};
  \node at (7.75,0.0) {\small $i$};
\end{tikzpicture}
\caption{A case of misalignment in $\omega_3$}
\label{fig:swapping}
\end{figure}

Fix any receiver's utility function $u \in \mathcal{U}$. Under the original structure, let $a_{d}$, $a_{s}$, and $a_{i}$ denote the receiver's optimal actions following messages $(d, y)$, $(s, y)$, and $(i, y)$, respectively. The receiver's optimal action following message $(s, y')$ is $a_{0}$, since her posterior coincides with the prior. Suppose $a_{s} \geq a_{0}$; the opposite case is symmetric.

Under the new structure, consider the following action rule for the receiver: choose $a_{d}$ following $(d, y)$, $a_{s}$ following $(s, y)$, $a_{0}$ following $(s, y')$, $a_{i}$ following $(i, y)$, and the same actions as in the original structure for all other realizations. Because $y$ and $y'$ have the same measure, the receiver's expected payoff under this action rule in the new structure is the same as under the optimal rule in the original structure for every state except $\omega_3$. Thus, the comparison depends only on the following difference:
\begin{align*}
 u(a_{s}, \omega_{3}) - u(a_{0}, \omega_{3}).
\end{align*}

Since $a_{s}$ is optimal for the receiver in state $\omega_{2}$, we have $u(a_{s}, \omega_{2}) \geq u(a_{0}, \omega_{2})$. By single crossing, this implies $u(a_{s}, \omega_{3}) \geq u(a_{0}, \omega_{3})$. Hence, the new structure yields a higher expected payoff for the receiver under the action rule described above, and therefore also under the optimal rule. Since this holds for every $u\in\mathcal U$, the new structure dominates the original one. 

\bigskip 

Repeating these operations eventually produces a direction-ordered structure, since all $i$-regions are moved to the right of all $s$-regions. The proof of Theorem \ref{thm:directional_almost_directional} shows that the same logic extends to more than three states through an induction over states. Once the regions have been ordered for lower states, any remaining misalignment at the next state can be removed by swapping equal-measure pieces. Although the possible configurations at an intermediate state are more varied than in the three-state example, they are all handled by the same three types of arguments illustrated above: relabeling, Blackwell-improving rearrangement, or a dominance-improving rearrangement justified by single crossing. Starting from an arbitrary $h \in \mathcal{H}^{SPD}$, iterating this induction yields a direction-ordered representation that dominates the original structure.

\subsection{Beyond Almost-Directional Baseline Structures}

When the baseline structure contains more than one non-monotone message, plausible deniability imposes no restrictions on those messages. As a result, the same intractability that arises under secrecy alone reemerges as the central bottleneck. In particular, the key swapping operation that yields a dominating structure in the almost-directional case need not yield a dominating structure when multiple non-monotone messages are present.

Consider the structure in Figure \ref{fig:swapping_counterexample}, which contains two non-monotone messages, $s_1$ and $s_2$. To obtain a direction-ordered structure, one would need to swap the $i$- and $s_1$-regions in state $\omega_3$. However, the resulting structure needs not dominate the original one. To see this, let $a_i$ denote the receiver's optimal action in state $\omega_i$, and let $a$ denote the receiver's optimal action following message $(s_1,y)$ under the new structure. Consider the following action rule under the original structure: choose the optimal actions following $(\cdot,y)$, $a_1$ following $(s_1,y')$, $a_2$ following $(s_2,y')$, and the same actions as in the new structure for all other realizations. The receiver's expected utility under this action rule in the original structure exceeds her payoff under the optimal rule in the new structure if and only if
\begin{align*}
\mu(\omega_{2})\big[ u(a_{2}, \omega_{2}) - u(a, \omega_{2})\big] 
> 
\mu(\omega_{1})\big[u(a_{1}, \omega_{1}) - u(a, \omega_{1})\big].
\end{align*}
This inequality can hold under single crossing when the payoff loss from choosing $a$ rather than the statewise optimum $a_2$ is large in $\omega_2$, while the loss from choosing $a$ rather than $a_1$ is comparatively small in $\omega_1$. Single crossing permits such cardinal differences as long as the induced action rankings move monotonically with the state. Hence, the swapping operation that leads to a direction-ordered structure in the almost-directional case need not yield a dominating structure here.

\begin{figure}[ht]
\centering
\begin{tikzpicture}[x=1.2cm,y=1cm]
\def\hh{0.123}

  \node at (1.5,3.1) {\small Original};

  \node[anchor=east] at (-0.3, 2.0) {$\omega_1$};
  \node[anchor=east] at (-0.3, 1.0) {$\omega_2$};
  \node[anchor=east] at (-0.3, 0.0) {$\omega_3$};

  \draw[line width=0.8pt] (0,2.5) -- (3,2.5);
  \draw[dashed, gray!70] (1.5,2.5) -- (1.5,-0.5);

  \node[anchor=south] at (0.75,2.5) {\small $y$};
  \node[anchor=south] at (2.25,2.5) {\small $y'$};

  \draw[line width=7pt, color=orange!40!yellow] (0,2.0) -- (1.5,2.0);
  \draw[line width=7pt, color=blue!30!white] (1.5,2.0) -- (3,2.0);
\draw[line width=2pt, black] (1.5,2.0-\hh) -- (1.5,2.0+\hh);

  \draw[line width=7pt, color=blue!30!white] (0,1.0) -- (1.5,1.0);
  \draw[line width=7pt, color=green!30!white] (1.5,1.0) -- (3,1.0);
 \draw[line width=2pt, black] (1.5, 1.0-\hh) -- (1.5, 1.0+\hh);

  \draw[line width=7pt, color=red!35!white] (0,0.0) -- (1.5,0.0);
  \draw[line width=7pt, color=blue!30!white] (1.5,0.0) -- (3,0.0);
   \draw[line width=2pt, black] (1.5, 0-\hh) -- (1.5, 0+\hh);

  \node at (0.75,2.0) {\small $d$};
  \node at (0.75,1.0) {\small $s_1$};
  \node at (0.75,0.0) {\small $i$};

  \node at (2.25,2.0) {\small $s_1$};
  \node at (2.25,1.0) {\small $s_2$};
  \node at (2.25,0.0) {\small $s_1$};


  \node at (7.0,3.1) {\small New};

  \node[anchor=east] at (5.2, 2.0) {$\omega_1$};
  \node[anchor=east] at (5.2, 1.0) {$\omega_2$};
  \node[anchor=east] at (5.2, 0.0) {$\omega_3$};

  \draw[line width=0.8pt] (5.5,2.5) -- (8.5,2.5);
  \draw[dashed, gray!70] (7.0,2.5) -- (7.0,-0.5);

  \node[anchor=south] at (6.25,2.5) {\small $y$};
  \node[anchor=south] at (7.75,2.5) {\small $y'$};

  \draw[line width=7pt, color=orange!40!yellow] (5.5,2.0) -- (7.0,2.0);
  \draw[line width=7pt, color=blue!30!white] (7.0,2.0) -- (8.5,2.0);
  \draw[line width=2pt, black] (7.0,2.0-\hh) -- (7.0,2.0+\hh);

  \draw[line width=7pt, color=blue!30!white] (5.5,1.0) -- (7.0,1.0);
  \draw[line width=7pt, color=green!30!white] (7.0,1.0) -- (8.5,1.0);
  \draw[line width=2pt, black] (7.0, 1.0-\hh) -- (7.0, 1.0+\hh);

  \draw[line width=7pt, color=blue!30!white] (5.5,0.0) -- (7.0,0.0);
  \draw[line width=7pt, color=red!35!white] (7.0,0.0) -- (8.5,0.0);
  \draw[line width=2pt, black] (7.0, 0-\hh) -- (7.0, 0+\hh);

  \node at (6.25,2.0) {\small $d$};
  \node at (6.25,1.0) {\small $s_1$};
  \node at (6.25,0.0) {\small $s_1$};

  \node at (7.75,2.0) {\small $s_1$};
  \node at (7.75,1.0) {\small $s_2$};
  \node at (7.75,0.0) {\small $i$};
\end{tikzpicture}
\caption{Counter example with multiple non-directional messages}
\label{fig:swapping_counterexample}
\end{figure}

The previous discussion illustrates how the presence of multiple non-monotone messages can complicate the problem. Nevertheless, the joint structure in Figure \ref{fig:signal_representation} remains optimal even if the message $s$ is further split into multiple non-monotone messages. The key observation is that, after the $D$- and $I$-regions are left- and right-aligned, respectively, the remaining non-monotone regions do not overlap across states. As a result, observing any non-monotone message together with its location fully reveals the state, regardless of how many non-monotone messages there are. The induced structure therefore coincides with the greatest structure satisfying plausible deniability alone. In other words, this is a case in which the PD-greatest structure can be implemented while preserving secrecy. The next theorem provides a sufficient condition on the baseline structure for this to happen.

\begin{theorem}\label{thm:PD_greatest_if}
    The PD-greatest element in $\mathcal H^{PD}$ can be implemented by a secrecy-preserving joint information structure, and hence is also a greatest element of $\mathcal H^{SPD}$, if for all $1 < k < n$, 
    \begin{equation*}
    \sum_{s\in S}f(s|\omega_k) \leq \min\left\{\sum_{d\in D}\left[f(d|\omega_{k - 1}) - f(d|\omega_k)\right], \,\sum_{i\in I}\left[f(i|\omega_{k + 1}) - f(i|\omega_k)\right]\right\}.
\end{equation*}
\end{theorem}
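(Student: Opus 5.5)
Our plan is to build explicitly a signal representation $\psi$ of $f$ whose induced $h_\psi$ satisfies the three extreme-ray conditions of Proposition \ref{prop:PD-greatest}. Since $h_\psi$ is signal-based, it automatically satisfies secrecy. Proposition \ref{prop:PD-greatest} would then make $h_\psi$ a greatest element of $\mathcal H^{PD}$. Because $\mathcal H^{SPD}\subseteq\mathcal H^{PD}$ and $h_\psi\in\mathcal H^{SPD}$, it is also greatest in $\mathcal H^{SPD}$. So the whole argument reduces to a single construction on $T=[0,1]$.

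\textbf{Aggregate layout.} Write $F_D(\omega)=\sum_{d\in D}f(d|\omega)$, which is decreasing in $\omega$, and $F_I(\omega)=\sum_{i\in I}f(i|\omega)$, which is increasing. In state $\omega_k$ we assign the $D$-class to $[0,F_D(\omega_k))$ and the $I$-class to $[1-F_I(\omega_k),1]$. The middle gap $[F_D(\omega_k),1-F_I(\omega_k))$ has length $F_S(\omega_k)$, and it is reserved for the $S$-messages.

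Within the $D$-region we split each $d\in D$ into extreme rays. Write $f(d|\cdot)=\sum_k c_{d,k}\mathbf 1_{\{\omega_1,\dots,\omega_k\}}$ with $c_{d,k}=f(d|\omega_k)-f(d|\omega_{k+1})\ge 0$. Take the layers $[F_D(\omega_{k+1}),F_D(\omega_k))$; these are nested because $F_D$ decreases. Layer $k$ has length $\sum_d c_{d,k}$, and we subdivide it into consecutive pieces of length $c_{d,k}$ labelled $d$. On each such piece, states $\omega_1,\dots,\omega_k$ receive $d$ at the same locations. States above $k$ do not cover that layer at all.

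This ensures two things. First, $\psi(\omega,t)=d$ on a set of measure $f(d|\omega)$. Second, each resulting $y$-cell sends $d$ exactly to a lower tail of states. The $I$-region is handled symmetrically from the right, producing upper tails. The resulting equivalence classes $T_y$ then give $h_\psi(d,y|\cdot)$ constant on a lower tail and zero elsewhere, which is exactly (i) of Proposition \ref{prop:PD-greatest}; (ii) for $I$ follows in the same way.

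\textbf{The key step: separating the $S$-regions.} We need the middle gaps of distinct states to be pairwise disjoint. Then any $t$ carrying an $S$-message in state $\omega_k$ carries a $D$- or $I$-message in every other state. Each $h_\psi(s,y|\cdot)$ is then concentrated on a single state, giving (iii). The endpoint states $\omega_1$ and $\omega_n$ need care, and the hypothesis is stated only for $1<k<n$, so we use the following ordering argument.

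Because $F_D$ is decreasing and $1-F_I$ is also decreasing, the gaps $G_k=[F_D(\omega_k),1-F_I(\omega_k))$ move leftward as $k$ increases. It therefore suffices that consecutive gaps do not overlap, that is, $1-F_I(\omega_{k+1})\le F_D(\omega_k)$ for every $k$. Since $F_S=1-F_D-F_I$, this condition is equivalent to $F_S(\omega_k)\le F_I(\omega_{k+1})-F_I(\omega_k)$, and also to $F_S(\omega_{k+1})\le F_D(\omega_k)-F_D(\omega_{k+1})$. The hypothesis at an interior $k$ gives both inequalities involving $\omega_k$ and its neighbours. Chaining these covers every consecutive pair $(k,k+1)$ with $k\ge 2$ or $k+1\le n-1$. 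The remaining case $n=2$ imposes no interior condition; there one gap must be placed to avoid the other, and we would handle this by the binary-state argument or note it is vacuous.

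Nonconsecutive pairs then follow from monotonicity, since $G_j$ lies to the left of $G_{k}$ whenever $j>k$. We expect this equivalence-and-chaining bookkeeping, especially at the endpoints, to be the main obstacle. The rest is routine verification that the layer subdivision respects the marginals $f(x|\omega)$.
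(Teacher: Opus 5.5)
Your proposal is correct and follows essentially the same route as the paper: a direction-ordered representation with $D$ left-aligned in lower-tail layers and $I$ right-aligned in upper-tail layers. The $S$-gaps are made pairwise disjoint via the consecutive-gap inequality $\sum_{s}f(s|\omega_{k+1})\le\sum_{d}[f(d|\omega_k)-f(d|\omega_{k+1})]$ (equivalently the $I$-slack at $\omega_k$), and Proposition~\ref{prop:PD-greatest} finishes the argument. The $n=2$ case you hedge on is genuinely vacuous: every likelihood on two states is weakly monotone, so $S=\emptyset$, and the paper's boundary step also relies on this without saying so.
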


The condition in Theorem \ref{thm:PD_greatest_if} has a simple interpretation. At each interior state $\omega_k$, the total mass of non-monotone messages must fit within the local slack created by the monotone classes on either side. This slack is determined by the decrease in $D$-mass from $\omega_{k-1}$ to $\omega_k$ and the increase in $I$-mass from $\omega_k$ to $\omega_{k+1}$. Thus, the PD-greatest structure remains feasible whenever non-monotone messages are sufficiently sparse relative to the directional variation in the monotone classes.

This result can be viewed as a counterpart to Theorem 1 in \cite{cavounidis2025} (Proposition \ref{prop:secrecy-fully-revealing}) under the additional constraint of plausible deniability. There, full revelation under secrecy is feasible if and only if, for each message $x$, the total mass $\sum_{\omega} f(x|\omega)$ does not exceed one, so that the realizations of $x$ can be made disjoint across states. Here, rather than requiring global disjointness for every message, it is enough that the non-monotone regions can be separated across states using the local slack generated by the monotone classes. If the condition fails, the PD-greatest structure need not be implementable by this construction. Nevertheless, when there are multiple non-monotone messages, it may still be possible to arrange them across the signal space so that they do not overlap across states. This possibility points to a richer combinatorial problem governing how non-monotone messages can be placed while preserving both secrecy and plausible deniability. A full characterization of that problem is beyond the scope of the present paper.

\section{Discussions}\label{sec:discussions}

\subsection{Relation to Privacy Constraints and Independence Restrictions}\label{sec:relation_to_privacy}
Our secrecy constraint is closely related to the privacy constraints studied in \cite{strack2024} and \cite{he2026private}. All three papers characterize the maximally informative information structures subject to independence restrictions. The formal similarity is that, in each case, some random variable must remain statistically independent of another. The difference lies in which variables are protected and in how informativeness is evaluated.

To see the connection with \cite{strack2024}, fix a joint information structure $h:\Omega\to\Delta(X\times Y)$ whose marginal on $X$ is the baseline information structure $f$. Equivalently, write $h(x,y|\omega)=g_x(y|\omega)f(x|\omega)$, where $g_x(\cdot|\omega)$ is the conditional distribution over the sender's message given the state $\omega$ and the baseline message $x$. The secrecy constraint can then be written as $\sum_{x\in X} g_x(y|\omega)f(x|\omega)=g(y)$ for all $y$ and $\omega$. 

This constraint can be embedded into the framework of \cite{strack2024} by enlarging the state space to $\widetilde{\Omega}=\Omega\times X$ with prior $\widetilde{\mu}(\omega,x)=\mu(\omega)f(x|\omega)$. The sender's message is then generated from $\widetilde{\Omega}$ according to the kernel $g_x(\cdot|\omega)$, where $x$ is the baseline-message component of the enlarged state. If the protected attribute is $\theta(\omega,x)=\omega$, then privacy preservation with respect to $\theta$ is equivalent to our secrecy requirement.

The key difference from \cite{strack2024} is the informativeness criterion. Under this embedding, \cite{strack2024} ask how informative the privacy-preserving message $y$ can be about the enlarged state $(\omega,x)$, and compare such messages using the Blackwell order over $\widetilde{\Omega}$. In our model, by contrast, the evaluated object is the joint observation $(x,y)$, and its informativeness is measured with respect to only the state $\omega$. This distinction is substantive. If one restricted the decision problems in \cite{strack2024} to payoffs that depend only on the protected attribute $\theta(\omega,x)=\omega$, then all privacy-preserving messages would be equivalent, since they leave beliefs about $\theta$ unchanged. In our model, even when the receiver's payoff depends only on $\omega$, the design problem remains nontrivial because different choices of the dependence between $X$ and $Y$ change their joint informativeness.

The embedding also clarifies the connection between Proposition \ref{prop:secrecy-frontier} and Theorem 1 of \cite{strack2024}. In the enlarged state space, the non-protected component is the baseline message $x$, whose conditional distribution given the protected attribute $\omega$ is $f(\cdot|\omega)$. A Green--Stokey signal representation $\psi:\Omega\times T\to X$ represents exactly this conditional distribution. Thus, the location $t\in T$ plays the same role as the conditional quantile of the non-protected component in the reordered-quantile construction of \cite{strack2024}. Their Theorem 1 shows that every privacy-preserving signal is Blackwell dominated by such a reordered-quantile signal. Under the embedding above, this implies our Proposition \ref{prop:secrecy-frontier}, since the Blackwell order they use over $\widetilde{\Omega}$ is stronger than the criterion used here.

The relation to \cite{he2026private} can be stated most clearly in terms of the independence restrictions imposed on the three random variables $(\Omega,X,Y)$. When restricting attention to Blackwell information orders, our paper and \cite{he2026private} use the same notion for evaluating the informativeness of $(X,Y)$ about the state $\omega$. The key difference lies in the independence constraint. In \cite{he2026private}, private-private information requires $X$ and $Y$ to be independent, while secrecy in our model requires $Y$ and $\Omega$ to be independent. This difference is crucial, since the value of the sender's message in our model comes precisely from its dependence on $X$. Furthermore, our main dominance criterion is weaker than Blackwell dominance, since it compares information structures only under utilities in $\mathcal U$. Thus, relative to the Blackwell-Pareto frontier in \cite{he2026private}, our frontier problem differs first in the independence restriction and then in the weaker, preference-restricted criterion used to evaluate informativeness.

\subsection{Additional Discussions on Modeling Assumptions}\label{sec:discussion_assumption}
\paragraph{On the Possibility of Repeated Observations}
Our model assumes that, while the marginal distributions of the baseline message $x$ and the sender's message $y$ conditional on the state are publicly observable or verifiable, their joint distribution is shared only between the sender and the receiver. This assumption is innocuous in a one-shot interaction: the outside observer sees only a single realization and therefore cannot infer the joint distribution. When the sender and receiver interact repeatedly, however, the assumption requires an additional qualification. Repeated observations may be useful, or even necessary, for verifying the marginal distributions, for example to confirm that the sender's message is indeed secret. At the same time, if the same joint distribution were used repeatedly and both $x$ and $y$ were publicly observed, repeated observations could eventually reveal their dependence.

Thus, a repeated-interaction interpretation is most natural in environments in which the marginals are routine and publicly verifiable, but the dependence between the two messages is not fixed or publicly learnable across occasions. In such environments, $f$ and $g$ describe ordinary messages that are repeatedly generated and publicly observed. The state may correspond to sensitive information that the sender's public message is required not to reveal on its own, whether for legal, institutional, or strategic reasons. Each message therefore looks ordinary in isolation: the baseline message follows its usual distribution, and the sender's message follows its usual public distribution. What may vary across occasions is whether, and how, these two messages are correlated. The model is therefore particularly relevant in environments where, against this background of routine behavior, the sender and receiver can use the hidden dependence between $x$ and $y$ to transmit additional information without changing the marginal distribution of either message.

For example, a firm's CEO may regularly issue public updates, while a trader regularly receives market news from a particular source. Even if the CEO's public communication is constrained not to reveal the firm's future prospects on its own, its hidden correlation with the trader's news may still convey useful information. This provides a natural form of covert communication under realistic constraints: the channel operates through correlation rather than through changes in publicly observable behavior, and the question is precisely when such hidden correlation can remain both secret and plausibly deniable.

The preceding interpretation does not require covert communication to occur in every interaction between the sender and receiver. The sender and receiver may interact repeatedly, while using the hidden dependence between $x$ and $y$ only occasionally. Even if they do communicate in this way repeatedly, however, the model need not be interpreted as using the same joint structure in every round. The sender and receiver may update their codebook after each round, or after a small number of rounds, while preserving the publicly observable marginals. This does not require new private communication before each instance. Instead, they may agree in advance on a sequence of relabeling rules, so that subsequent communication takes place entirely through the public channel.\footnote{Arbitrary relabelings of named messages need not preserve the marginal distribution of $Y$, but the signal-representation formulation provides a convenient implementation. For a signal-based structure, $[0,1]$ can be partitioned into equal-measure cells, each corresponding to a realization of a refined sender's message. Relabeling these cells preserves the marginal distribution over the refined message space. By Theorem \ref{thm:frontier_secrecy_PD}, restricting attention to signal-based structures is without loss for the sender's frontier problem.} To an outside observer who does not know the relabeling rule, observations of $(x,y)$ are then generated by a sequence of independent but non-identically distributed joint information structures, rather than by repeated draws from a single fixed joint distribution. Thus, repeated observations can make the marginals verifiable without necessarily making the hidden dependence used in any particular instance identifiable.

\paragraph{On the Single-Crossing Assumption}
The single-crossing assumption is meant to capture the intuitive ordinal idea that higher states favor higher actions, while keeping the receiver's utility sufficiently flexible. This flexibility is central to the tractability of plausible deniability. Under single crossing, the rationalizable actions following a message can be read directly from the shape of its likelihood: decreasing likelihoods rationalize lower actions, increasing likelihoods rationalize higher actions, and non-monotone likelihoods are flexible enough to rationalize all actions. This yields the simple monotonicity-based characterization in Theorem \ref{thm:PD-characterization}. 

This tractability would generally be lost under stronger restrictions on utility. For example, with increasing differences, plausible deniability would be characterized indirectly through first-order stochastic dominance comparisons between the posterior and the prior, rather than directly through likelihood monotonicity. If the receiver's ordinal preference over state-action pairs were publicly known and only cardinal intensities were private, rationalizability would become even more action-specific: a message could rationalize an action only if it shifts likelihood toward states where that action is ordinally favored relative to the relevant alternatives. Thus, stronger utility restrictions would replace the simple monotonicity-based classification of messages with more detailed posterior- or action-specific comparisons.

In this sense, single crossing hits the right balance for our purposes. It imposes meaningful discipline on the receiver's behavior while preserving enough flexibility to obtain a tractable characterization of plausible deniability and the results developed above. We view the characterization of plausible deniability and optimal communication under stronger utility restrictions as an interesting question for future work.

\section{Conclusion}\label{sec:conclusion}

This paper studies communication when secrecy of the message does not guarantee secrecy of its use. A message may be statistically uninformative to outsiders, yet the action it induces may reveal that hidden information has been received. This creates a second design problem: the receiver's response must remain justifiable without invoking the message itself. We formalize this requirement as plausible deniability and show that, together with secrecy, it imposes a sharp geometry on feasible communication. At the frontier, informative refinements are disciplined by the receiver's baseline information; and, in the cases where a greatest feasible structure exists, the receiver learns interval information about the state. The result is a theory of how much information can be conveyed while keeping both the message and the receiver's response defensible to outsiders.

The same logic points to a positive role for secrecy and plausible deniability in protective information design. In reporting, compliance, whistleblowing, and related institutional settings, the goal is often to let sensitive information guide action without exposing the person or channel through which it was conveyed. A reporting protocol should therefore do more than hide the report itself: it should also allow any subsequent intervention to be justified on independent grounds. The framework developed here provides a way to study this tradeoff between informativeness and protection. A natural direction for future work is to apply the framework to specific institutional settings, where one can study how the design of reporting channels, interventions, and source protections shapes the feasible informativeness of communication.

\appendix
\part*{Appendix}
\numberwithin{equation}{section}
\numberwithin{lemma}{section}
\numberwithin{theorem}{section}
\numberwithin{proposition}{section}
\numberwithin{figure}{section}

\section{Proofs for Section \ref{sec:PD-secrecy}}

\subsection{Proof of Lemma \ref{lem:PD-R}}

\begin{proof}[Proof of Lemma \ref{lem:PD-R}]
    \textbf{If.} Fix $(x,y) \in \text{supp}(h) $ and $u \in \mathcal{U}$. By definition, $ a_{x,y}(u) \in \mathcal{R}_h(x,y)$. Then $ \mathcal{R}_{h}(x,y) \subseteq \mathcal{R}_f(x)$ implies that $a_{x,y}(u) \in \mathcal{R}_f(x)$. Therefore, there exists $\tilde{u} \in \mathcal{U}$ such that $a_{x,y}(u)$ is optimal under the posterior induced by $x$ and utility $\tilde{u}$. Thus, $h$ satisfies plausible deniability. 

    \medskip\noindent\textbf{Only if.} For any $(x,y) \in \text{supp}(h)$ and $a \in \mathcal{R}_h(x,y)$, there exists $u \in \mathcal{U}$ such that $a$ is optimal under the posterior induced by observing $(x,y) $ for the utility function $u$. By plausible deniability, there exists $\tilde{u} \in \mathcal{U}$ such that $a$ is also optimal under the posterior induced by observing $x$ for the utility function $\tilde{u}$, i.e., $a$ is rationalizable at $x$ under $f$ and $a \in \mathcal{R}_f(x)$. Thus, $\mathcal{R}_h(x,y) \subseteq \mathcal{R}_f(x)$. 
\end{proof}

\subsection{Proof of Lemma \ref{lem:rationalizability}}

    To prove Lemma \ref{lem:rationalizability}, we introduce the notion of single-crossing incremental returns.\footnote{See \cite{athey2018value}.} Formally, define
    \begin{equation*}
        R^{SC} := \left\{r\in \R^{\Omega}: \forall \omega' > \omega, r(\omega) >(\geq) 0 \implies r(\omega') >(\geq) 0 \right\}
    \end{equation*}
    as the set of all real-valued single-crossing functions. Each element $r\in R^{SC}$ is called a single-crossing incremental return. Hence, a utility function $u$ satisfies SCP if and only if $u(a', \omega) - u(a,\omega) \in R^{SC}$ for all $a' > a$. Next, define
    \begin{align*}
         R^-_\mu := \bigg\{r\in R^{SC}: \sum_{\omega}r(\omega)\mu(\omega) < 0\bigg\} \text{ and }  R^+_\mu := \bigg\{r\in R^{SC}: \sum_{\omega}r(\omega)\mu(\omega) > 0\bigg\}
    \end{align*}
    as the subsets of single-crossing incremental returns with negative and positive expectations under prior $\mu$, respectively. 
    The next lemma characterizes rationalizability using single-crossing incremental returns.
    \begin{lemma}\label{lem:incremental_return}
        Given an information structure $\phi:\Omega\to\Delta(Z)$, a message realization $z\in Z$ and an action $a > a_0$, $a \in \mathcal{R}_\phi(z)$ if and only if there exists $r\in  R^-_\mu$ such that
        \begin{equation}\label{ineq:incremental}
        \sum_{\omega} r(\omega)\phi(z|\omega)\mu(\omega) \geq 0.
        \end{equation}
        Symmetrically, for any action $a < a_0$, $a\in \mathcal{R}_\phi(z)$ if and only if there exists $r\in  R^+_\mu$ such that the reversed version of the above weak inequality holds.
    \end{lemma}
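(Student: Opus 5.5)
The plan is to prove the first claim (for $a>a_0$) directly from the definition of $\mathcal{U}$, and to obtain the case $a<a_0$ by the mirror-image argument. The key tool is the Bayes identity $\mu_\phi(\omega|z)\propto \phi(z|\omega)\mu(\omega)$. It shows that the sign of an expected payoff difference under the posterior equals the sign of $\sum_\omega r(\omega)\phi(z|\omega)\mu(\omega)$, where $r(\omega)=u(a,\omega)-u(a_0,\omega)$.

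\emph{Necessity.} Suppose $a>a_0$ is optimal at $z$ for some $u\in\mathcal{U}$, and set $r(\omega):=u(a,\omega)-u(a_0,\omega)$.
\begin{itemize}
\item By the SCP, $r\in R^{SC}$.
\item Since $a_0$ is the unique optimum under $\mu$, we have $\sum_\omega r(\omega)\mu(\omega)<0$, so $r\in R^-_\mu$.
\item Optimality of $a$ at $z$ gives $\mathbb{E}_{\mu_\phi(\cdot|z)}[r]\ge 0$, which after multiplying by the positive normalizing constant is exactly \eqref{ineq:incremental}.
\end{itemize}

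\emph{Sufficiency.} Given $r\in R^-_\mu$ satisfying \eqref{ineq:incremental}, we must build a utility $u\in\mathcal{U}$ with three properties: $a_0$ is uniquely optimal under $\mu$, $a$ is optimal at the posterior, and the SCP holds for every pair of actions. The construction is as follows.
\begin{itemize}
\item Set $u(a_0,\cdot)\equiv 0$ and $u(a,\cdot)=r$.
\item Make every other action $a'$ strictly dominated at every state by setting $u(a',\omega)=\min\{0,r(\omega)\}-M$ for a large constant $M>0$. This makes $a_0$ the unique prior optimum, since $a$ yields $\sum r\mu<0$ and all others yield less, and it keeps $a$ optimal at the posterior, since $a$ yields at least $0$.
\item The remaining task, and the main obstacle, is to verify the SCP for all pairs of actions, not just $(a_0,a)$. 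The pair $(a_0,a)$ is fine because $r\in R^{SC}$.
\end{itemize}

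For the dominated actions, constant shifts risk violating single crossing when paired with $a$ or $a_0$. For example, with $a'$ between $a_0$ and $a$, the difference $u(a,\cdot)-u(a',\cdot)=r-\min\{0,r\}+M$ is positive everywhere, so it is trivially single crossing. The difference $u(a',\cdot)-u(a_0,\cdot)=\min\{0,r\}-M$ is negative everywhere, which is also fine.

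In general, an everywhere-positive or everywhere-negative difference satisfies the SCP. So I would check each ordered pair $(a'',a')$ among the dominated actions, $a_0$, and $a$, and confirm that every difference either has constant sign or equals $\pm r$ up to a sign-preserving transformation. Where needed, I would choose the dominated actions' utilities so that pairwise differences among themselves are constant in $\omega$, for instance by taking a common function minus distinct constants increasing in the index. Constant differences satisfy the SCP trivially.

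A further check is needed when a dominated action $a'$ lies above $a$: the difference $u(a',\cdot)-u(a,\cdot)=\min\{0,r\}-M-r$ is negative everywhere, so it is fine. When $a'<a_0$, the difference $u(a_0,\cdot)-u(a',\cdot)$ is positive everywhere, which is also fine.

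The case $a<a_0$ follows symmetrically, with $r\in R^+_\mu$ and the inequality reversed. Here $r$ represents $u(a_0,\cdot)-u(a,\cdot)$.
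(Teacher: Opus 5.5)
Your proof is correct and follows the same route as the paper. Necessity comes from setting $r=u(a,\cdot)-u(a_0,\cdot)$, and sufficiency from an explicit utility with $u(a_0,\cdot)\equiv 0$ and $u(a,\cdot)=r$.

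The only difference is how the remaining actions are handled. You make each of them pointwise strictly dominated by both $a$ and $a_0$, with constant differences among themselves, so every pairwise difference other than $r$ has constant sign. The paper instead sets $u=r$ for actions at or above $a$, $r/2$ for intermediate actions, and a two-valued step $\epsilon_1,-\epsilon_2$ around the cutoff $\bar\omega$ for actions below $a_0$. That requires the ratio condition on $\epsilon_2/\epsilon_1$ and the accompanying nondegeneracy argument, which your construction avoids entirely.
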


    \begin{proof}[Proof of Lemma \ref{lem:incremental_return}:]
    Without loss, we focus on the case when $a > a_0$.
    \medskip
    
    \noindent \textbf{Only if.}
    Since $a \in \mathcal{R}_\phi(z)$, there exists $u \in \mathcal{U}$ such that $a$ is optimal upon observing $z$. Define $r(\omega) = u(a, \omega) - u(a_0, \omega)$. Then $r\in  R^-_\mu$ and \eqref{ineq:incremental} follows.
    \medskip
    
    \noindent \textbf{If.}
    By $r \in R^-_\mu$ and \eqref{ineq:incremental}, there exists $\overline{\omega}$ such that $r(\omega) < 0$ for all $\omega < \overline{\omega}$ and $r(\omega) \geq 0$ for all $\omega \geq \overline{\omega}$. Define an utility function $u:A\times\Omega\to\mathbb{R}$ as
	\begin{equation*}
		u(\tilde{a}, \omega) := 
		\begin{cases}
			r(\omega) , & \text{if } \tilde{a} \ge a , \\
			r(\omega)/2 , & \text{if } a> \tilde{a} > a_0 , \\
			0 , & \text{if } \tilde{a} = a_{0} , \\
			\epsilon_1 , & \text{if } \tilde{a} < a_{0} ,\omega < \overline{\omega}\\
            -\epsilon_2 & \text{if } \tilde{a} < a_{0} ,\omega \geq \overline{\omega}
		\end{cases}
	\end{equation*}
    where $\epsilon_1, \epsilon_2$ are positive and satisfy
    \begin{equation}\label{ineq:restriction_epsilon}
        \frac{\epsilon_2}{\epsilon_1} > \max\left\{\frac{\sum_{\omega < \overline{\omega}}\mu(\omega)}{\sum_{\omega\geq\overline{\omega}}\mu(\omega)}, \frac{\sum_{\omega < \overline{\omega}}\phi(z|\omega)\mu(\omega)}{\sum_{\omega \geq \overline{\omega}}\phi(z|\omega)\mu(\omega)}\right\}.
    \end{equation}
    Note that $\sum_{\omega \geq \overline{\omega}}\mu(\omega)\neq 0$ because the prior has full support and $\sum_{\omega \geq \overline{\omega}}\phi(z|\omega)\mu(\omega) \neq 0$, otherwise \eqref{ineq:incremental} cannot hold. Thus, RHS of \eqref{ineq:restriction_epsilon} is finite. The choice of $\epsilon_1, \epsilon_2$, together with $r\in  R^-_\mu$ and \eqref{ineq:incremental}, implies $u$ satisfies SCP, $a_0$ is uniquely optimal under $\mu$ and $a$ is optimal conditional on $z$. Thus, $a \in \mathcal{R}_\phi(z)$.
    \end{proof}

    \begin{remark}
    The utility constructed in the proof is deliberately simple and may exhibit large sets of indifferent actions, and intermediate actions between $a$ and $a_0$ are optimal only when indifferent to $a$. This is purely an artifact of the convenient normalization used in the proof.

    If one prefers a more generic construction, one can proceed by fixing the same cutoff state $\bar\omega$ and constructing all adjacent incremental returns with this same cutoff. In this case, every pairwise difference is a positive sum of same-cutoff single-crossing functions and hence satisfies SCP. By choosing the magnitudes of these incremental returns appropriately, one can ensure that their expectations change sign at different thresholds of the tail probability $Pr(\{\omega\ge \bar\omega\})$. This yields a single-crossing utility under which all actions can be made uniquely optimal at suitable beliefs.
    \end{remark}

    \begin{proof}[Proof of Lemma \ref{lem:rationalizability}] Fix $z$. Denote $q(\omega) := \phi(z|\omega)$ and $\eta(\omega) := \frac{q(\omega)\mu(\omega)}{\sum_{\omega'}q(\omega')\mu(\omega')}$ as the posterior upon observing $z$.
     
    \noindent\textbf{(i) and (ii).} We prove (i), and (ii) follows symmetrically. Suppose $q(\omega)$ is decreasing and non-constant in $\omega$.

    Towards a contradiction,
    suppose $a \in \mathcal{R}_\phi(z)$ but $a >a_0$. By Lemma \ref{lem:incremental_return}, there exists $r \in  R^-_\mu$ such that \eqref{ineq:incremental} holds. Let $\overline{\omega} := \min \{ \omega : r(\omega) \geq 0 \}$ so that $r(\omega) < 0$ for all $\omega < \overline{\omega}$ and $r(\omega) \geq 0$ for all $\omega \geq \overline{\omega}$. Write $r = r^{+} - r^{-}$, where $r^{+}(\omega) := \max \{ r(\omega), 0 \}$  and $r^{-}(\omega) := \max \{ -r(\omega), 0 \}$. Note that 
    \begin{align*}
        \sum_{\omega}\mu(\omega)q(\omega)r(\omega) &= \sum_{\omega \geq \overline{\omega}}\mu(\omega)q(\omega)r^+(\omega) - \sum_{\omega < \overline{\omega}}\mu(\omega)q(\omega)r^-(\omega)\\
        &\leq q(\overline{\omega})\bigg[\sum_{\omega \geq \overline{\omega}}\mu(\omega)r^+(\omega) - \sum_{\omega < \overline{\omega}}\mu(\omega)r^-(\omega)\bigg]\\
        &= q(\overline{\omega})\sum_{\omega}\mu(\omega)r(\omega),
    \end{align*}
    where the inequality follows since $q$ is decreasing. Because $r \in R^-_\mu$, $\sum_{\omega} \mu(\omega)r(\omega) < 0$. If $q(\overline{\omega}) > 0$, then it follows that $\sum_{\omega} \mu(\omega)q(\omega)r(\omega) < 0$, which contradicts \eqref{ineq:incremental}. If $q(\overline{\omega}) = 0$, then $q(\omega) = 0$ for all $\omega \geq \overline{\omega}$, so that $\sum_{\omega} \mu(\omega)q(\omega)r(\omega) = - \sum_{\omega<\overline{\omega} } \mu(\omega)q(\omega)r^{-}(\omega) < 0.$ Again, it contradicts \eqref{ineq:incremental}. Thus, $\mathcal{R}_\phi(z)\subseteq\{a: a \le a_0\}$.

    Since $q$ is non-constant and decreasing, there exists $\hat{\omega}$ such that $q(\omega') > q(\hat{\omega}) \geq q(\omega)$ for all $\omega' < \hat{\omega}$ and all $\omega \geq \hat{\omega}$. Then, we have 
	\begin{equation*}
		\left ( \sum_{\omega < \hat{\omega}} \mu(\omega) \right )\cdot 
		\left ( \sum_{\omega \geq \hat{\omega}} q(\omega) \mu(\omega) \right ) \leq 
		\left ( \sum_{\omega < \hat{\omega}} \mu(\omega) \right )\cdot 
    		\left ( \sum_{\omega \geq \hat{\omega}} \mu(\omega)  \right ) q(\hat{\omega}) < 
		\left ( \sum_{\omega \geq \hat{\omega}} \mu(\omega) \right )\cdot 
		\left ( \sum_{\omega < \hat{\omega}} q(\omega) \mu(\omega) \right ).
    \end{equation*}	
    Using $\sum_{\omega < \hat{\omega}} \mu(\omega) +\sum_{\omega \ge \hat{\omega}} \mu(\omega) = 1$, we can rearrange the above inequality: 
    \begin{equation*}
	    \sum_{\omega\ge \hat{\omega}}q(\omega) \mu(\omega) < 
	    \left ( \sum_{\omega \ge \hat{\omega}} \mu(\omega) \right ) \cdot \left ( \sum_{\omega < \hat{\omega}}q(\omega) \mu(\omega) + \sum_{\omega \ge \hat{\omega}}q(\omega)\mu(\omega) \right ). 
    \end{equation*}
    So we have $\sum_{\omega \geq \hat{\omega}} \eta(\omega) < \sum_{\omega \geq \hat{\omega}} \mu(\omega)$. Pick $c > 0$ such that $\sum_{\omega \geq \hat{\omega}} \eta(\omega) < c < \sum_{\omega \geq \hat{\omega}} \mu(\omega)$, and define $r(\omega) := \mathbf{1} \{ \omega \geq \hat{\omega} \} - c$. Thus, $r \in R^{SC}$. Moreover, it holds that $\sum_{\omega} r(\omega)\mu(\omega) = \sum_{\omega \geq \hat{\omega}} \mu(\omega)  - c > 0$,
    thus, $r \in R^+_\mu$. On the other hand, $\sum_{\omega} r(\omega)q(\omega)\mu(\omega) = \left(\sum_{\omega \geq \hat{\omega}}\eta(\omega) - c \right) \sum_{\omega} \mu(\omega)q(\omega) < 0.$ By Lemma \ref{lem:incremental_return}, any $a < a_0$ is rationalizable at $z$. The case when $a = a_{0}$ is straightforward. Hence, $\{a: a \le a_0\}\subseteq\mathcal{R}_\phi(z)$. This proves (i).

    \medskip
    \noindent\textbf{(iii).} If $q(\omega)$ is constant in $\omega$, the posterior is the same as the prior. For all $u \in \mathcal{U}$, $a_0$ is uniquely optimal under the prior. Thus, $\mathcal{R}_\phi(z) = \{a_0\}$.

    \medskip
    \noindent\textbf{(iv).} Suppose $q(\omega)$ is non-monotone in $\omega$. We show that $A\subseteq \mathcal{R}_\phi(z)$. Consider any action $a > a_{0}$. By Lemma \ref{lem:incremental_return}, it suffices to construct a $r\in R^-_\mu$ such that \eqref{ineq:incremental} holds. Since $q$ is not decreasing, there exists $i < j$ such that $q(\omega_{i}) < q(\omega_{j})$. Let $P := \{\omega: \omega < \omega_i\}$ and $Q := \Omega\setminus( P\cup\{\omega_i, \omega_j\})$.\footnote{$P$ and $Q$ can be empty and the proof still applies.} Define
    \begin{align*}
	r(\omega) := 
	\begin{cases}
		- \epsilon, & \text{if } \omega \in P,\\
		-1, & \text{if } \omega = \omega_{i},\\
		M, & \text{if } \omega = \omega_{j},\\
		\epsilon, & \text{if } \omega \in Q.
	\end{cases}
    \end{align*}
    where $\epsilon, M > 0$. Then, $r \in R^{SC}$, and we have 
    \begin{align*}
	    & \sum_{\omega} \mu(\omega)r(\omega) =  \epsilon [\mu(Q) - \mu(P)] - \mu(\omega_{i}) + M \mu(\omega_{j}),\footnotemark \\
	    & \sum_{\omega} \mu(\omega)q(\omega)r(\omega) =  \epsilon \left(\sum_{\omega\in Q}\mu(\omega)q(\omega) - \sum_{\omega \in P}\mu(\omega)q(\omega)\right) - \mu(\omega_{i})q(\omega_{i}) + M \mu(\omega_{j})q(\omega_{j}).
    \end{align*}
    \footnotetext{For any $A \subseteq \Omega$, $\mu(A) := \sum_{\omega \in A} \mu(\omega)$. }
    Choose $M$ satisfying 
    \begin{align*}
	    \frac{\mu(\omega_{i})q(\omega_{i}) - \epsilon \left(\sum_{\omega\in Q}\mu(\omega)q(\omega) - \sum_{\omega \in P}\mu(\omega)q(\omega)\right)}{\mu(\omega_{j})q(\omega_{j})} < M < \frac{\mu(\omega_{i}) - \epsilon [\mu(Q) - \mu(P)]}{\mu(\omega_{j})}.
    \end{align*}
    Such an $M$ exists for all sufficiently small $\epsilon > 0$, because as $\epsilon \to 0$, the left and right hand sides converge to 
    \begin{align*}
	    \frac{\mu(\omega_{i})q(\omega_{i})}{\mu(\omega_{j})q(\omega_{j})} < \frac{\mu(\omega_{i})}{\mu(\omega_{j})}.
    \end{align*}
    With the above choice of $M$, it holds that
    \begin{align*}
	    \sum_{\omega} \mu(\omega)r(\omega) < 0, \quad \text{and} \quad \sum_{\omega} \mu(\omega)q(\omega)r(\omega) > 0.
    \end{align*}
    Thus, $r \in R^-_\mu$ and $\sum_{\omega} \mu(\omega)q(\omega)r(\omega) > 0$. A symmetric argument applies to any $a \le a_0$, hence $\mathcal{R}_\phi(z) = A$.
\end{proof}

\subsection{Proof of Proposition \ref{prop:PD-greatest}}

\begin{proof}[Proof of Proposition~\ref{prop:PD-greatest}]
We argue the case $d \in D$; the cases $i \in I$ and $s \in S$ are symmetric, replacing $C^D$ by the cone of nonnegative increasing vectors (extreme rays $\mathbf{e}_{\geq k}$) and by $\mathbb{R}_+^n$ (extreme rays $\mathbf{e}_k$), respectively.

For any joint structure $\phi$ and signal $z$, write $\phi^{z} := [\phi(z| \omega_1), \ldots, \phi(z| \omega_n)]^{\intercal}$. By Theorem~\ref{thm:PD-characterization}, every $h^{d,y}$ with $h \in \mathcal{H}^{PD}$ lies in the cone of nonnegative decreasing vectors
\begin{equation*}
C^D := \{v \in \mathbb{R}_+^n : v_1 \geq v_2 \geq \cdots \geq v_n\},
\end{equation*}
whose extreme rays are spanned by $\mathbf{e}_{\leq 1}, \ldots, \mathbf{e}_{\leq n}$, where $(\mathbf{e}_{\leq k})_i := \mathbf{1}\{i \leq k\}$. Each $v \in C^D$ has the unique telescoping decomposition
\begin{equation}\label{eq:telescoping}
v = \sum_{k=1}^n (v_k - v_{k+1})\,\mathbf{e}_{\leq k}, \qquad v_{n+1} := 0.
\end{equation}
Applied to $f^d$, this yields $f^d = \sum_{k=1}^n c_k\,\mathbf{e}_{\leq k}$ with $c_k := f(d | \omega_k) - f(d | \omega_{k+1})$.

\medskip

\noindent\textbf{If.} Let $X \times \overline{Y}$ be the signal space of $\overline{h}$. By hypothesis, $\overline{h}^{d,y} = \beta(y)\,\mathbf{e}_{\leq k(y)}$ for some $\beta(y) \geq 0$ and $k(y) \in \{1,\ldots,n\}$; setting $\mathcal{Y}_k := \{y \in \overline{Y} : k(y) = k\}$, the marginal condition and uniqueness of \eqref{eq:telescoping} force $\sum_{y \in \mathcal{Y}_k} \beta(y) = c_k$, for every $k$. 

Now fix $h \in \mathcal{H}^{PD}$ with signal space $X \times Y$. Applying \eqref{eq:telescoping} columnwise,
\[
h^{d,y'} = \sum_{k=1}^n \alpha_k(y')\,\mathbf{e}_{\leq k}, \qquad \alpha_k(y') := h(d,y' | \omega_k) - h(d,y' | \omega_{k+1}) \geq 0.
\]
Summing over $y'$ and using $\sum_{y' \in Y} h^{d,y'} = f^d$ with uniqueness of \eqref{eq:telescoping}  gives $\sum_{y' \in Y} \alpha_k(y') = c_k$ for each $k$; in particular, $\alpha_k \equiv 0$ whenever $c_k = 0$. Define $\gamma(d, y' | d, y) := \alpha_{k(y)}(y')/c_{k(y)}$ when $\beta(y) > 0$, with $\gamma(\tilde{x}, y' | d, y) = 0$ for $\tilde{x} \neq d$ and $\gamma(\cdot | d, y)$ arbitrary on $\{d\} \times Y$ when $\beta(y) = 0$. Each $\gamma(\cdot | d, y)$ is a probability measure, and
\begin{align*}
\sum_{y \in \overline{Y}} \overline{h}(d, y | \omega)\,\gamma(d, y' | d, y)
= \sum_{k : c_k > 0} \Bigg( \sum_{y \in \mathcal{Y}_k} \beta(y) \Bigg) \frac{\alpha_k(y')}{c_k}\,\mathbf{e}_{\leq k}(\omega)
= h(d, y' | \omega).
\end{align*}
As the same argument applies to all three cases, $h$ can be obtained from $\overline{h}$ by garbling.

\medskip
\noindent\textbf{Only if.} Let $h \in \mathcal{H}^{PD}$ be arbitrary. If $h$ does not satisfy the conditions, then there exists $(x,y) \in \text{supp}(h)$ such that $h^{x,y}$ is not an extreme ray. Then it is feasible to replace $y$ by $y_{1}$ and $y_{2}$ such that $h^{x, y_{1}}$ and $h^{x, y_{2}}$ are not aligned and $h^{x, y_{1}} + h^{x, y_{2}} = h^{x,y}$. Leaving the other signals unchanged, this refinement is strictly Blackwell more informative than the original one. 
\end{proof}

\subsection{Proof of Proposition \ref{prop:secrecy-frontier}}
\begin{proof}[Proof of Proposition \ref{prop:secrecy-frontier}]
Let $X=\{x_1,\dots,x_k\}$ and $Y=\{y_1,\dots,y_m\}$. Consider any joint information structure $h:\Omega\to\Delta(X\times Y)$ that satisfies secrecy. By secrecy, the $Y$-marginal of $h$ is independent of $\omega$. Define $g(y_j):=\sum_{i=1}^k h(x_i,y_j| \omega)$, which is well-defined. Let $G(y_j):=\sum_{j'=1}^j g(y_{j'})$ for $j=1,\dots,m$, with $G(y_0):=0$. Thus $\{G(y_j)\}_{j=0}^m$ partitions $T=[0,1]$.

We construct a signal representation $\psi:\Omega\times T\to X$ as follows. Fix
$\omega\in\Omega$. For each $j$ and each $t\in(G(y_{j-1}),G(y_j)]$, define
$\psi(\omega,t)=x_i$ if
\[
G(y_{j-1})+\sum_{i'=1}^{i-1} h(x_{i'},y_j| \omega)
<
t
\le
G(y_{j-1})+\sum_{i'=1}^{i} h(x_{i'},y_j| \omega).
\]
That is, within each interval corresponding to $y_j$, we allocate mass
according to the conditional distribution of $x$ given $(y_j,\omega)$.

We next show that $\xi_\psi$ Blackwell dominates $h$.
Consider the garbling that maps $(x,t)$ to $(x,y_j)$ whenever $G(y_{j-1})<t\le G(y_j)$. Then for any $\omega\in\Omega$, $x_i\in X$, and $y_j\in Y$,
\begin{align*}
\int_{G(y_{j-1})}^{G(y_j)} \xi_{\psi} (x_i,t| \omega)\,dt =\int_{G(y_{j-1})+\sum_{i'=1}^{i-1} h(x_{i'},y_j| \omega)}^{G(y_{j-1})+\sum_{i'=1}^{i} h(x_{i'},y_j| \omega)} 1\,dt  = h(x_i,y_j| \omega).
\end{align*}
Thus, $h$ can be obtained from $\xi_{\psi}$ via this garbling, implying that $\xi_{\psi}$ is Blackwell more informative than $h$. By letting $\tilde{h}$ to be the induced joint structure $h_\psi$, the result follows.
\end{proof}

\section{Proofs for Section \ref{sec:frontier_optimal}}
\subsection{Proof of Theorem \ref{thm:frontier_secrecy_PD}}

\begin{proof}[Proof of Theorem \ref{thm:frontier_secrecy_PD}]
Fix any $h\in\mathcal H^{SPD}$. We follow the construction in the proof of
Proposition \ref{prop:secrecy-frontier}. Since $h$ satisfies secrecy, define
$g(y_j):=\sum_{i=1}^k h(x_i,y_j\mid\omega)$ and
$G(y_j):=\sum_{j'=1}^j g(y_{j'})$, so that $\{G(y_j)\}_{j=0}^m$ partitions
$T=[0,1]$ into blocks indexed by $Y$.

For each $y\in Y$, decompose the vectors $h^{x,y}$ as in the proof of Proposition \ref{prop:PD-greatest}: for $d\in D$ into extreme rays of $C^D$, for $i\in I$ into extreme rays of $C^I$, and for $s\in S$ into nonnegative multiples of the coordinate rays. Then, within the block corresponding to $y$, refine the segment assigned to $(x,y)$ according to this decomposition. Each refined subsegment therefore induces exactly one extreme ray: a decreasing one for $d\in D$, an increasing one for $i\in I$, and an unrestricted one for $s\in S$. This yields a signal representation $\psi:\Omega\times T\to X$ and hence an induced density $\xi_\psi$ over $X\times T$.

Exactly as in Proposition \ref{prop:secrecy-frontier}, the garbling that maps $(x,t)$ to $(x,y_j)$ whenever $G(y_{j-1})<t\le G(y_j)$ recovers $h$, so $\xi_\psi$ weakly Blackwell dominates $h$. By construction, every realized refined signal induces a likelihood vector satisfying the monotonicity restriction in Theorem \ref{thm:PD-characterization}. The induced joint structure $h_\psi$ inherits this property and is therefore an element in $\mathcal H^{SPD}$. The result follows by letting $\tilde{h} = h_\psi$.
\end{proof}

\subsection{Proof of Theorem \ref{thm:directional_almost_directional}}

\begin{lemma}\label{lem:swapping}
    Suppose the baseline information structure $f$ is almost-directional and $\psi$ is a representation of $f$ such that $h_\psi$ is a frontier point of $\mathcal H^{SPD}$. Then, for any $\omega'' > \omega'$, the following conditions cannot hold simultaneously:
    \begin{enumerate}
        \item there exists a non-measure zero set $T_0\subseteq T$ such that $\psi(\omega', t)\in D$ and $\psi(\omega'', t)\in I$ for all $t\in T_0$;
        \item there exists a non-measure zero set $T_1\subseteq T$ such that $\psi(\omega', t)\in S$ and $\psi(\omega'', t)\in S$ for all $t\in T_1$.
    \end{enumerate}
\end{lemma}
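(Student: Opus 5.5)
The plan is to argue by contradiction: from $T_0$ and $T_1$ I will build a swap that yields a signal-based $h_{\psi'}\in\mathcal H^{SPD}$ which \emph{strictly} dominates $h_\psi$. This extends the $\omega_3$ swap of the proof sketch to general $\omega'<\omega''$.

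\textbf{Normalisation.} By finiteness of $\Omega$ and $X$, I shrink $T_0$ and $T_1$ to positive-measure subsets on which the column $\psi(\cdot,t)$ is constant. I then trim them to a common measure $m>0$, keeping the names $T_0,T_1$.

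By Theorem~\ref{thm:PD-characterization}, in a signal-based PD structure each column has a lower set of $D$-states and an upper set of $I$-states. So in the $T_0$-column, let $\hat\omega\le\omega''$ be the lowest state assigned to $I$. In the $T_1$-column, $\omega'$ carries $S$, so no state above $\omega'$ carries $D$. Since $\omega''$ carries $S$, no state $\le\omega''$ carries $I$. Hence every state in $[\omega',\omega'']$ carries $S$ on $T_1$, and in particular $\hat\omega$ does. Replacing $\omega''$ by $\hat\omega$, both hypotheses still hold. So I may assume that on $T_0$ the $I$-set is exactly $\{\omega\ge\omega''\}$ and the $D$-set is a lower set containing $\omega'$.

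\textbf{The swap.} Fix a measure-preserving bijection $\tau:T_0\to T_1$. Define $\psi'(\omega,t)=\psi(\omega,\tau(t))$ for $t\in T_0$ and $\psi'(\omega,t)=\psi(\omega,\tau^{-1}(t))$ for $t\in T_1$, for all $\omega\ge\omega''$. Leave $\psi'=\psi$ everywhere else. Each row is only rearranged, so $\psi'$ still represents $f$, and $h_{\psi'}$ satisfies secrecy.

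For PD I check the two new columns. On $T_1$, the states below $\omega''$ are unchanged ($S$ from $\omega'$ up to $\omega''$, and $D$ only below $\omega'$), and the tail is now all $I$. So the $I$-set is $\{\omega\ge\omega''\}$ and the $D$-set is unchanged, both of the right shape. On $T_0$, the lower part keeps its $D$ lower set and has no $I$ below $\omega''$ by the normalisation. The new tail is the old $T_1$ tail, which contains no $D$ and whose $I$-states form an upper set strictly above $\omega''$. So $h_{\psi'}\in\mathcal H^{SPD}$.

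\textbf{Weak dominance.} I would mimic the $\omega_3$ argument of the sketch. Realisations $(i,\cdot)$ with $I$-set exactly $\{\omega\ge\omega''\}$ appear in both structures with identical likelihood vectors, so they are handled by relabelling. The only genuinely changed observations are the $S$-realisations: those on $T_1$ now stop at $\omega''$, and those on $T_0$ now carry the old $T_1$ tail.

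Given $u\in\mathcal U$ and the optimal rule under $h_\psi$, I would build a feasible rule under $h_{\psi'}$. After an $S$-realisation at a location in $T_0$, the receiver plays the action the old rule took at the corresponding location in $T_1$. All other realisations keep their old actions. The payoff difference then collapses to terms of the form $\mu(\omega)\,m\,[u(a,\omega)-u(a_0',\omega)]$ at states $\omega\ge\omega''$. Here $a$ is an action the old rule chose after an $S$-observation at $\omega''$ or below, and $a_0'$ is the old $I$-action. Single crossing transports $u(a,\omega'')\ge u(\cdot,\omega'')$ upward to the higher states, as in the sketch, with the symmetric case for downward adjustments.

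\textbf{Main obstacle: strictness.} This is the step I expect to be hardest. My first attempt is a cutoff utility in $\mathcal U$ with two relevant actions, where the higher action is optimal exactly at beliefs putting enough weight on $\{\omega\ge\omega''\}$. The new $T_1$-column separates $\omega'$ (message $S$) from $\omega''$ (message $I$), which the old column pooled under $S$. Choosing $u$ so that $\omega'$ and $\omega''$ call for different actions while the pooled old posterior does not should give a strict gain.

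The frontier hypothesis then yields the contradiction. Almost-directionality enters through $|S|\le1$: it guarantees that the $S$-messages being exchanged are the same message. This is precisely what fails in Figure~\ref{fig:swapping_counterexample}.
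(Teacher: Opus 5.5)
Your swap is the one the paper uses: exchange the $T_0$- and $T_1$-columns at all states $\omega\ge\omega''$. Your secrecy and plausible-deniability checks are fine. The gap is in the weak-dominance step.

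Normalise $\omega'$ as well, to be the highest $D$-state on $T_0$. Let $A=\{\omega:\omega'<\omega<\omega''\}$ be the states carrying $s$ on $T_0$, with old optimal action $a$. Let $B=[\underline\omega,\bar\omega]$ be the states carrying $s$ on $T_1$, with old action $a'$. After the swap, $(s,T_0)$ carries $A\cup[\omega'',\bar\omega]$ and $(s,T_1)$ carries $[\underline\omega,\omega'')$.

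The rule you describe plays $a'$ after both $s$-realisations. The $I$-realisations and the states in $[\omega'',\bar\omega]$ are then payoff-neutral. But the states in $A$ switch from $a$ to $a'$, so the payoff changes by $\sum_{\omega\in A}\mu(\omega)[u(a',\omega)-u(a,\omega)]\le 0$. This has the wrong sign whenever $A\neq\emptyset$. That is exactly the case that is not a plain Blackwell refinement: if $A=\emptyset$, the swap just splits $(s,T_1)$ into a lower and an upper piece, and a garbling suffices, as in the paper's Case~1. Relatedly, no comparison with an ``old $I$-action'' ever arises, since the $I$-realisations cancel by relabelling. The base point for single crossing is also not $\omega''$ but some state inside $A$.

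What is missing is the paper's case split on the two $s$-actions. Play $\max\{a,a'\}$ after the new upper realisation $(s,T_0)$ and $\min\{a,a'\}$ after the new lower realisation $(s,T_1)$.
\begin{itemize}
\item If $a\ge a'$, the payoff change is $\sum_{\omega''\le\omega\le\bar\omega}\mu(\omega)[u(a,\omega)-u(a',\omega)]$. This is nonnegative because optimality of $a$ on $A$ gives some $\hat\omega\in A$ with $u(a,\hat\omega)\ge u(a',\hat\omega)$, and SCP carries this upward.
\item If $a<a'$, the change is $\sum_{\underline\omega\le\omega\le\omega'}\mu(\omega)[u(a,\omega)-u(a',\omega)]$, which sits \emph{below} $\omega''$. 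Nonnegativity follows by contradiction: a strict preference for $a'$ at any such state would, by SCP, hold throughout $A$ and contradict the optimality of $a$ after $(s,T_0)$.
\end{itemize}
Because the swap is always on the upper tail, the second case is not the mirror image of the first. So ``the symmetric case for downward adjustments'' does not go through as you state it.

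Your strictness heuristic has a similar blind spot. When $A\neq\emptyset$, the new $T_0$-column pools $A$ with $[\omega'',\bar\omega]$, so the new structure is not a refinement. Strictness must instead come from a utility with cutoff at $\omega''$ under which the prior, $A$ and $B$ all favour the lower action, while the new realisation carrying $A\cup[\omega'',\bar\omega]$ favours the higher one. This is possible because $B$ additionally contains $[\underline\omega,\omega']$, where the incremental return can be made very negative.
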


\begin{proof}[Proof of Lemma \ref{lem:swapping}] If $S = \emptyset$, then the lemma holds trivially. Consider $S = \{s\}$. Suppose there exists $T_0, T_1 \subseteq T$ with equal measure, $\omega'' > \omega'$, $d\in D$ and $i\in I$ such that\footnote{It is without loss to assume that $T_0$ and $T_1$ have equal measure and that $\psi(\omega,\cdot)$ is constant on each of these sets for every $\omega$, since we can always shrink $T_0$ and $T_1$ to subsets satisfying these properties.}
\begin{align*}
    &\psi(\omega', t) = d \quad\text{and}\quad \psi(\omega'', t) = i  \quad\text{for all }t\in T_0;\\
    &\psi(\omega', t) = s \quad\text{and}\quad \psi(\omega'', t) = s \quad\text{for all }t\in T_1.
\end{align*}
Without loss, let $\omega' = \max\{\omega: \psi(\omega, t) = d, t\in T_0\}, \omega'' = \min\{\omega: \psi(\omega, t) = i, t\in T_0\}$. The proof proceeds by considering two cases. In each case, we can find an alternative representation $\tilde{\psi}$ such that $h_{\tilde{\psi}}$ dominates $h_\psi$ which contradicts to $h_\psi$ is a frontier point. With slight abuse of notation, we use $T_0$ and $T_1$ also to denote the corresponding sender-message labels in the finite joint information structures induced by $\psi$ and $\tilde\psi$, after refining the partition of $T$ so that $T_0$ and $T_1$ are separate cells.

\medskip
\noindent\textbf{Case 1.} There is no $\omega$ between $\omega'$ and $ \omega''$. 

\noindent Consider the following subcases:
\begin{enumerate}[(i)]
    \item For all $t\in T_1$, $\psi(\omega, t) = s$ for all $\omega \geq \omega''$.
    \item For all $t\in T_1$, $\psi(\omega, t) = s$ for all $\omega \leq \omega'$.
    \item There exists  $\overline{\omega} > \omega''$ and $i'\in I$ such that $\psi(\omega, t) = i'$ for all $t\in T_1, \omega \geq \overline{\omega}$.
\end{enumerate}
In subcase (i), consider an alternative representation $\tilde{\psi}$ defined by
\begin{align*}
    \tilde{\psi}(\omega, t) = \begin{cases}
        s &\text{for all } t\in T_0, \omega\geq \omega''\\
        i &\text{for all } t\in T_1, \omega\geq\omega''\\
        \psi(\omega, t) &\text{otherwise}.
    \end{cases} 
\end{align*}
$h_\psi$ can be obtained from $h_{\tilde{\psi}}$  by a garbling that sends $(s, T_0)$ to $(i, T_0)$, $(i, T_1)$ to $(s, T_1)$ and identity elsewhere. Thus, $h_{\tilde{\psi}}$ strictly Blackwell dominates $h_\psi$. 
    
Similarly, in subcase (ii), consider another representation $\tilde{\psi}$ defined by
\begin{align*}
    \tilde{\psi}(\omega, t) = \begin{cases}
        s &\text{for all } t\in T_0, \omega\leq \omega'\\
        d &\text{for all } t\in T_1, \omega\leq\omega'\\
        \psi(\omega, t) &\text{otherwise}.
    \end{cases} 
\end{align*} 
$h_\psi$ can be obtained from $h_{\tilde{\psi}}$ by a garbling that sends $(s, T_0)$ to $(d, T_0)$, $(d, T_1)$ to $(s, T_1)$ and identity elsewhere. Thus, $h_{\tilde{\psi}}$ strictly Blackwell dominates $h_\psi$.

In subcase (iii), let $\overline{\omega}$ be the smallest state such that the condition holds. Consider an alternative representation $\tilde{\psi}$ defined by
\begin{align*}
    \tilde{\psi}(\omega, t) = 
    \begin{cases}
       s &\text{for all } t\in T_0, \overline{\omega} > \omega\geq \omega''\\
       i' &\text{for all } t\in T_0, \omega\geq \overline{\omega}\\
       i  &\text{for all } t\in T_1, \omega\geq \omega''\\
       \psi(\omega, t) &\text{otherwise}
    \end{cases}
\end{align*}
$h_\psi$ can be obtained from $h_{\tilde{\psi}}$ by a garbling that sends $(s, T_0)$ to $(s, T_1)$, $(i, T_1)$ to $(i, T_0)$ and $(i', T_0)$ to $(i', T_1)$ and identity elsewhere. So, $h_{\tilde{\psi}}$ strictly Blackwell dominates $h_\psi$. 

\medskip

\noindent\textbf{Case 2.} There exists some $\omega$ between $\omega'$ and $\omega''$.

\noindent It must be that $\psi(\omega,t) = s$ for all $\omega' < \omega < \omega''$ and $t\in T_0\cup T_1$. Let
\begin{align*}
    \overline{\omega} &= \max\{\omega\geq \omega'': \psi(\omega, t) = s, t\in T_1\}\\
    \underline{\omega} &= \min\{\omega\leq \omega': \psi(\omega, t) = s, t\in T_1\}.
\end{align*}
Therefore, under $h_\psi$, the posterior conditional on $(s, T_0)$ is the truncated prior over $\omega' < \omega < \omega''$ and the posterior conditional on $(s, T_1)$ is the truncated prior over $\underline{\omega} \leq \omega \leq \overline{\omega}$.

Consider an alternative representation $\tilde{\psi}$ defined by
\begin{align*}
    \tilde{\psi}(\omega, t) = 
    \begin{cases}
        s &\text{for all } t\in T_0, \omega'' \leq \omega \leq \overline{\omega}\\
        \psi(\omega, \nu(t)) &\text{for all } t\in T_0, \omega > \overline{\omega}\\
        i &\text{for all } t\in T_1, \omega\geq \omega''\\
        \psi(\omega, t) &\text{otherwise}
    \end{cases}
\end{align*}
where $\nu:T_0 \to T_1$ is a strictly increasing bijection from $T_0$ to $T_1$. Under $h_{\tilde{\psi}}$, the posterior conditional on $(s, T_0)$ is the truncated prior over $\omega' < \omega \leq \overline{\omega}$, while the posterior conditional on $(s, T_1)$ is the truncated prior over $\underline{\omega} \leq \omega < \omega''$. Moreover, the set of posteriors induced by signals other than $(s, T_0)$ and $(s, T_1)$ remains the same.

Let the optimal actions conditional on $(s, T_0)$ and $(s, T_1)$ under $h_\psi$ be $a$ and $a'$ respectively. Consider the following feasible action rule under $h_{\tilde{\psi}}$:
\begin{enumerate}
    \item chooses $\max\{a, a'\}$ conditional on $(s, T_0)$ and $\min\{a, a'\}$ conditional on $(s, T_1)$;
    \item choose the same actions as in the optimal action rule under $h_\psi$ when holding other posterior beliefs.
\end{enumerate}
We show that this action rule under $h_{\tilde{\psi}}$ gives higher expected payoff to a decision maker with single-crossing utility than the optimal action rule under $h_\psi$. 

Suppose $a > a'$. By the optimality of $a$ conditional on $(s, T_0)$, we have
\begin{equation*}
    \sum_{\omega' < \omega < \omega''} \mu(\omega)u(a, \omega) \geq \sum_{\omega' < \omega < \omega''} \mu(\omega)u(a', \omega).
\end{equation*}
This implies the existence of $\hat{\omega}$ between $\omega', \omega''$ such that $u(a, \hat{\omega}) \geq u(a', \hat{\omega})$. By SCP, $u(a, \omega) \geq u(a', \omega)$ for all $\omega > \hat{\omega}$. Now, note that the utility difference between the optimal action rule under $h_\psi$ and the action rule under $h_{\tilde{\psi}}$ defined above is proportional to 
\begin{equation*}
    \sum_{\omega' < \omega < \omega''} \mu(\omega)u(a, \omega) + \sum_{\underline{\omega} \leq \omega \leq \overline{\omega}}\mu(\omega)u(a', \omega) - \sum_{\omega' \leq \omega \leq \overline{\omega}}\mu(\omega)u(a, \omega) - \sum_{\underline{\omega} \leq \omega < \omega''}\mu(\omega)u(a', \omega).
\end{equation*}
After simplifying, we get 
    $-\sum_{\omega'' \leq \omega\leq \overline{\omega}}\mu(\omega)[u(a, \omega) - u(a'\omega)] \leq 0$
where the inequality follows from the fact that $u(a, \omega) - u(a', \omega)\geq 0$ for all $\omega\geq \omega'' > \hat{\omega}$.

If $a' > a$, the utility difference between the optimal action rule under $h_\psi$ and the action rule under $h_{\tilde{\psi}}$ defined above is proportional to 
\begin{equation*}
    \sum_{\omega' < \omega < \omega''} \mu(\omega)u(a, \omega) + \sum_{\underline{\omega} \leq \omega \leq \overline{\omega}}\mu(\omega)u(a', \omega) - \sum_{\omega' \leq \omega \leq \overline{\omega}}\mu(\omega)u(a', \omega) - \sum_{\underline{\omega} \leq \omega < \omega''}\mu(\omega)u(a, \omega).
\end{equation*}
After simplifying, we get 
$\sum_{\underline{\omega} \leq \omega \leq \omega'}\mu(\omega)[u(a', \omega) - u(a,\omega)]$.
If the difference is positive, then there must be some $\hat{\omega}$ between $\underline{\omega}, \omega'$ such that $u(a', \hat{\omega}) - u(a, \hat{\omega}) > 0$. But then SCP implies that $u(a',\omega) > u(a, \omega)$ for all $\omega' < \omega < \omega''$. This violates the optimality of $a$ on this interval. Thus, the utility difference is nonpositive for any $u\in \mathcal{U}$ and negative for some $u\in\mathcal{U}$. Therefore, $h_{\tilde{\psi}}$ dominates $h_{\psi}$.
\end{proof}

\medskip

\begin{proof}[Proof of Theorem \ref{thm:directional_almost_directional}]
Consider an arbitrary representation $\psi$ of $f$ such that the $D$-regions are left-aligned across states and $h_\psi$ satisfies the monotonicity requirements from plausible deniability. In particular,
$\psi(\omega, t) \in D$ for all $t < T_1(\omega) := \sum_{d\in D}f(d|\omega)$ for all $\omega$.

We prove by induction. First consider $\omega_1$. Suppose there exists intervals $T', T'' \subseteq [0, 1]$ with equal measure such that $\sup T' \leq \inf T''$, $\psi(\omega_1, t) \in I$ for all $t \in T'$ and $\psi(\omega_1, t) = s$ for all $t\in T''$. Define
\begin{align*}
    \tilde{\psi}(\omega, t) = 
    \begin{cases}
        \psi(\omega, \nu(t)) &\text{if } t\in T'\\
        \psi(\omega, \nu^{-1}(t)) &\text{if } t\in T''\\
        \psi(\omega, t) &\text{otherwise}
    \end{cases}
\end{align*}
for all $\omega$, where $\nu:T'\to T''$ is a strictly increasing bijection from $T'$ to $T''$. In words, $\tilde{\psi}$ is obtained by swapping the intervals $T', T''$ for all $\omega$. The resulting $h_{\tilde{\psi}}$ is equivalent to $h_\psi$ and plausible deniability is not affected. Repeat this swapping procedure until the signal representation is direction-ordered at $\omega_1$.  

Next, fix an arbitrary $k > 1$ and suppose $\psi$ is direction-ordered at all $\omega < \omega_k$. By the assumption that $D$-regions are left-aligned and plausible deniability, 
\begin{align*}
    \psi(\omega_k, t) \in \begin{cases}
        D &\text{if } t < T_1(\omega_k) := \sum_{d\in D}f(d|\omega_k)\\
        I &\text{if } t > T_2(\omega_{k - 1}) := 1 - \sum_{i\in I}f(i|\omega_{k - 1}).
    \end{cases}
\end{align*}
Suppose there exists intervals $T', T'' \subseteq [0, 1]$ with equal measure such that $\sup T' \leq \inf T''$, $\psi(\omega_k, t) = i\in I$ for all $t \in T'$ and $\psi(\omega_k, t) = s $ for all $t \in T''$.\footnote{As in Lemma \ref{lem:swapping}, we assume $\psi(\omega,\cdot)$ is constant on $T'$ and $T''$ for all $\omega$.} 

\medskip
\noindent\textbf{Case 1.} $T', T'' \subseteq [T_1(\omega_k), T_1(\omega_{k - 1})]$ where $T_1(\omega_{k - 1}) := \sum_{d\in D}f(d|\omega_{k - 1})$.\\
In this case, we can swap the two intervals for all states as above to construct an equivalent joint information structure while preserving plausible deniability.

\medskip
\noindent\textbf{Case 2.} $T' \subseteq [T_1(\omega_k), T_1(\omega_{k - 1})]$ and $T'' \subseteq [T_1(\omega_{k - 1}), T_2(\omega_{k - 1})]$.\\
In this case, we have $\psi(\omega_{k -1}, t) \in D$ and $\psi(\omega_k, t) \in I$ for all $t\in T'$ while $\psi(\omega_{k - 1}, t) = \psi(\omega_k, t) = s$ for all $t \in T''$. By  Lemma \ref{lem:swapping}, one can obtain an alternative representation $\tilde{\psi}$ by swapping the two intervals for all $\omega \geq \omega_k$ and $h_{\tilde{\psi}}$  Blackwell dominates $h_\psi$.

\medskip
\noindent\textbf{Case 3.} $T', T'' \subseteq [T_1(\omega_{k - 1}), T_2(\omega_{k - 1})]$.\\
In this case, for all $t\in T'\cup T''$, $\psi(\omega_{k - 1}, t) = s$ and $\psi(\omega, t)\in D\cup S$ for all $\omega < \omega_{k - 1}$.

Suppose there exists some $\omega' < \omega_{k - 1}$ such that $\psi(\omega', t)\in D$ for all $t\in T'$. If $\psi(\omega', t) \in D$ for all $t \in T''$, then we have $\psi(\omega, t) \in D$ for all $\omega \leq \omega'$ and $t\in T'\cup T''$. We can swap the two intervals for all states to obtain an equivalent joint structure as in case 1 above. If not, we must have $\psi(\omega', t) = s$ for all $t \in T''$. We can then improve $h_\psi$ as in case 2 of the proof of Lemma \ref{lem:swapping}. If there does not exist such $\omega'$, then it must be that $\psi(\omega, t) = s$ for all $\omega < \omega_k$ and $t \in T'\cup T''$. Again, we can swap the two intervals for all states to obtain an equivalent joint structure.

\medskip After each swap, the resulting joint structure weakly dominates the preceding one. Moreover, the swap moves the $I$-region originally on $T'$ to the right of the $S$-region originally on $T''$ at $\omega_k$, while preserving direction-ordering for all states $\omega<\omega_k$. Repeating this procedure finitely many times eliminates all such misorderings at $\omega_k$. Proceeding inductively over $k=2,\ldots,n$, we obtain a direction-ordered joint information structure that dominates the original one.

We next show that all direction-ordered structures satisfying plausible deniability are Blackwell equivalent. Fix two direction-ordered representations $\psi$ and $\psi'$ of $f$ that satisfy the monotonicity requirements of plausible deniability. Since $f$ is almost-directional, there is at most one message in $S$; hence, once the representation is direction-ordered, the $S$-region is pinned down between the $D$- and $I$-regions. For each $d\in D$, the likelihood vector $f(d|\cdot)$ has a unique decomposition into decreasing extreme rays. Therefore any direction-ordered representation satisfying plausible deniability induces, up to relabeling of sender messages, the same collection of lower-tail likelihood vectors associated with $d$. The same argument applies symmetrically to each $i\in I$. Hence all such direction-ordered structures are Blackwell equivalent.

Now fix any direction-ordered representation $\psi^*$ satisfying plausible deniability. Since every $h\in\mathcal H^{SPD}$ is dominated by some direction-ordered structure, and all such direction-ordered structures are Blackwell equivalent, $h_{\psi^*}$ dominates every element of $\mathcal H^{SPD}$. Therefore $h_{\psi^*}$ is a greatest element of $\mathcal H^{SPD}$.
\end{proof}

\subsection{Proof of Theorem \ref{thm:PD_greatest_if}}
\begin{proof}[Proof of Theorem \ref{thm:PD_greatest_if}]
Consider a direction-ordered representation $\psi$ of $f$. Arrange the messages within each of the monotone regions such that for all $\omega'' > \omega', d\in D$ and $i\in I$, $\psi(\omega'', t) = d$ implies $\psi(\omega', t) = d$ and $\psi(\omega', t) = i$ implies $\psi(\omega'', t) = i$. This ensures the induced $h_\psi$ satisfies the monotonicity requirement of plausible deniability.

Next, for any $d\in D$ and $y\in Y$, $h_\psi(d,y|\omega)=\int_{T_y}\mathbf 1\{\psi(\omega,t)=d\}dt$ is either $\lambda(T_y)$ or $0$ at each state $\omega$. Together with the monotonicity just imposed, this implies that $h_\psi^{d,y}$ lies on an extreme ray of the decreasing cone $C^D$. The argument for each $i\in I$ is symmetric.

It remains to verify that the non-monotone regions can be separated across states.
Since $\psi$ is direction-ordered, for each $\omega_k$ the $S$-region is the interval
\[
\left[
\sum_{d\in D} f(d|\omega_k),
\;
\sum_{d\in D} f(d|\omega_k)+\sum_{s\in S} f(s|\omega_k)
\right].
\]
We claim that these intervals are ordered from right to left across states, i.e.,\footnote{If
$\{t:\psi(\omega_k,t)\in S\}=\emptyset$, define both the infimum and the supremum as
$\sum_{d\in D} f(d|\omega_k)$.}
\[
\inf\{t:\psi(\omega_k,t)\in S\}
\geq
\sup\{t:\psi(\omega_{k+1},t)\in S\},
\qquad k=1,\ldots,n-1.
\]
For $k=1,\ldots,n-2$, this follows from the condition in Theorem
\ref{thm:PD_greatest_if} applied to the interior state $\omega_{k+1}$:
\[
\sum_{s\in S} f(s|\omega_{k+1})
\leq
\sum_{d\in D}\bigl[f(d|\omega_k)-f(d|\omega_{k+1})\bigr] \Leftrightarrow \sum_{d\in D} f(d|\omega_k)
\geq
\sum_{d\in D} f(d|\omega_{k+1})+\sum_{s\in S} f(s|\omega_{k+1}).
\]
For the boundary case $k=n-1$, the desired inequality is
\[
\sum_{d\in D} f(d|\omega_{n-1})
\geq
\sum_{d\in D} f(d|\omega_n)+\sum_{s\in S} f(s|\omega_n).
\]
Since probabilities across baseline messages add up to one in every state, the boundary inequality is equivalent to
\[
\sum_{s\in S} f(s|\omega_{n-1})
\leq
\sum_{i\in I}\bigl[f(i|\omega_n)-f(i|\omega_{n-1})\bigr],
\]
which follows from the condition in Theorem \ref{thm:PD_greatest_if} applied to
$\omega_{n-1}$.

Therefore the intervals $\left\{\{t:\psi(\omega,t)\in S\}\right\}_{\omega\in\Omega}$ are pairwise disjoint. Hence, for every $s\in S$ and $y\in Y$, $h_\psi(s,y|\omega)$ is either zero in every state or positive in exactly one state. By Proposition
\ref{prop:PD-greatest}, $h_\psi$ is the greatest element of $\mathcal H^{PD}$.
Since $h_\psi$ also satisfies secrecy, and $\mathcal H^{SPD}\subseteq\mathcal H^{PD}$,
it follows that $h_\psi$ is the greatest element of $\mathcal H^{SPD}$.
\end{proof}

\section{Additional Results on the Secrecy Frontier}
\label{app:secrecy-frontier-additional}

Theorem 1 of \cite{cavounidis2025} (Proposition \ref{prop:secrecy-fully-revealing}) shows that a fully revealing joint information structure satisfying secrecy exists if and only if $\sum_{\omega\in\Omega} f(x|\omega)\leq 1$ for every $x \in X$. When this condition fails for some message $x$, full revelation is infeasible because the total length required to assign message $x$ across all states exceeds the unit interval. A natural follow-up question is whether, whenever the condition holds for a particular message $x$, frontier structures must place the corresponding $x$-regions without overlap across states, so that observing $x$ together with the sender's message reveals the state. The answer is yes in the binary-state case, but not in general.

\begin{proposition}
\label{prop:secrecy_greatest}
Suppose $|\Omega|=2$. Let $x^* \in \arg\max_{x\in X}\bigl[f(x|\omega_1)+f(x|\omega_2)\bigr]$. Consider a signal representation $\psi^*:\Omega\times T\to X$ such that
\begin{align*}
\psi^*(\omega_1,t)=x^* \text{ for }t\in [0,f(x^*|\omega_1)), \text{ and } \psi^*(\omega_2,t)=x^* \text{ for }t\in [1-f(x^*|\omega_2),1], 
\end{align*}
with the remaining regions assigned so that $\psi^*$ is a signal representation of $f$. Then $h_{\psi^*}$ is a greatest element among secrecy-preserving joint information structures. 
\end{proposition}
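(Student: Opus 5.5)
The plan is to combine Proposition~\ref{prop:secrecy-frontier} with the very simple posterior geometry available when $|\Omega|=2$. By Proposition~\ref{prop:secrecy-frontier}, every secrecy-preserving $h$ is Blackwell dominated by some signal-based $h_\psi$. It therefore suffices to show that $h_{\psi^*}$ Blackwell dominates every signal-based $h_\psi$. Blackwell dominance implies dominance over $\mathcal U$, which then gives the greatest-element claim.

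\textbf{Step 1 (reduction).} Fix a signal representation $\psi$. Each cell $T_y$ of the induced partition corresponds to a fixed pair $(\psi(\omega_1,t),\psi(\omega_2,t))=(x_1,x_2)$ for all $t\in T_y$.
\begin{itemize}
\item If $x_1\neq x_2$, then $(x_1,y)$ and $(x_2,y)$ each occur in only one state, so they fully reveal the state.
\item If $x_1=x_2=x$, then $h_\psi(x,y|\omega_1)=h_\psi(x,y|\omega_2)=\lambda(T_y)$, so the posterior is the prior.
\end{itemize}
Let $m(\psi):=\lambda\{t:\psi(\omega_1,t)=\psi(\omega_2,t)\}$ be the overlap. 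Then in each state, with probability $1-m(\psi)$ the receiver learns the state, and with probability $m(\psi)$ she learns nothing. This probability is the same in both states. Hence $h_\psi$ is Blackwell equivalent to the experiment $E_{m(\psi)}$ that reveals the state with state-independent probability $1-m(\psi)$. Moreover, $E_m$ is a garbling of $E_{m'}$ whenever $m\ge m'$: one simply replaces a revealing outcome with the uninformative outcome with probability $(m-m')/(1-m')$. So it suffices to show that $\psi^*$ minimizes $m(\psi)$.

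\textbf{Step 2 (lower bound).} For each $x$, the sets $\{t:\psi(\omega_j,t)=x\}$ for $j=1,2$ have measures $f(x|\omega_1)$ and $f(x|\omega_2)$. Their intersection therefore has measure at least $\max\{0,f(x|\omega_1)+f(x|\omega_2)-1\}$. Summing over $x$ gives
\[
m(\psi)\ge \sum_x \max\{0,f(x|\omega_1)+f(x|\omega_2)-1\}.
\]
Since $\sum_x[f(x|\omega_1)+f(x|\omega_2)]=2$, at most one message can have total mass exceeding $1$, and such a message must be $x^*$. Hence $m(\psi)\ge m^*:=\max\{0,f(x^*|\omega_1)+f(x^*|\omega_2)-1\}$ for every $\psi$.

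\textbf{Step 3 (attainment).} We must show that $\psi^*$ can be completed so that $m(\psi^*)=m^*$. The $x^*$-regions in $\psi^*$ are $[0,f(x^*|\omega_1))$ in $\omega_1$ and $[1-f(x^*|\omega_2),1]$ in $\omega_2$. Their overlap is exactly $m^*$. It remains to assign the other messages, on $[f(x^*|\omega_1),1]$ in $\omega_1$ and on $[0,1-f(x^*|\omega_2))$ in $\omega_2$, so that no message $x\neq x^*$ coincides across states on a set of positive measure. The statement says only that the remaining regions are assigned ``so that $\psi^*$ is a signal representation.'' Part of the work is therefore to show that such a disjoint completion exists, or to read the statement as asserting it.

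I expect this completion to be the main obstacle. My first attempt is a cyclic-order construction. List $X\setminus\{x^*\}=\{x_1,\dots,x_k\}$. In $\omega_1$, place $x_1,\dots,x_k$ consecutively starting at $f(x^*|\omega_1)$. In $\omega_2$, place them consecutively in the same order, starting at $0$. Identify $[0,1]$ with a circle, on which the $\omega_2$ layout is the $\omega_1$ layout read with a shift. The arc of $x_j$ in $\omega_1$ starts $f(x^*|\omega_1)+\sum_{l<j}[f(x_l|\omega_1)-f(x_l|\omega_2)]$ to the right of its arc in $\omega_2$. Disjointness of the two arcs reduces to inequalities that must follow from two facts: $f(x_j|\omega_1)+f(x_j|\omega_2)\le 1$, and $x^*$ maximizes the total mass $f(x|\omega_1)+f(x|\omega_2)$.

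If this fixed ordering fails, I would instead proceed greedily. Overlaps can be removed by swapping equal-measure pieces within the $\omega_2$ layout. The argument would use the fact that each $x\neq x^*$ has total mass at most $1$ and at most that of $x^*$: as long as some $x\neq x^*$ still overlaps itself, there is positive free room where a swap strictly reduces total overlap without creating new self-overlaps. This mirrors the argument behind Proposition~\ref{prop:secrecy-fully-revealing}, applied to the residual configuration.

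Once $m(\psi^*)=m^*$ is established, Steps 1 and 2 show that $h_{\psi^*}$ Blackwell dominates every signal-based structure, and hence every secrecy-preserving structure.
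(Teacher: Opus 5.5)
Your Steps 1 and 2 are sound and follow the paper's own argument: reduce to signal-based structures via Proposition~\ref{prop:secrecy-frontier}, then observe that pooling occurs only on the unavoidable $x^*$-overlap. Your observation that a two-state $h_\psi$ is Blackwell equivalent to $E_{m(\psi)}$ states cleanly why overlap measure is all that matters. The gap is Step 3, which the paper only asserts (``arranging all other messages without overlap''). Write $A=f(x^*|\omega_1)$ and $B=f(x^*|\omega_2)$. If $A+B\ge 1$, nothing needs proof: the non-$x^*$ regions $[A,1]$ in $\omega_1$ and $[0,1-B)$ in $\omega_2$ are disjoint, so every completion has overlap exactly $m^*$. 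If $A+B<1$, however, your cyclic-order construction is false. Take $X=\{x^*,x_1,x_2\}$ with $f(\cdot|\omega_1)=(0.1,0.45,0.45)$ and $f(\cdot|\omega_2)=(0.7,0.15,0.15)$. Then $x^*$ is the unique maximizer, with total mass $0.8<1$. Whichever message you list first occupies $[0.1,0.55)$ in $\omega_1$ and $[0,0.15)$ in $\omega_2$, so it collides with itself on $[0.1,0.15)$. In general, the first-listed message collides whenever its $\omega_2$-mass exceeds $A$. Your fallback of swapping pieces within the $\omega_2$ layout cannot fix this while keeping $\omega_2$'s $x^*$-region at $[0.3,1]$: $\omega_2$'s $x_1$-mass $0.15$ would have to fit into $[0,0.3)\setminus[0.1,0.55)=[0,0.1)$. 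The $\omega_1$ layout must be split instead. For instance, against $\omega_2$ with $x_1$ on $[0,0.15)$, $x_2$ on $[0.15,0.3)$ and $x^*$ on $[0.3,1]$, let $\omega_1$ have $x^*$ on $[0,0.1)$, $x_2$ on $[0.1,0.15)$, $x_1$ on $[0.15,0.3)$, and then $x_1$ and $x_2$ with masses $0.3$ and $0.4$ on $[0.3,1]$. The same example shows that your cyclic layout is a legitimate completion with positive overlap. So the statement must be read, as the paper reads it, with a non-overlapping completion, and the existence of that completion is part of what has to be proved.

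The cleanest repair uses results already in the paper. When $A+B<1$, every $x$ satisfies $f(x|\omega_1)+f(x|\omega_2)\le A+B<1$. Proposition~\ref{prop:secrecy-fully-revealing} combined with Proposition~\ref{prop:secrecy-frontier} then yields a signal representation $\psi$ with $m(\psi)=0$. Its $x^*$-sets $E=\{t:\psi(\omega_1,t)=x^*\}$ and $F=\{t:\psi(\omega_2,t)=x^*\}$ are disjoint. Hence there is a measure-preserving bijection of $T$ (mod null sets) mapping $E$ onto $[0,A)$ and $F$ onto $[1-B,1]$. Applying it simultaneously in both states leaves $h_\psi$ unchanged up to relabeling of $y$, and it produces a zero-overlap completion of $\psi^*$. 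A constructive alternative uses the weighted (supply--demand) version of Hall's theorem. It gives a coupling $\pi$ of $f(\cdot|\omega_1)$ and $f(\cdot|\omega_2)$ with $\pi(x,x)=0$ for all $x$, because the only nontrivial Hall conditions are $f(x|\omega_1)\le 1-f(x|\omega_2)$. Lay out consecutive cells of length $\pi(x,x')$, labeled $x$ in $\omega_1$ and $x'$ in $\omega_2$. List first the cells with $x=x^*$, then those with $x,x'\ne x^*$, then those with $x'=x^*$. With Step 3 replaced by either argument, your Steps 1 and 2 complete the proof.
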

\begin{figure}[ht]
\centering
\begin{tikzpicture}[x=1cm,y=1cm]
  \def\hh{0.123} 

  \node[anchor=east] at (-0.4, 1.0) {$\omega_1$};
  \node[anchor=east] at (-0.4, 0.0) {$\omega_2$};

  \draw[line width=0.8pt] (0,1.5) -- (7.0,1.5);

  \draw[dashed, gray!70] (1.5, 1.5) -- (1.5, -0.5);
  \draw[dashed, gray!70] (3.0, 1.5) -- (3.0, -0.5);
  \draw[dashed, gray!70] (4.0, 1.5) -- (4.0, -0.5);
  \draw[dashed, gray!70] (5.5, 1.5) -- (5.5, -0.5);

  \node[anchor=south] at (0.75, 1.5) {\small $y_1$};
  \node[anchor=south] at (2.25, 1.5) {\small $y_2$};
  \node[anchor=south] at (3.5, 1.5) {\small $y_3$};
  \node[anchor=south] at (4.75, 1.5) {\small $y_4$};
  \node[anchor=south] at (6.25, 1.5) {\small $y_5$};

  \fill[orange!40!yellow] (0,1.0-\hh) rectangle (4.0,1.0+\hh);
  \fill[blue!30!white]    (4.0,1.0-\hh) rectangle (5.5,1.0+\hh);
  \fill[red!35!white]     (5.5,1.0-\hh) rectangle (7.0,1.0+\hh);
  \draw[line width=2pt, black] (4.0,1.0-\hh) -- (4.0,1.0+\hh);
  \draw[line width=2pt, black] (5.5,1.0-\hh) -- (5.5,1.0+\hh);
  \node at (2.0, 1.0) {\footnotesize $x^*$};
  \node at (4.75, 1.0) {\footnotesize $x$};
  \node at (6.25, 1.0) {\footnotesize $x'$};

  \fill[red!35!white]     (0,0.0-\hh) rectangle (1.5,0.0+\hh);
  \fill[blue!30!white]    (1.5,0.0-\hh) rectangle (3.0,0.0+\hh);
  \fill[orange!40!yellow] (3.0,0.0-\hh) rectangle (7.0,0.0+\hh);
  \draw[line width=2pt, black] (1.5,0.0-\hh) -- (1.5,0.0+\hh);
  \draw[line width=2pt, black] (3.0,0.0-\hh) -- (3.0,0.0+\hh);
  \node at (0.75, 0.0) {\footnotesize $x'$};
  \node at (2.25, 0.0) {\footnotesize $x$};
  \node at (5.0, 0.0) {\footnotesize $x^*$};
\end{tikzpicture}
\caption{Greatest element among secrecy-preserving structures when $|\Omega|=2$}
\label{fig:secrecy-greatest}
\end{figure}
Figure \ref{fig:secrecy-greatest} illustrates the construction in Proposition \ref{prop:secrecy_greatest}. The argument is straightforward. By Proposition \ref{prop:secrecy-frontier}, it is enough to consider signal-based structures. When $|\Omega|=2$, at most one message can have total mass exceeding one across the two states. This message is $x^*$, if it exists. For every other message $x$, the total length required to place the $x$-regions across the two states is at most one, so these regions can be arranged without overlap. Any overlap for such a message is therefore avoidable and can be removed in a way that preserves the marginal distribution of each message while making the joint signal more informative. Hence, in a greatest secrecy-preserving structure, overlap can occur only for $x^*$ and the minimal unavoidable overlap for $x^*$ is
\begin{align*}
   \max\{f(x^*|\omega_1)+f(x^*|\omega_2)-1,0\}. 
\end{align*}
The representation $\psi^*$ attains exactly this overlap by placing the $x^*$-region for $\omega_1$ at one end of $T$ and the $x^*$-region for $\omega_2$ at the other, while arranging all other messages without overlap. Thus $h_{\psi^*}$ reveals the state whenever this is feasible and pools the two states only on the unavoidable overlap. It follows that $h_{\psi^*}$ Blackwell dominates every secrecy-preserving joint information structure, and hence is a greatest element.

This clean structure, however, does not extend to three or more states. In this case, even when $\sum_{\omega\in\Omega} f(x|\omega)\leq 1$ for some message $x$, a frontier signal-based structure may still assign overlapping $x$-regions across states. The following example illustrates this point.

Consider $\Omega=\{\omega_1,\omega_2,\omega_3\}$ and $X=\{x_1,x_2,x_3\}$, with baseline information structure
\[
f=
\begin{bmatrix}
1/3 & 2/3 & 0 \\
1/3 & 2/3 & 0 \\
1/3 & 1/3 & 1/3
\end{bmatrix}.
\]
Notice that $\sum_{\omega\in\Omega} f(x_1|\omega)=1$, so the $x_1$-regions could in principle be arranged without overlap across states. Partition $T=[0,1]$ into three intervals,
\[
Y_1=[0,1/3),\qquad
Y_2=[1/3,2/3),\qquad
Y_3=[2/3,1],
\]
and consider the signal representation $\psi:\Omega\times T\to X$ described by
\[
\begin{array}{c|ccc}
& Y_1 & Y_2 & Y_3\\
\hline
\omega_1 & x_1 & x_2 & x_2\\
\omega_2 & x_1 & x_2 & x_2\\
\omega_3 & x_2 & x_1 & x_3
\end{array}.
\]
The induced signal-based structure $h_\psi$ satisfies secrecy. Moreover, it lies on the Blackwell frontier. To see this, observe that $\omega_3$ is perfectly revealed. The information loss comes from the pooling of $\omega_1$ and $\omega_2$ on every interval. Any attempt to distinguish $\omega_1$ from $\omega_2$ requires rearranging their $x_2$-regions. But such a rearrangement necessarily creates new overlap with the $x_2$-region of $\omega_3$, thereby sacrificing information about $\omega_3$. Thus, no signal-based secrecy-preserving structure can strictly Blackwell dominate $h_\psi$.

Nevertheless, the $x_1$-regions overlap: both $\omega_1$ and $\omega_2$ assign $x_1$ on $Y_1$, even though $\sum_{\omega} f(x_1|\omega)=1$ would permit a non-overlapping assignment of $x_1$ across states. This shows that, with three or more states, the condition $\sum_{\omega} f(x|\omega)\leq 1$ does not imply that every frontier structure assigns the corresponding $x$-regions without overlap.

\bibliographystyle{econ}
\bibliography{references.bib}
\end{document}